\documentclass[10pt,journal,compsoc]{IEEEtran}

\ifCLASSOPTIONcompsoc
  \usepackage[nocompress]{cite}
\else
  \usepackage{cite}
\fi

\usepackage{amsmath,amssymb,amsfonts}

\usepackage{textcomp}
\usepackage{xcolor}
\usepackage{algorithm}
\usepackage{algorithmicx}
\usepackage[noend]{algpseudocode}
\usepackage{amsmath}
\usepackage{capt-of}
\usepackage{wrapfig}
\usepackage{multirow}
\usepackage{booktabs}
\usepackage{subfig}
\usepackage{url}
\usepackage{diagbox}
\usepackage{threeparttable}
\usepackage{blkarray}

\ifCLASSINFOpdf
  \usepackage[pdftex]{graphicx}
\else
  \usepackage[dvips]{graphicx}
\fi

\usepackage{adjustbox}

\newcolumntype{P}[1]{>{\centering\arraybackslash}p{#1}}

\newcommand{\Continue}{\State \textbf{continue}}

\DeclareMathOperator*{\argmin}{arg\,min}

\newtheorem{theorem}{Theorem}

\newtheorem{definition}{Definition}
\newtheorem{proof}{Proof}[section]

\hyphenation{op-tical net-works semi-conduc-tor}

\begin{document}

\title{PICO: Pipeline Inference Framework for Versatile CNNs on Diverse Mobile Devices}

\author{Xiang~Yang,
	Zikang~Xu,
	Qi~Qi,
	Jingyu~Wang,
	Haifeng~Sun,
	Jianxin~Liao,
	and~Song~Guo,~\IEEEmembership{Fellow,~IEEE}
	\IEEEcompsocitemizethanks{
		\IEEEcompsocthanksitem Xiang Yang, Zikang Xu, Qi Qi, Jingyu Wang, Haifeng Sun and Jianxin Liao are with the State Key Laboratory of Networking and Switching Technology, Beijing University of Posts and Telecommunications. 
		E-mail: \{yangxiang, xuzikang, qiqi8266, wangjingyu, hfsun, liaojx\}@bupt.edu.cn.  
		\IEEEcompsocthanksitem Song Guo is an IEEE Fellow (Computer Society) and an ACM Distinguished Member with the Department of Computing at The Hong Kong Polytechnic University.
		E-mail: cssongguo@comp.polyu.edu.hk
		\IEEEcompsocthanksitem Qi Qi and Jingyu Wang are the corresponding authors.}
}

\IEEEtitleabstractindextext{%
	\begin{abstract}
		Distributing the inference of convolutional neural network (CNN) to multiple mobile devices has been studied in recent years to achieve real-time inference without losing accuracy. However, how to map CNN to devices remains a challenge. On the one hand, scheduling the workload of state-of-the-art CNNs with multiple devices is NP-Hard because the structures of CNNs are directed acyclic graphs (DAG) rather than simple chains. On the other hand, distributing the inference workload suffers from expensive communication and unbalanced computation due to the wireless environment and heterogeneous devices.
		This paper presents PICO, a pipeline cooperation framework to accelerate the inference of versatile CNNs on diverse mobile devices. At its core, PICO features: (1) a generic graph partition algorithm that considers the characteristics of any given CNN and orchestrates it into a list of model pieces with suitable granularity, and (2) a many-to-many mapping algorithm that produces the best pipeline configuration for heterogeneous devices.  In our experiment with $2 \sim 8$ Raspberry-Pi devices, the throughput can be improved by $1.8 \sim 6.8 \times$ under different CPU frequencies.
		
	\end{abstract}

	\begin{IEEEkeywords}
		Mobile Computing, Pipeline Inference, Model Deployment.
	\end{IEEEkeywords}}

\maketitle

\IEEEdisplaynontitleabstractindextext

\IEEEpeerreviewmaketitle

\IEEEraisesectionheading{\section{Introduction}\label{sec:introduction}}

\begin{figure*}
	\centering
	\includegraphics[width=0.85\linewidth]{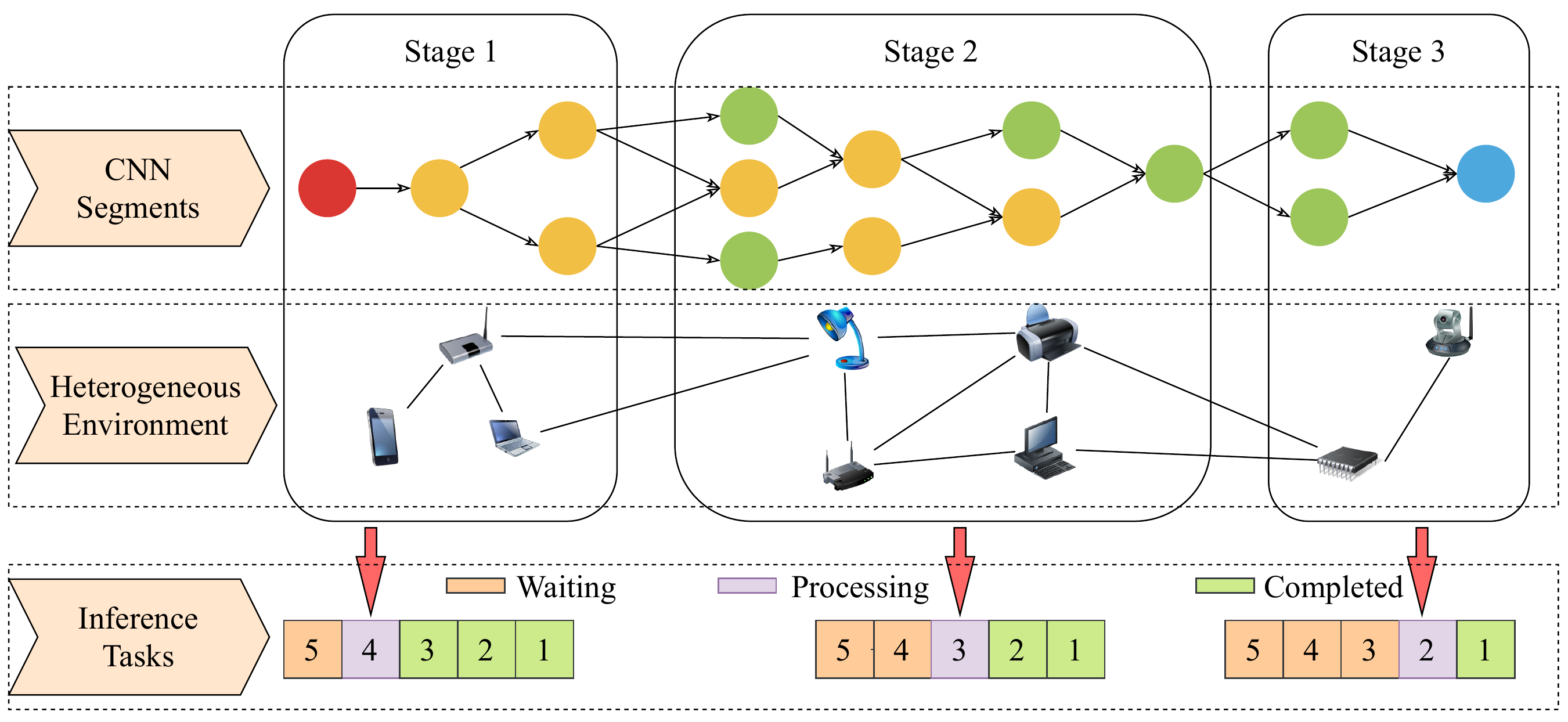}
	\caption{A diagrammatic sketch of pipeline inference.}
	\label{fg:pipeline-overview}
\end{figure*}

\IEEEPARstart{R}{ecent} years witness an explosive growth of mobile devices. The huge number of mobile devices provides a large volume of data (images, videos, etc.). Meanwhile, versatile convolution neural networks (CNN) with pre-trained parameters become powerful tools to make intelligent decisions using these raw data (\textit{CNN inference}). 
Embedding CNN with mobile devices enables many intelligent applications to become reality, such as smart home, intelligent factory, and even automatic driving \cite{kour2014real,kour2014fast}.

One obstacle to the embedding is the resource-limited mobile devices. Compared with datacenter, the computing capability of mobile devices is not enough to perform CNN inference alone. But on the contrary, the current wireless network is not prepared for transmitting the massive volume of raw data collected by these mobile devices. 
For example, an autopilot camera could capture more than 700 MB video record every second \cite{zhou2019edge}, and uploading all the video data to the datacenter will bring significant network latency. Moreover, uploading data from user devices to the cloud always brings concern about privacy \cite{zhou2019edge}.

Benefitted from the spatial independence of convolution operation, the input and output (\textit{feature}) of convolutional (\textit{conv}) layers can be split into several small tiles and executed on different devices. 
As a consequence, cooperative CNN inference on multiple mobile devices gains the attention of researchers recently \cite{mao2017modnn,zhao2018deepthings,zhou2019adaptive}. 
During inference procedure, the data source (camera, sensors, etc.) captures raw data and splits it into tiles. These tiles are distributed to multiple nearby mobile devices through a wireless network and executed independently using one or several layers. Then the data source is responsible to gather all the output tiles and stitch them to obtain the result. The procedure will be iterated multiple times until all layers are executed.
Cooperative inference also protects user privacy since all the data stay in local. Moreover, the closer to the data source, the lower network latency it suffers.
Compared with other strategies such as model compression and parameter pruning \cite{li2019oicsr,he2019filter,howard2019searching}, cooperative inference neither losses the inference accuracy nor requires re-train the model.

However, despite all these benefits, there still leave some challenges that are not completely solved in previous works.
Although the input feature can be parallel executed, (1) \textbf{the parallelism introduces redundant calculation} due to the property of CNN. 
The scalar in the output feature of one conv layer is calculated through a dot product with the conv kernel and a subregion of input feature. For most cases, the kernel size is bigger than $1 \times 1$, so that the input tiles of partitioned input feature will overlap with each other to guarantee the scalars at the edge of output tiles are correct.
Moreover, the overlapped part will increase recursively when devices execute multiple layers during one iteration in the inference procedure, but the communication is expensive for mobile devices. As a consequence, the executed layers need to be carefully chosen.
However, (2) \textbf{the structures of many CNNs are directed acyclic graphs} (DAG) rather than chains. ResNet34 \cite{resnet} uses skip-connection technology that allows a layer to directly connects to another deeper layer. The structure of InceptionV3 \cite{szegedy2017inception} contains multiple branches to capture more information from the input feature. These complex structures lead to a huge number of possible choices. Previous works mainly focused on the chain structure \cite{mao2017modnn,zhao2018deepthings,zhou2019adaptive}, which is much easier than DAG. 
Compared with datacenter, (3) \textbf{the computing resources of mobile devices are diverse}, the heterogeneous environment also hinders the optimization for cooperative inference.

In this paper, we explore previous works about parallelizing the CNN inference and propose a pipeline cooperation (PICO) framework for accelerating the inference on diverse mobile devices. Fig. \ref{fg:pipeline-overview} plots a diagrammatic sketch of our framework. PICO divides the entire CNN graph and mobile devices into $3$ \textit{stages}. These stages compose an efficient inference pipeline. Since each stage owns a small segment of original CNN and a subset of mobile devices, both communication overhead and the redundant calculation can be significantly reduced. There are two import metrics for pipeline: \textit{latency} and \textit{period}. The first term is the sum of inference latencies of all stages and the last term is the longest latency among stages. Obviously decreasing the period tends to increase the latency. Our optimization goal is to minimize the redundant calculation and period (maximize throughput) meanwhile to keep the latency of the pipeline under a certain value.

We first formulate the pipeline inference, then we analyze the complexity of the optimization problem and find that it is NP-Hard to directly obtain the optimal result. Based on our analysis, PICO uses a two-step optimization to maximize the throughput. In the first step, we orchestrate the CNN graph into a sequence of \textit{pieces}. These pieces have the minimum redundant calculation inside and compose the original CNN graph in a chain structure. Then we choose the best partition for these pieces and devices to construct the inference pipeline. The algorithms used in the above procedures are based on dynamic programming.

In our experiment we use $2 \sim 8$ Raspberry-Pi devices to evaluate PICO framework. The throughput can be improved by $1.8 \sim 6.8 \times$ under different CPU frequencies and number of devices.

In a nutshell, we make the following contributions:
\begin{itemize}
	\item We present a pipeline cooperation (PICO) framework to accelerate CNN inference with diverse mobile devices.
	\item We propose an algorithm to split the complex CNN graph structure into more fine-grained pieces.
	\item We propose an algorithm to decide the optimal stage settings for inference pipeline which maximize the throughput.
	\item We apply our technique on a cluster consisting of Raspberry-Pi-based hardware and evaluate image recognition and object detection CNN models.
\end{itemize}

The rest of this paper is organized as follows: Section \ref{sec:background} provides background information of CNN and different parallelization strategies in mobile devices. Section \ref{sec:model} formulates the inference process and gives a cost model for further optimization. Section \ref{sec:graph} and \ref{sec:main-algo} describe our approach to find near-optimal parallelization. Section \ref{sec:experiment} presents the results of our evaluation. Section \ref{sec:related} details the related work and Section \ref{sec:conclusion} concludes.

\section{Background And Motivation} \label{sec:background}
In this section, we briefly introduce the CNN inference and the current parallel schemes. Then we propose our pipeline cooperation scheme.

\subsection{Procedure of CNN Inference}

The convolution layer (\textit{conv}) is the key module during CNN inference, Each conv layer owns a set of kernels. To produces the output feature, conv layers use their kernels to slide over the input feature received from the previous layer. 
Every movement of the kernel will produce a scalar in the output feature by a dot product between the weights of kernels and a small subregion of the input.
The pooling layer (\textit{pool}) performs a down-sampling operation. It is used to progressively reduce the number of parameters, memory footprint and amount of computation in the network.

Conv operation is the biggest bottleneck. Fig. \ref{fg:layer-overhead} plots the computation and communication percentage by layer for two classic CNNs (VGG16 \cite{simonyan2014very}, YOLOv2 \cite{redmon2017yolo9000}). 
From the figure we can find conv layers dominate the consumption of computing resources. The conv operations occupy $99.19\%$ of the computation in VGG16 and $99.59\%$ in YOLOv2. How to efficiently execute conv operations is the key to accelerating CNN inference.
Another finding is the variation. 
Since different conv layers have different configurations (kernel size, padding, in and out channels), the communication or computation percentage also varies. 

\begin{figure}
	\centering
	\subfloat[VGG16]{ \label{fg:vgg16-overhead}
		\includegraphics[width=0.8\linewidth]{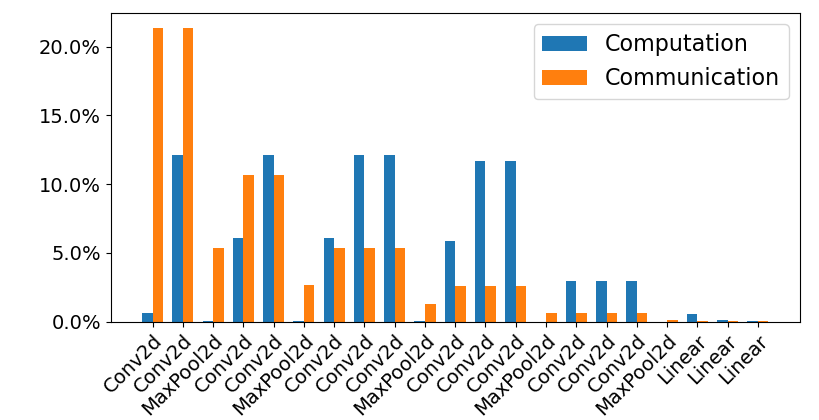}
	}
	
	\subfloat[YOLOv2]{ \label{fg:yolov2-overhead}
		\includegraphics[width=0.8\linewidth]{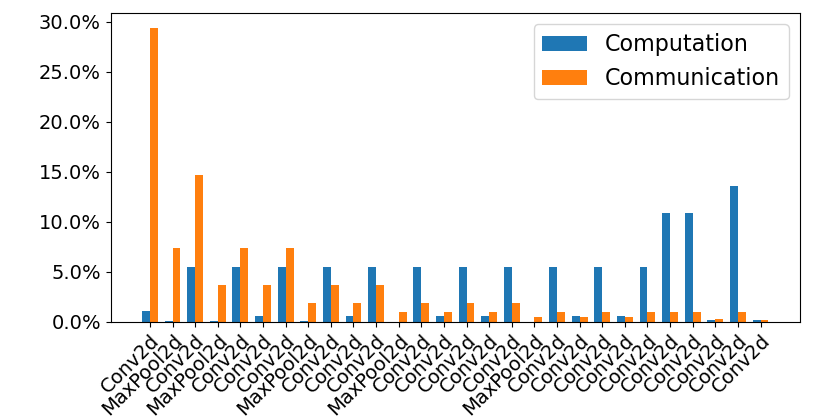}
	}
	\caption{The communication and computation percentages of each layer.}
	\label{fg:layer-overhead}
\end{figure}

\subsection{Parallelizing CNNs With Mobile Devices}
Benefitted from the spatial independence of conv operations, the inference can be parallel executed by splitting the input feature into multiple tiles and distributing them to different mobile devices, as shown in Fig. \ref{fg:feature-split}. We refer this technology as \textit{feature partition}.
However, the partition of input feature overlaps with each other due to the property of conv operations.
In Fig. \ref{fg:feature-split}, an input feature is split into four tiles and distributed to four devices. Assume the corresponding conv layer has a $3 \times 3$ kernel size,
to obtain the correct value in $P_1$, the calculation with $3 \times 3$ kernel has to use more proportion (the edges of the yellow and pink region) of the input feature. This property leads to a \textit{redundant calculation} and increases the difficulty of the design of parallel algorithm.

We next introduce the two parallelization schemes used in this paper. \cite{mao2017modnn} is the first work that uses feature partition for cooperative CNN inference. 
For each layer, the basic idea is to split the input feature into tiles and distribute them to all devices, then gather them to obtain the output of this layer. We refer such a scheme as \textit{layer-wise} parallelization.
In a WLAN network, it can cause substantial network latency. The gain of layer-wise parallelization is significantly defeated by communication overhead. 
To reduce the communication among devices, \textit{fused-layer} parallelization was introduced in \cite{zhao2018deepthings} and \cite{zhou2019adaptive}. This scheme fuses multiple layers instead of distributing the computation of every layer individually. Thus, mobile devices can execute the calculation of multiple layers without communication. 
But since the input will go through multiple layers, to obtain the correct value of output feature, the overlapped part of the input increases recursively.
In addition, all mobile devices need a full copy of original CNN for the two schemes, which increases the memory footprint.

\subsection{The Structures Of CNNs}

\begin{figure*}
	\begin{minipage}[c]{0.58\textwidth}
		\subfloat[Chain Structure (\textit{VGG16})]{ \label{fg:dnn-struct-chain}
			\includegraphics[width=0.98\linewidth]{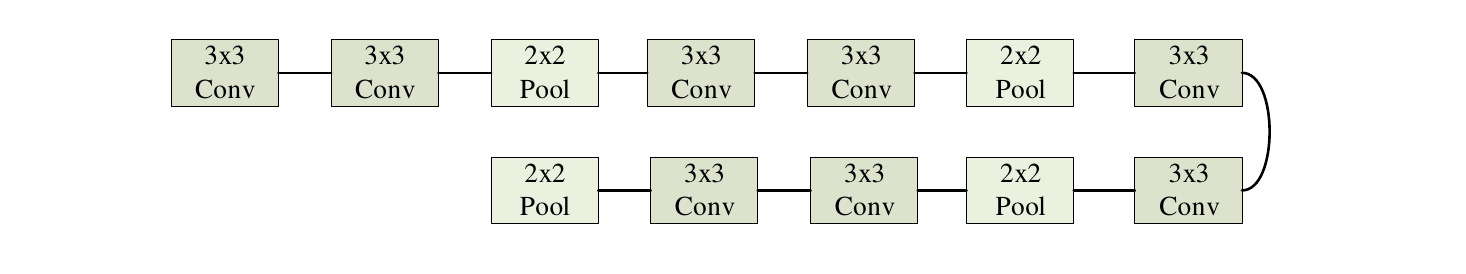}
		}
		
		\subfloat[Block Structure (\textit{InceptionV3})]{ \label{fg:dnn-struct-block}
			\includegraphics[width=0.98\linewidth]{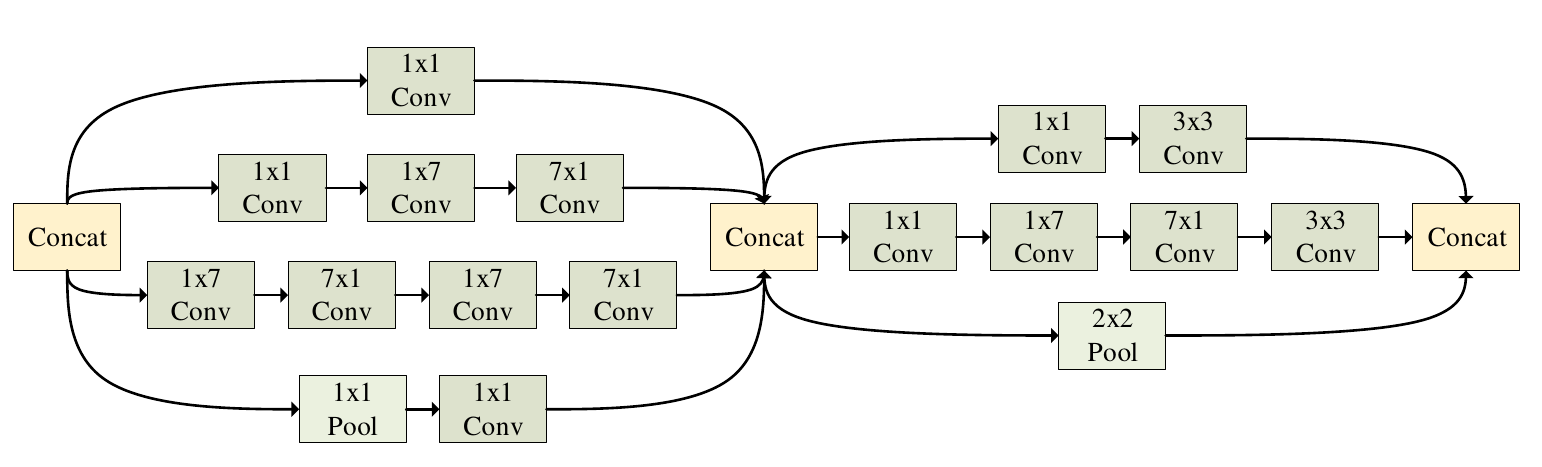}
		}
	\end{minipage}
	\begin{minipage}[c]{0.38\textwidth}
		\subfloat[Graph Structure (\textit{NasNetMobile})]{ \label{fg:dnn-struct-graph}
			\includegraphics[width=0.98\linewidth]{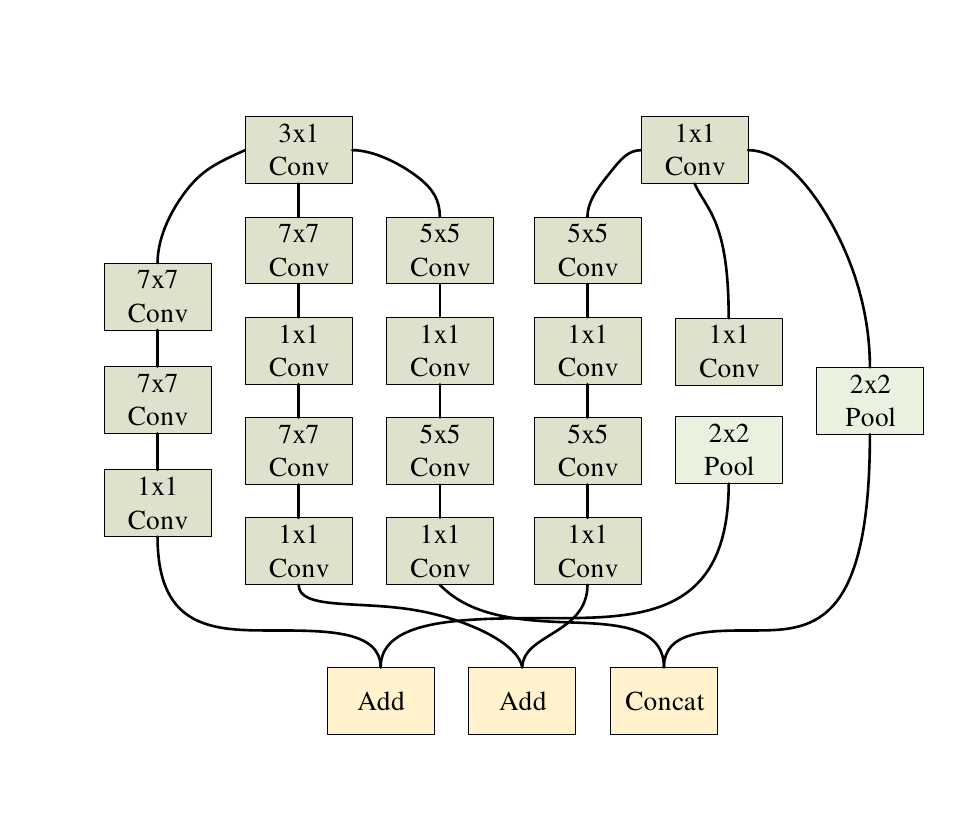}
		}
	\end{minipage}
	\caption{CNNs with different structures:
		The chain structure is the simplest one which just put the neural layers into a sequence.
		Block structure replaces the element in chain structure from neural layer to block, each block can be seen as a directed acyclic graph (DAG).
		Graph structure can not be partitioned into blocks, the entire structure is a huge DAG.
	}
	\label{fg:dnn-struct}
\end{figure*}

The structures of CNNs can be divided into three categories. We plot Fig. \ref{fg:dnn-struct} to give an illustration. Note the norm layer and activation layer are ignored since they do not change the input and output shape and has less proportion of computation.

The earlier model such as VGG16 \cite{simonyan2014very} and YOLOv2 \cite{redmon2017yolo9000} are built with the (1) \textbf{chain} structures. 
This structure is simple: neural layers inside the model are connected one by one, and the output of the previous layer is the input of the next layer. We plot the model structure of VGG16 in Fig. \ref{fg:dnn-struct-chain} for further explanation.

Later, the (2) \textbf{block} structure becomes popular in CNNs. Block structure enables CNNs to capture multiple features of input data to improve its performance using carefully designed blocks \cite{szegedy2017inception} and prevent the vanishing gradient problem when training deeper model \cite{resnet}. 
It uses blocks to replace the layers in chain structure. All the blocks are still connected one by one, but inside the block, neural layers can be represented as an acyclic directed graph (DAG). Fig. \ref{fg:dnn-struct-block} plots the 8th and 9th blocks in InceptionV3 \cite{szegedy2017inception}. Each block has multiple branches and contains several conv and pool layers, and these blocks are connected with the \text{Contact} operations that stacks the output of every sink layer of previous block in channel dimension and feeds the result to the next block.

To avoid manual design of the model structure, neural architecture search (NAS) is proposed. Compared with the previous two structures, the output structure of NAS is usually a complete graph, which can not be divided into sequence of blocks.
We refer the structure as (3) \textbf{graph} structure. We plot a partition of NasNetMobile \cite{cvpr18:nashnet} in Fig. \ref{fg:dnn-struct-graph}, which has two source layers and three sink layers. The complex structure of CNN models is a big challenge for optimizing parallel strategy.

\subsection{Motivation And Pipeline Inference}

\begin{figure}
	\centering
	\includegraphics[width=0.88\linewidth]{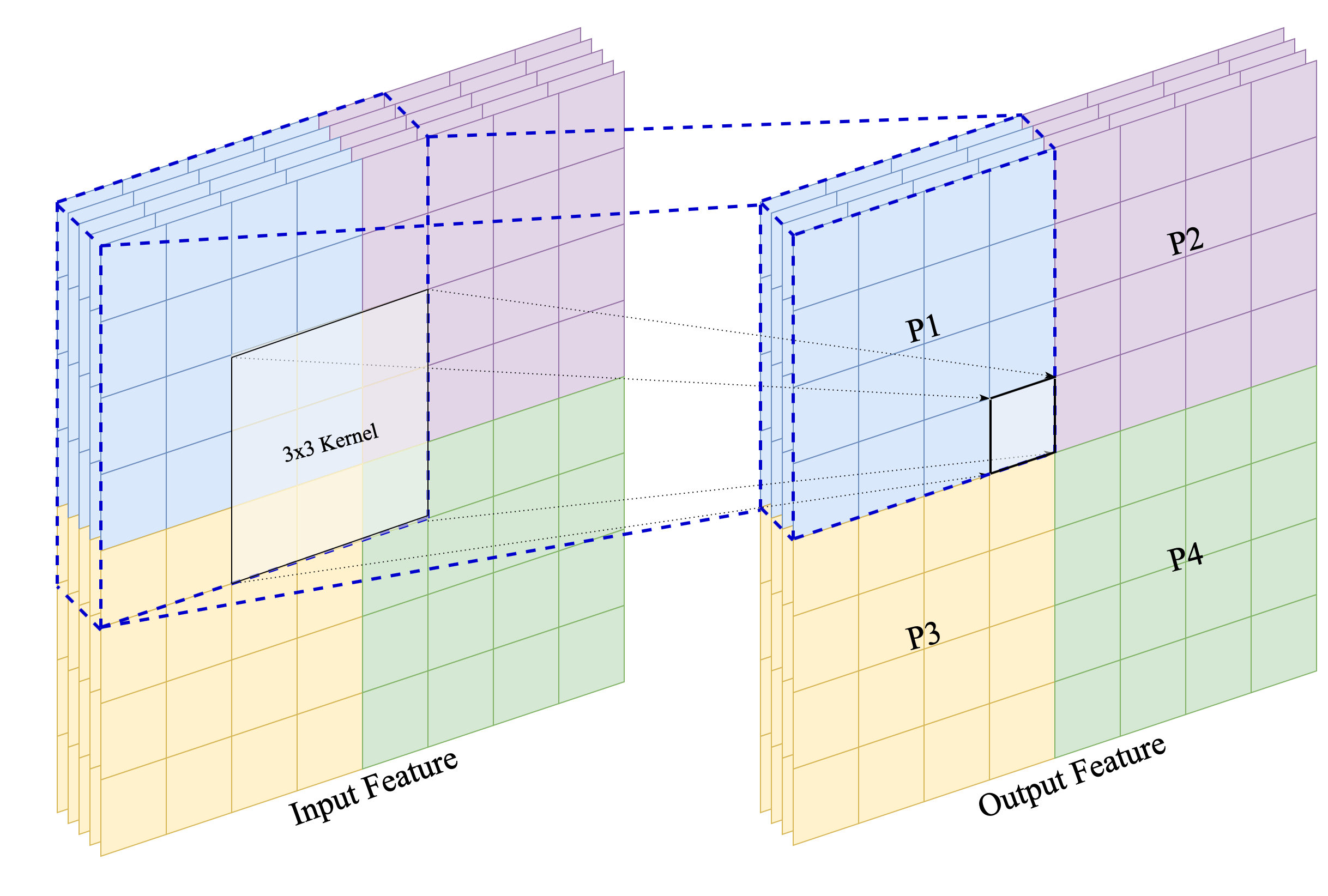}
	\caption{Feature map partition strategy introduces redundant calculation.}
	\label{fg:feature-split}
\end{figure}

\begin{figure}[tb]
	\centering
	
	\subfloat[Per device overhead]{\includegraphics[width=0.44\linewidth]{
			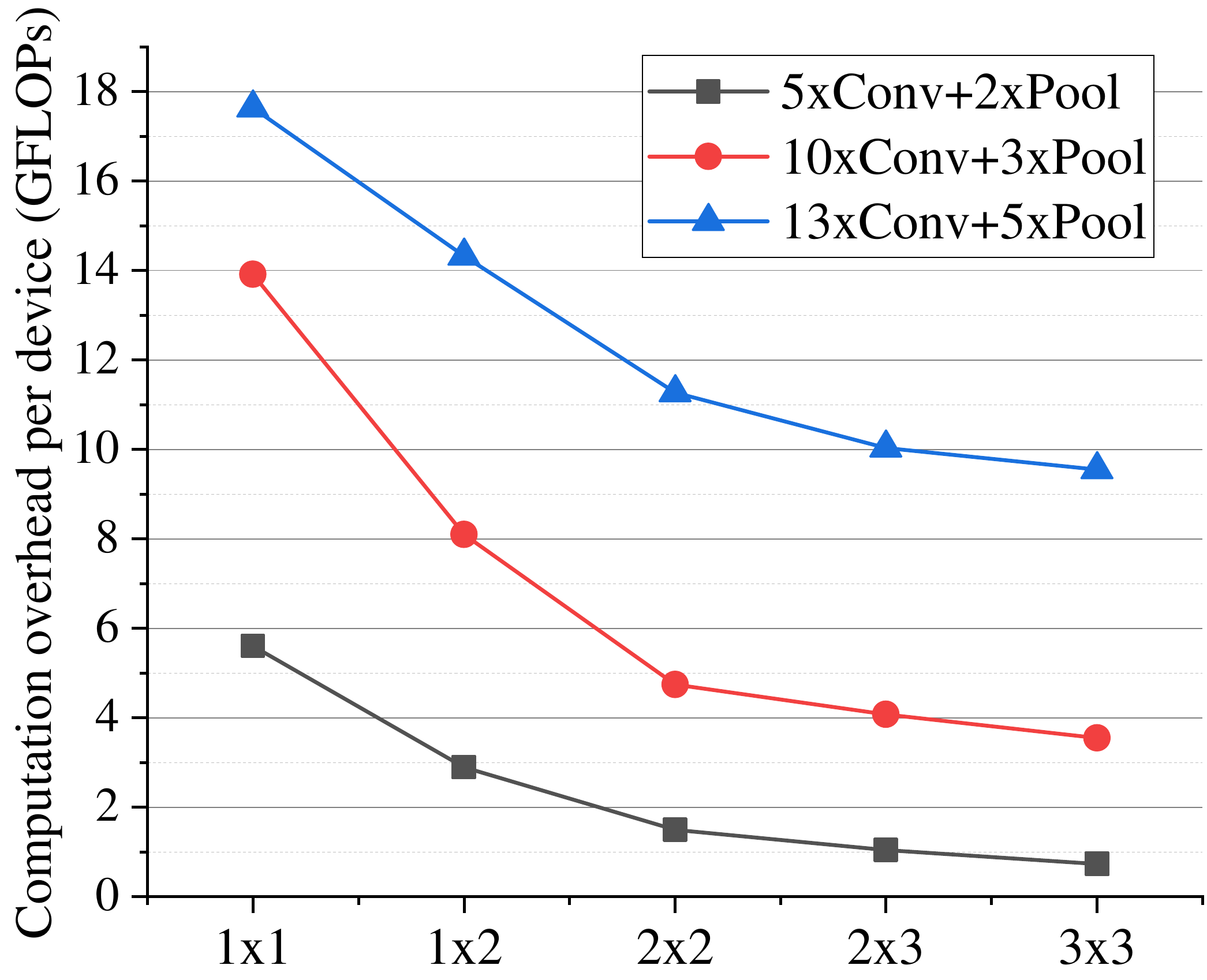}  \label{fg:computation-per}
	}
	\subfloat[Total computation overhead]{\includegraphics[width=0.46\linewidth]{
			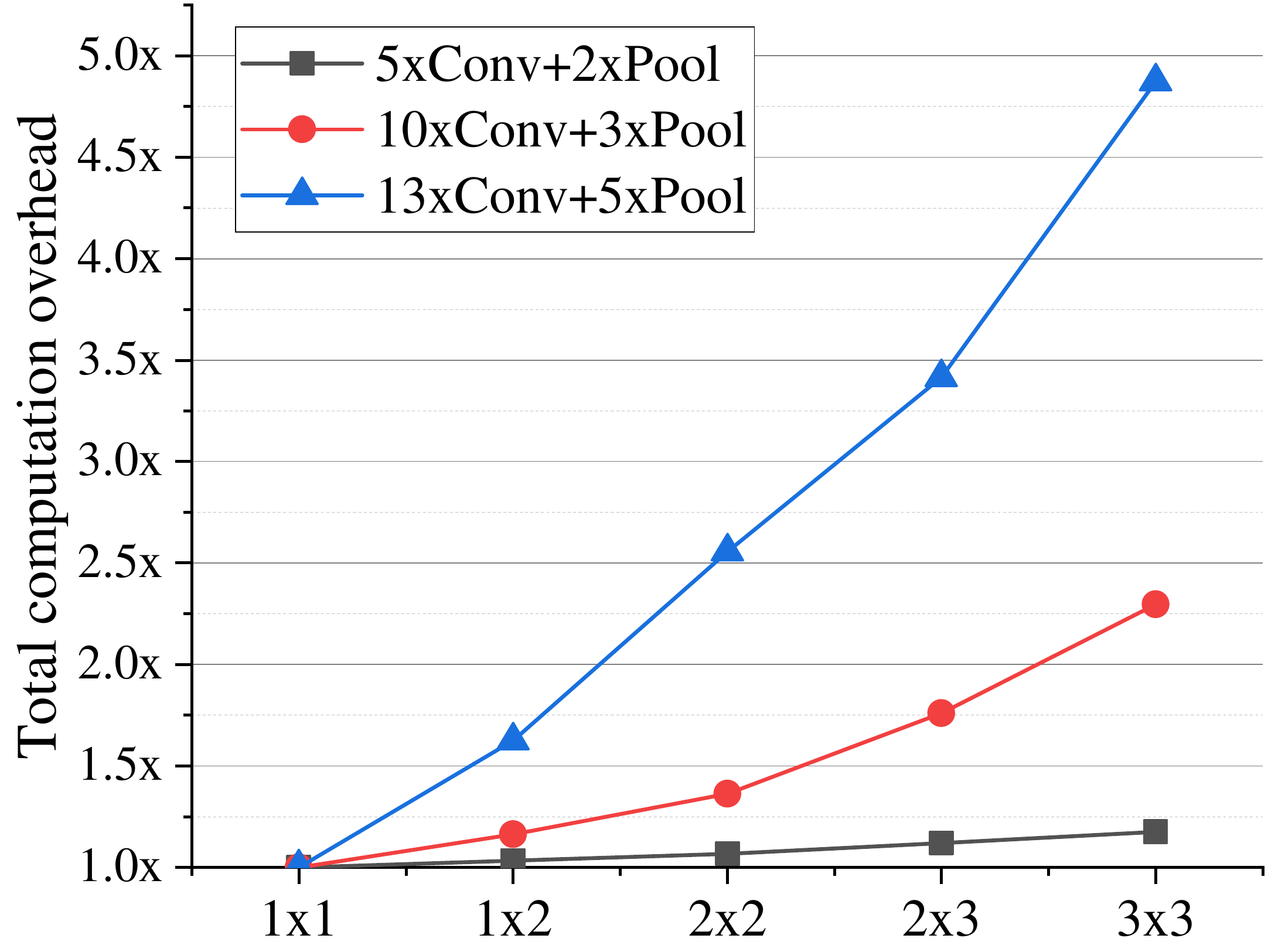} \label{fg:computation-total}
	}
	\caption{Computation overhead with different partition settings.}
	\label{fg:computation}
\end{figure}

\subsubsection{Motivation}
In the above discussion, How to tackle the complex graph structure of CNN model and how to reduce the redundant calculation are keys to accelerating the CNN inference.

\textbf{Tackle the complex structure:}
Cooperative inference needs to distribute the CNN model into multiple devices, but the structure of these model is complex and prevent more fine-grained optimization. Lots of previous works focus \cite{mao2017modnn,zhao2018deepthings,zhou2019adaptive} on cooperative inference, they only consider chain structure. There lacks a solution for the more complex block and graph structures.

\textbf{Reduce the redundant calculation:}
Communication is expensive in mobile environment. For layer-wise scheme, frequent communication among mobile devices causes inefficient performance.
The redundant calculation also limits the cooperation of mobile devices for CNN inference. 
For fused-layer scheme, the redundant calculation quickly grows as the number of fused layers or devices increases.
To give a detailed explanation, we evaluate the required floating-point operations (FLOPs) for VGG16 with different numbers of fused layers and mobile devices.
Fig. \ref{fg:computation-per} presents the FLOPs per device meanwhile Fig. \ref{fg:computation-total} shows the sum of FLOPs of all devices.
We can find that fused-layer strategy performs well at the start, but when the numbers of fused layers or devices increase, the redundant calculation quickly grows.

\subsubsection{Pipeline Inference}
From the above discussion, the acceleration of CNN inference faces challenges when the number of devices or fused layers grows.
For layer-wise scheme, the devices are idle in most time due to frequent communication and expensive network latency. 
On the contrary, devices can keep running with the fused-layer scheme, but it is whistling to the wind since the most computation is redundant.
Here we introduce the \textit{pipeline} scheme for parallelizing the CNN inference. 
This scheme divides both layers and mobile devices into several groups, as shown in Fig. \ref{fg:pipeline-overview}. We refer such a group as \emph{stage} in our description. 
The inference inside the stage uses fused-layer scheme and the entire CNN inference is performed stage by stage. If we set the number of stages to 1, The fused-layer scheme is a special case of the pipeline scheme.
To maximize the inference throughput, the inference latency of every stage should be optimized as close as possible.

Using pipeline for inference has several advantages. (1) First, the communication overhead can be reduced since the calculated features only need to be synchronized among a subset of devices. 
(2) Second, the proportion of redundant calculation also decreases due to smaller numbers of layers and devices. (3) Third, each device owns a segment of CNN instead of the entire model, which reduces the memory footprint.

The concept of pipeline is widely adopted in task scheduling \cite{benoit2008mapping,benoit2010complexity} which maps multiple processors to an application composed of several tasks.
However, pipeline meets difficulties when applied to CNN inference. The structure of CNN is a directed acyclic graph (DAG) rather than a chain, the mapping has to consider the data flow of DAG. 
Generally, the number of layers in CNN is more than the number of devices, thus the mapping is many-to-many, and different mapping strategy also changes redundant calculation.
Moreover, the heterogeneous environment is also a big challenge.

\section{System Model} \label{sec:model}

\begin{table}
	\caption{Notation definitions}
	\centering
	\begin{tabular}{|l|l|}
		\toprule
		\textbf{Notation}                                 & \textbf{Description}                                        \\
		\midrule
		$\mathbb{G}:(\mathbb{V}, \mathbb{E})$             & CNN with graph structure.                                   \\
		$\mathbb{D}$                                      & A heterogeneous cluster with $D$ devices.                   \\
		$l_i$                                             & Layer $i$ in model $\mathbb{G}$.                            \\
		$w_i$, $h_i$                                      & The width and height of the output frame of $l_i$.          \\
		$k_i, p_i, s_i, c_i$                              & Kernel size, padding, stride, and channel of $l_i$.         \\
		$d_k$                                             & A device in cluster $\mathbb{D}$.                           \\
		$F_i^k$                                           & Input feature frame of layer $l_i$ for device $d_k$.        \\
		$\mathcal{F}^k$                                   & Set of all input feature for layers assigned to $d_k$.      \\
		$\mathcal{F}_{in}^k$                              & Input feature for source layers assigned to $d_k$.          \\
		$\mathcal{F}_{out}^k$                             & Output feature for sink layers assigned to $d_k$.           \\
		$b(d_h, d_k)$                                     & Bandwidth between device $d_h$ and device $d_k$.            \\
		$\mathcal{D}$                                     & A subset of devices.                                        \\
		$\mathcal{M}:(\mathcal{V}, \mathcal{E})$          & Model partition deployed on $d_k \in \mathcal{D}$.          \\
		$\mathcal{S}:(\mathcal{M}, \mathcal{D})$          & A stage that belongs to the inference pipeline.             \\
		$\mathcal{M}_E$                                   & Ending piece of CNN $\mathbb{G}$.                           \\
		$\varphi(\mathcal{F}_k)$                          & Input frame size of $F_k$.                                  \\
		$\theta(\mathcal{M};\mathcal{F}^k)$               & Required computing resources of $\mathcal{M}$.              \\
		$\vartheta(d_k)$                                  & Computing capacity of device $d_k$.                         \\
		$t_{comm}(d_k, \mathcal{F}^k)$                    & Communication time of device $d_k$.                         \\
		$t_{comp}(d_f,\!d_k,\!\mathcal{F}^k)$             & Computation time of device $d_k$.                           \\
		$T({\mathcal{S}})$                                & Time overhead for executing stage $\mathcal{S}$.            \\
		$T_{comm}(\mathcal{S})$                           & Communication time of stage $\mathcal{S}$.                  \\
		$T_{comp}(\mathcal{S})$                           & Computation time of stage $\mathcal{S}$.                    \\
		$T_{lim}$                                         & Inference latency limit for optimization.                   \\
		$\mathbb{S}$                                      & Pipeline configuration containing all stages $\mathcal{S}$. \\
		$\mathbb{S}^\star$                                & Optimal stage configuration.                                \\
		$\mathcal{T}(\mathbb{G}, \mathbb{D}, \mathbb{S})$ & Latency of the pipeline under configuration $\mathbb{S}$.   \\
		$\mathcal{P}(\mathbb{G}, \mathbb{D}, \mathbb{S})$ & Period of the pipeline under configuration $\mathbb{S}$.    \\
		\bottomrule
	\end{tabular}
	\label{tab:symbol}
\end{table}

In this section, we define our optimization problem for pipeline inference.

\subsection{Problem Define}
Generally speaking, our goal is to divide both CNN model with graph structure and mobile devices with heterogeneous computing resources into several stages properly, so that these stages could compose an inference pipeline that maximizes the throughput.

\subsubsection{CNN With Graph Structure}
We use an acyclic directed graph (DAG) $\mathbb{G}: <\mathbb{V}, \mathbb{E}>$ to represent a given CNN model. The  vertex set $\mathbb{V}$ contains all the neural layers and connector (e.g., Add and Contact in Fig. \ref{fg:dnn-struct}) $l_i \in \mathbb{V}$, and the elements $(l_i, l_j)$ in the edge set $\mathbb{E}$ denotes the data flow of CNN model $\mathbb{G}$. In particular, $(l_i, l_j) \in \mathcal{E}$ means the output of layer $l_i$ is the input of layer $l_j$.
Since the CNN model will be executed as an inference pipeline with multiple stages, the $\mathbb{G}$ also needs to be split into multiple parts. We refer these parts as \textit{segments}. A segment  $\mathcal{M} : <\mathcal{V}, \mathcal{E}>$ is a subset of original DAG $\mathbb{G}$, where $\mathcal{V} \subseteq \mathbb{V}$ and $\mathcal{E} \subseteq \mathbb{E}$. 

Note the segment \textit{is not} a regular smaller graph, since the edge set $\mathcal{E}$ contains some vertices that are not included by $\mathcal{V}$. Take Fig. \ref{fg:pipeline-overview} as an example, these segments on the top also contain edges that are connected with previous or next segments.
Here we give some definitions of segment to simply our following modeling:
\begin{definition}
	A subset $\mathcal{M} : <\mathcal{V}, \mathcal{E}>$ is a \textit{segment} of original graph $\mathbb{G}: <\mathbb{V}, \mathbb{E}>$ if for all $e:(u, v) \in \mathbb{E}$, once $u$ or $v$ belongs to $\mathcal{V}$, $e$ also belongs to $\mathcal{E}$.
\end{definition}
\begin{definition}
	For a segment $\mathcal{M} : <\mathcal{V}, \mathcal{E}>$ and an edge $(u, v) \in \mathcal{E}$, if $u \notin \mathcal{V}$, then $v$ is a \textit{source} vertex of $\mathcal{M}$.
\end{definition}
\begin{definition}
	For a segment $\mathcal{M} : <\mathcal{V}, \mathcal{E}>$ and an edge $(u, v) \in \mathcal{E}$, if $v \notin \mathcal{V}$, then $u$ is a \textit{sink} vertex of $\mathcal{M}$.
\end{definition}

\subsubsection{Optimization Goal}
Given a heterogeneous cluster $\mathbb{D}$, where $d_k \in \mathbb{D}$ is a computing device in the cluster.
We assume the computing capacity $\vartheta(d_k)$ of device $d_k$ are known. In our practice, the $\vartheta(d_k)$ denotes floating point operations per second (FLOPS).
We also assume the bandwidth between all mobile devices is the same and is known as $b$. This assumption covers most cases when these devices are under the same WLAN environment such as home and factory \cite{benoit2008mapping,zhou2019adaptive}.

For pipeline scheme, $\mathcal{D} \subseteq \mathbb{D} $ is a subset of heterogeneous devices. Each device $d_k \in \mathcal{D}$ owns a copy of model segment $\mathcal{M}$ but is assigned to produce different region  $\mathcal{F}^k$ of the output feature map of all the sink vertex in $\mathcal{M}$. We use $\mathcal{F}$ to present the set of all $\mathcal{F}^k$ in $\mathcal{D}$. A stage $\mathcal{S}$ can be represented as a tuple $(\mathcal{M},\mathcal{D}, \mathcal{F})$. Let $\mathbb{S}$ denote the set of stages composed by all the stages $\mathcal{S}$ we defined above, the optimization objective is to find such a $\mathbb{S}^\star$ that satisfies:
\begin{equation}
	\mathbb{S}^\star = \argmin_{\mathcal{T}(\mathbb{G}, \mathbb{D}, \mathbb{S}) \leq T_{lim}}
	\mathcal{P}(\mathbb{G}, \mathbb{D}, \mathbb{S})
\end{equation}
where $\mathcal{T}(\mathbb{G}, \mathbb{D}, \mathbb{S})$ denotes the pipeline latency under specific stage configuration $\mathbb{S}$ and $\mathcal{P}(\mathbb{G}, \mathbb{D}, \mathbb{S})$ is the period of pipeline. $T_{lim}$ is a hyperparameter that indicates the maximum inference latency we can accept.

\subsection{Cost Model} \label{sec:latency_model}

Here we represent our cost model to guide the optimization. First, we quantify the essential input feature size for every device in a stage. Then, we formulate the inference latency of every stage. Finally, we get the inference period and latency of the entire pipeline using previous results.

\subsubsection{The Input Feature Size For Devices}
Every device $d_k$ owns a segment $\mathcal{M}:<\mathcal{V}, \mathcal{E}>$ and needs to produce correct output features $\mathcal{F}^k$. Once the $\mathcal{M}$ and $\mathcal{F}^k$ is given, we need to calculate the necessary input feature size for every layer $l_i \in \mathcal{M}$.
The calculation had been discussed in \cite{zhao2018deepthings}, but it only considered models with chain structure. We will extend it into a more complex graph structure here with a top-down algorithm. 

To calculate the input feature size of layer $l_i$, we need to find all the edges $(l_i, l_j)$ start from $l_i$. We can assume the input feature sizes of all $l_j$ is already calculated. Since the input of $l_j$ is just the output of $l_i$, the necessary output feature size of $l_i$ can be denoted as:
\begin{equation}
	{w}_{i} = \max {\{w_{i \to j} \}},
	\ {h}_{i} = \max { \{h_{i \to j}\} }.
\end{equation}
Here we use $w_i$ and $h_i$ to denote the necessary width and height of the output feature size of $l_i$, meanwhile, $w_{i \to j}$ and $h_{i \to j}$ is the input size of layer $l_j$.

Assume layer $l_i$ has $k^w_i \times k^h_i$ kernel size and $s_{i}$ stride size, once the output feature size is determined, the height $h_i$ and width $w_i$ of input feature can be calculated using the following equation:
\begin{equation} \label{eq:necessary-wihi}
	{w}_{* \to i} = ({w}_{i}-1)s_{i} + k_{i}^w, \ {h}_{* \to i} = ({h}_{i}-1)s_{i} + k_{i}^h
\end{equation}
where ${w}_{* \to i}$ and ${h}_{* \to i}$ is the input feature size for $l_i$. Note this formula suits for both conv and pool layers.

Since the output feature size of all sink vertices of $\mathcal{M}$ is given (corresponding to $\mathcal{F}^k$), we can iteratively calculate all the output and input feature size of all layers in $\mathcal{M}$ with a top-down algorithm. 
The input feature size of all the source vertices of $\mathcal{M}$ is the input feature size needed by device $d_k$. 

\subsubsection{Inference Cost Of Devices}
We use $f(l_i; F_i^k)$ to denote the required floating operations (FLOPs) of conv layer $l_i$ when generating an output feature map $F_i^k$ with size $c_i \times w_i \times h_i$.
Assume layer $l_i$ is a conv layer with $ c_i' \times k_i^w \times k_i^h$ kernel size, $c_i$ output channel and $s_i$ stride size. Since each floating scalar in the output feature is calculated by sliding the kernel over the input feature,
$f(l_i; F_i^k)$ can be given by:
\begin{equation}
	f(l_i; F_i^k) = {k_i^w} {k_i^h} c_i' w_i h_i  c_i.
\end{equation}
Here we ignore the pool layers since they require far fewer FLOPs than conv layers (In Fig. \ref{fg:layer-overhead}).

Note the $w_i$ and $h_i$ in Eq. \eqref{eq:necessary-wihi} denote the region of correct output feature. However, the $F_i^k$ is the actual output feature size, which contains not only the correct output but also some redundant parts at the margin of $F_i^k$. 
Assume the size of $F_i^k$ is known and $(l_i, l_j)$ is an edge in $\mathcal{M}$, the output $F_j^k$ of layer $l_j$ can be calculated by:
\begin{equation} \label{eq:to-next-wh}
	w_j = \frac{w_i + 2 p_j^w - k_j^w}{s_j^w} + 1 , \ h_j = \frac{h_i + 2 p_j^h - k_j^h}{s_j^h} + 1
\end{equation}
where $p_j^w$ and $p_j^h$ is the padding size of conv layer $l_j$. Since we have the input feature size for all source vertices in $\mathcal{M}$, to calculate the $F_i^k$ for all layer $l_i \in \mathcal{M}$, we use a bottom-up algorithm similar with above, omit here.

Assume a device $d_k$ is responsible to produce $\mathcal{F}^k$ with model segment $\mathcal{M}$, we can give the required FLOPs operation $\theta(\mathcal{M} ; \mathcal{F}^k)$ with:
\begin{equation}
	\theta(\mathcal{M} ; \mathcal{F}^k) = \sum_{l_j \in \mathcal{M}} {f(l_j; F_j^k)}.
\end{equation}
Empirical studies by \cite{deepslicing} have demonstrated that for specific layers and device, the computation time is proportional to the size of the input or output features, Therefore, the inference time $t_{comp}(d_k, F_k)$ for device $d_k$ can be estimated by the following equation:
\begin{equation}
	t_{comp}(d_k, \mathcal{F}^k)) = \alpha_k \frac{\theta(\mathcal{M} ; \mathcal{F}^k)}
	{\vartheta(d_k)}
\end{equation}
where 
$\vartheta(d_k)$ is the computing capacity (FLOPS) of device $d_k$. $\alpha_k$ is a coefficient computed by a regression model.

\subsubsection{The Period and Latency Of Pipeline}
As each device executes inference in parallel within stage, the computation time for stage $\mathcal{S}: <\mathcal{M}, \mathcal{D}>$ is determined by the maximum inference time among devices in $\mathcal{D}$:
\begin{equation}
	T_{comp}(\mathcal{S}) 
	= \max_{d_k \in \mathcal{D}}
	{t_{comp}(d_k, \mathcal{F}^k)}.
\end{equation}

Since each device $d_k \in \mathcal{D}$ will generate part of the calculation of stage $\mathcal{S}$, there exists a device $d_f$  which is responsible to distribute stage input and gather stage output from other devices. For a device $d_k \in \mathcal{S}$, the feature transferring time $t_{comm}( d_f, d_k, \mathcal{M})$ can be given by:
\begin{equation}
	t_{comm}( d_f, d_k, \mathcal{F}) = \frac{\varphi(\mathcal{F}^k_{in}) + \varphi(\mathcal{F}^k_{out})}{b(d_f, d_k)}
\end{equation}
where $\varphi(\mathcal{F})$ is the feature size on a given input feature sizes $\mathcal{F}$. Here we use $\mathcal{F}^k_{in}$ and $\mathcal{F}^k_{out}$ to denote the input and output feature sizes of $\mathcal{M}$ owned by $d_k$.
Sum the communication cost for each device $d_k$ in stage $\mathcal{S}$, we define
\begin{equation}
	T_{comm}(\mathcal{S}) 
	= \sum_{\substack{d_k \in \mathcal{D} \\ d_k \neq d_f }}
	{t_{comm}(d_f, d_k, \mathcal{F}^k)}
\end{equation}
as the communication cost of stage $\mathcal{S}$.

The cost function for each stage in pipeline inference is then defined as the total time of the frame transfer and layer computation: 
\begin{equation}
	T({\mathcal{S}}) = T_{comp}(\mathcal{S}) + T_{comm}(\mathcal{S}) \label{eq:stage-time}
\end{equation}
Note the time for feature map partition and stitch is not discussed here. In practice, it is far less than the layer computation time $T_{comm}(\mathcal{S})$ and could be ignored.

Next, we define the optimization objective as:
\begin{equation}
	\mathcal{P}(\mathbb{G}, \mathbb{D}, \mathbb{S})=\max_{\mathcal{S} \in \mathbb{S}}
	T({\mathcal{S}}), \ 
	\mathcal{T}(\mathbb{G}, \mathbb{D}, \mathbb{S})=\sum_{\mathcal{S} \in \mathbb{S}}
	T({\mathcal{S}})
\end{equation}
where $\mathcal{P}(\mathbb{G}, \mathbb{D}, \mathbb{S})$, $\mathcal{T}(\mathbb{G}, \mathbb{D}, \mathbb{S})$ estimate the maximum execution time of stages in and inference latency in pipeline.

\subsection{Analysis} \label{sec:np-analysis}
\begin{table}
	\centering
	\caption{Optimization Complexity} \label{tab:optim-complex}
	\begin{threeparttable}[tb]
		\centering
		\begin{tabular}{P{0.25\linewidth}
				P{0.15\linewidth}
				P{0.15\linewidth}
				P{0.15\linewidth}
			} 
			\toprule
			\diagbox{\textbf{Device}}{\textbf{Model}} & \textbf{Chain} & \textbf{Block} & \textbf{Graph} \\ \midrule
			\textbf{Homogeneous}                      & P              & NP \tnote{*}   & NP             \\
			\textbf{Heterogeneous}                    & NP             & NP             & NP             \\ \bottomrule
		\end{tabular}
		\begin{tablenotes}
			\footnotesize
			\item[*] \cite{zhou2019adaptive} solves the optimization by considering the entire block as a special layer. However, this operation introduces lots of unnecessary calculations during inference.
		\end{tablenotes}
	\end{threeparttable}
\end{table}

The goal of our optimization algorithm is finding the best stage set $\mathbb{S}^\star$ that minimizes the maximum period $\mathcal{P}(\mathbb{G}, \mathbb{D}, \mathbb{S})$ of pipeline with heterogeneous clusters. Such an optimization faces the following challenges:
\begin{itemize}
	\item The overhead of computation and communication of each layer in model varies and would be affected by the assigned feature map size $F_i^k$.
	\item The computing capacity $\vartheta(d_k)$ of every device in the heterogeneous cluster varies.
	\item For a specific stage $\mathcal{S}$, the number of devices $|\mathcal{D}|$ and the model segment $\mathcal{M}$ in stage also need to be configured.
	      
	\item The structure of CNN model can be complex and hard to be partitioned.
\end{itemize}

In fact, we show the optimal solution can not be found in polynomial time unless $P=NP$.

\begin{theorem} \label{th:np-hard-chain}
	Given a CNN model $\mathbb{G}$ with chain structure, the problem of minimizing maximum stage execution time $\mathcal{P}(\mathbb{G}, \mathbb{D}, \mathbb{S})$ with heterogeneous mobile devices under a constriction that $\mathcal{T}(\mathbb{G}, \mathbb{D}, \mathbb{S}) \leq T_{lim}$ is NP-hard.
\end{theorem}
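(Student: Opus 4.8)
The plan is to prove NP-hardness by a polynomial-time reduction from a classic partition problem, in the same spirit as the hardness results for mapping pipeline skeletons onto heterogeneous platforms in \cite{benoit2008mapping,benoit2010complexity}. I would reduce from \textsc{2-Partition} (a reduction from the strongly NP-complete \textsc{3-Partition} works as well and avoids any worry about the magnitude of the encoded numbers): given positive integers $a_1,\dots,a_m$ with $\sum_i a_i = 2B$, decide whether some index subset sums to $B$. Throughout I work with the decision version of Theorem~\ref{th:np-hard-chain}: given a bound $K$, is there a stage configuration $\mathbb{S}$ with $\mathcal{P}(\mathbb{G},\mathbb{D},\mathbb{S}) \le K$ and $\mathcal{T}(\mathbb{G},\mathbb{D},\mathbb{S}) \le T_{lim}$? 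Since only the hardness direction is needed, the whole content is the construction and its correctness.

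From a \textsc{2-Partition} instance I would build a cluster $\mathbb{D}$ of $m$ devices with computing capacities $\vartheta(d_i) = a_i$, together with a chain CNN $\mathbb{G}$ whose layers and hyperparameters are chosen so that the generic cost model of Section~\ref{sec:latency_model} collapses onto a pure load-balancing objective. Concretely, I use conv layers carrying a large, common FLOP count $W$; I inflate the input-channel counts and shrink the spatial resolution so that every communication term $t_{comm}$ is negligible compared with the computation term $\alpha_k\,\theta(\mathcal{M};\mathcal{F}^k)/\vartheta(d_k)$; and I choose the kernel sizes and $T_{lim}$ so that the only latency-feasible configurations are those that refuse to fuse layers --- that is, the pipeline is forced into stages each holding a single layer. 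Under these choices, a stage whose assigned device subset is $\mathcal{D}_j$ has time $T(\mathcal{S}_j)\approx \alpha W/\sum_{d_i\in\mathcal{D}_j}a_i$ after the optimal feature split, so the period is, up to negligible terms, $\alpha W / \min_j \sum_{d_i\in\mathcal{D}_j} a_i$ as the $\mathcal{D}_j$ range over all partitions of $\mathbb{D}$. Taking a chain of two layers and target $K=\alpha W/B$ then gives the equivalence: because the two device loads sum to $2B$, one can make $\min_j \sum_{d_i\in\mathcal{D}_j}a_i \ge B$ if and only if both loads equal $B$, i.e.\ if and only if the \textsc{2-Partition} instance is a yes-instance; conversely a balanced split yields a configuration of period $K$ that also respects $T_{lim}$ by construction. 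The reduction is clearly polynomial --- indeed linear --- in the input size, so Theorem~\ref{th:np-hard-chain} follows.

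The step I expect to be the real obstacle is calibrating the gadget so that the ``$\approx$'' above becomes a rigorous chain of inequalities. I must exhibit explicit hyperparameters, $T_{lim}$, and $K$ for which: (i) fusing two or more layers into one stage is provably latency-infeasible or provably suboptimal, pinning the number of stages --- here I would lean on exactly the phenomenon that motivates the paper, namely that the redundant computation introduced by feature partition compounds across fused layers while staying bounded within a single-layer stage; (ii) the summed communication cost $T_{comm}(\mathcal{S}_j)$, the regression coefficients $\alpha_k$, and the $+1$ and flooring effects in Eq.~\eqref{eq:necessary-wihi} and Eq.~\eqref{eq:to-next-wh} all remain strictly below the granularity gap between the period of a balanced and that of an unbalanced integer partition; and (iii) neither the many-to-many freedom, nor the freedom to choose the number of stages, nor leaving devices idle can beat the balanced single-layer-per-stage solution. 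Point (ii) is where the bookkeeping lives: it needs a quantitative lower bound on how much an unbalanced partition inflates $\max_j 1/\sum_{\mathcal{D}_j}a_i$ together with a matching upper bound on every neglected additive term, which I would close by taking $W$ sufficiently large relative to $B$ and to the feature and communication sizes.
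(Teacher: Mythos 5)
Your route is genuinely different from the paper's, and in its present form it has a gap at exactly the step you flag. The paper does not build a number-partitioning gadget at all: it reduces from the heterogeneous pipeline-mapping problem of assigning a chain of identical, freely parallelizable tasks to heterogeneous processors, already proven NP-hard in \cite{benoit2010complexity}, and instantiates it as a chain CNN with identical layers and $1\times 1$ kernels so that no overlap/redundancy arises and the CNN problem literally contains that scheduling problem; hardness then transfers in two lines, with no calibration of communication, rounding, or fusing needed. Your plan instead reduces from \textsc{2-Partition} and must itself prove that the optimal configuration has the one-layer-per-stage, balanced structure --- and that is the part that is not yet a proof.

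Concretely, the unresolved obstruction is the fully fused single-stage configuration. In the overlap-free regime where your clean formula $T(\mathcal{S}_j)\approx \alpha W/\sum_{d_i\in\mathcal{D}_j}a_i$ is valid (e.g., $1\times 1$ kernels), fusing both layers onto \emph{all} devices gives period $\approx \alpha\cdot 2W/(2B)=\alpha W/B=K$ for every instance, so the decision question you pose is always answered YES and the reduction collapses. The latency constraint cannot rescue this: latency is the \emph{sum} of stage times, so the fused configuration has latency about $\alpha W/B$ versus $2\alpha W/B$ for your two-stage solution, and any $T_{lim}$ admitting the latter admits the former; your ``latency-infeasible'' branch of point (i) therefore cannot work, leaving only the ``suboptimal via redundancy'' branch with kernels larger than $1\times 1$. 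But then the whole burden shifts to the quantitative bookkeeping of point (ii), which is only promised, not done: you must exhibit hyperparameters for which the fused redundancy penalty exceeds the slack you must grant $K$ for integer pixel splits and for $T_{comm}$, while in the NO case the granularity gap (of order $\alpha W/(B(B-1))$ for \textsc{2-Partition}) still dominates all neglected terms --- and since $B$ may be exponential in the encoding length, the required spatial sizes, kernel sizes and channel counts need explicit polynomial-size choices (or the switch to \textsc{3-Partition} you mention, which changes the gadget to $m/3$ stages and needs its own verification). Until these inequalities are written down, the claimed equivalence ``period $\le K$ iff yes-instance'' is not established, whereas the paper's reduction sidesteps every one of these issues by inheriting hardness from a problem in which all mappings, stage counts and idle processors are already accounted for.
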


\begin{proof}
	Considering a scheduling problem defined as follows: Given $L$ identical tasks that are needed to be executed one by one. All tasks can be paralleled to several processors without additional overhead. The goal is to assign these tasks to $D$ heterogeneous devices and maximize the throughput. This problem is proven to be NP-hard by \cite{benoit2010complexity}. 
	We can construct a CNN model with chain structure whose layers are identical and the kernel size of each layer is $1 \times 1$. This kernel size guarantees there is no overlapped partition when parallels the inference. If there exists a polynomial solution for this CNN model, obviously it can also be applied to the above task assignment problem. Thus, the optimization of $\mathcal{P}(\mathbb{G}, \mathbb{D}, \mathbb{S})$ is NP-hard. Here complete the proof.
\end{proof}

\begin{theorem} \label{th:np-hard-graph}
	Given a CNN model $\mathbb{G}$ with graph structure, the problem of minimizing maximum stage execution time $\mathcal{P}(\mathbb{G}, \mathbb{D}, \mathbb{S})$ with homogeneous mobile devices under a constriction that $\mathcal{T}(\mathbb{G}, \mathbb{D}, \mathbb{S}) \leq T_{lim}$ is NP-hard.
\end{theorem}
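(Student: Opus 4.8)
The plan is to establish Theorem~\ref{th:np-hard-graph} by a polynomial-time reduction from a partition-type number problem: ordinary \textsc{Partition} already suffices for NP-hardness, and \textsc{3-Partition} can be substituted if strong NP-hardness is wanted. The first observation driving the construction is that, since the devices are now identical, the hardness cannot come from the processor side as it did in Theorem~\ref{th:np-hard-chain}; it must be produced entirely by the graph structure together with the redundant-computation phenomenon. Consequently the gadget must use conv layers with kernels larger than $1\times1$: with $1\times1$ kernels the top-down recursion of Section~\ref{sec:latency_model} creates no overlap, a single all-device stage is (weakly) period-optimal, and the instance would be trivial.

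Concretely, given numbers $a_1,\dots,a_n$, I would build $\mathbb{G}$ as a \emph{fork--join} DAG: one source layer, $n$ parallel branches $B_1,\dots,B_n$, and one concatenation (join) sink layer. Branch $B_i$ is a short conv chain whose kernel sizes, strides and channel widths are tuned so that it carries a negligible ``base'' FLOP count, yet so that whenever a stage segment containing $B_i$ is tiled across that stage's devices, the recursively growing overlapped input region (Eq.~\eqref{eq:necessary-wihi}) forces an amount of redundant computation proportional to $a_i$. The device count $D$ and the latency bound $T_{lim}$ are then chosen so that: $T_{lim}$ admits a pipeline with exactly as many ``middle'' stages as the target number of groups but no more; each middle stage is allotted the same fixed block of devices; and the period target equals the per-stage cost attained precisely when the branch weights inside a middle stage sum to the target value (for \textsc{3-Partition}, the usual $B/4 < a_i < B/2$ range trick additionally forces three branches per middle stage).

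With the parameters in place, the two directions are routine in outline. From a yes-instance, group the branches as prescribed, put each group (together with its slice of the topological order) in its own stage, and split each stage's feature map evenly over its device block; one then checks that the period and latency bounds hold. Conversely, any configuration meeting both bounds must cut $\mathbb{G}$ essentially along branch boundaries — the fork--join shape makes every other antichain cut either create too many stages (violating $T_{lim}$) or overload some stage with branch weight exceeding the period target — so reading off the per-stage branch totals yields a feasible partition. Polynomiality of the reduction follows once one checks that the chosen kernel sizes, channel counts and feature dimensions have encoding length polynomial in that of the number instance.

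The step I expect to be the main obstacle is the redundant-computation accounting on the DAG. The cost model makes a stage's total FLOPs a genuinely nontrivial function of which layers the stage contains and how finely its output is tiled, so the construction must be calibrated so that this function collapses cleanly to ``base cost $+$ constant $\times$ (sum of the $a_i$ in the stage)'' and, at the same time, so that the degenerate strategies — one giant stage, lopsided device allocations across stages, or hiding a branch in the source or sink stage — are provably infeasible or dominated under the chosen $D$ and $T_{lim}$. Making all of these inequalities hold simultaneously with integer parameters of polynomial size is the delicate part; everything else is bookkeeping.
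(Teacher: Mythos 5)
Your route is genuinely different from the paper's, and the difference matters. The paper proves Theorem~\ref{th:np-hard-graph} with a two-line reduction from the \emph{most balanced $st$-edge cut} problem (MBSTC): with homogeneous devices and exactly two stages, choosing where to cut a graph-structured $\mathbb{G}$ so as to minimize the heavier side is already MBSTC, which is known to be NP-hard, so no numeric gadget, no redundancy mechanism, and no calibration of $T_{lim}$ or of device counts is required --- the hardness comes purely from the topology of the DAG. Your plan instead imports hardness from \textsc{Partition}/\textsc{3-Partition} via a fork--join gadget whose branch costs encode the $a_i$, and uses the overlap recursion of Eq.~\eqref{eq:necessary-wihi} plus $T_{lim}$ to force the intended stage structure. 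If it were completed it would buy something the paper's argument does not (hardness already for fork--join DAGs, and strong NP-hardness via \textsc{3-Partition}), but it is considerably more machinery than the statement needs.

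The genuine gap is that the soundness direction is deferred exactly where the difficulty lies, and it is not clear it goes through. With homogeneous devices the workload is essentially divisible: a configuration that groups branches unevenly can compensate by assigning that stage more devices, and the degenerate single stage spanning all $D$ devices achieves period roughly (total work)$/D$, which matches your intended balanced target unless the redundancy and communication penalties are large enough to exclude it --- yet those same penalties must stay small enough that the intended balanced configuration still meets the period and latency targets. So you need simultaneous upper and lower bounds on the redundant-FLOP function (which grows with both the number of fused layers and the tile count per stage) together with integrality arguments ruling out compensating device reallocations, all realized by kernel sizes, channel counts and feature dimensions of polynomial encoding length. Your proposal names this calibration as the delicate part but supplies neither the parameter choices nor the inequalities, so as written it is a plausible program rather than a proof; the paper's MBSTC reduction shows the theorem can be obtained while sidestepping this entire issue.
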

\begin{proof}
	We begin with introducing the problem of most balanced $st$-edge cuts (MBSTC). Given a graph $G$, founding an edge cut $[G, \bar{G}]$, which minimize $\max \left\{|G|,|\bar{G}|\right\} $ is NP-Hard \cite{stcuts2010}. Obviously, MBSTC is a special case of our optimization when the number of stages is $2$. Thus, the problem is NP-Hard.
\end{proof}

Given a heterogeneous edge environment, Theorem \ref{th:np-hard-chain} shows find the optimal solution for CNNs with chain structure is NP-Hard. Theorem \ref{th:np-hard-graph} shows that for CNNs with graph structure, even homogeneous edge environment is NP-Hard.

We summarize the result in Table \ref{tab:optim-complex}, almost every situation is NP-Hard for optimization except chain structure model with homogeneous devices.

\section{Orchestrate The Model Structure} \label{sec:graph}
In this section, we introduce our strategy to orchestrate the complex block and graph structures.

\subsection{Insight}

Ideally, we hope to directly divide CNN model $\mathbb{G}$ and mobile devices $\mathbb{D}$ into several stages $\mathbb{S}$. 
However, we can find there is no polynomial solution for $\mathbb{G}$ with block and graph structures from Table \ref{tab:optim-complex}, neither with homogeneous nor heterogeneous environment. The only feasible situation to find the optimal strategy $\mathbb{S}$ is that the structure of model is a chain.

For block structure, a simple trade-off is to consider every block as a special layer. So that it could be optimized in polynomial time. However, this scheme introduces lots of redundant calculation inside blocks and can not be applied on these models with graph structures.
Fig. \ref{fg:graph-stacks} shows an extreme case of such a scheme. Considering a block with only two conv layers ($l_a$, $l_b$). The kernel size of $l_a$ is $1 \times 7$ but the kernel size of $l_b$ is $7 \times 1$. 
For layer $l_a$, according to Eq. \eqref{eq:necessary-wihi} there is no redundant calculation on its width dimension since $k^w_a = 1$ (assuming the stride size is $1$). 
Similarly, there is no redundant calculation on its length dimension for layer $l_b$. However, if we regard the block as a special whole layer, it will have redundant calculation on both width and length, as shown in Fig. \ref{fg:arrange-before}.
But the block can be divided into two sequential \textit{pieces}, one piece contains the input vertex and layer $l_a$, the other piece contains layer $l_b$ and output vertex as shown in Fig. \ref{fg:arrange-after}. After this operation, there is no more redundant calculations inside the two pieces.

Here comes the insight. Given a CNN model $\mathbb{G}$, the goal is to transform it into a sequence of pieces. Since there may be some redundant calculation inside pieces, we need to minimize the redundancy inside every piece.
After this operation, each piece can be regarded as a layer of original $\mathbb{G}$. Since these pieces construct a chain structure, the operation gives change for further optimization.

\begin{figure}[tb]
	\centering
	\subfloat[]{ \label{fg:arrange-before}
		\includegraphics[width=0.45\linewidth]{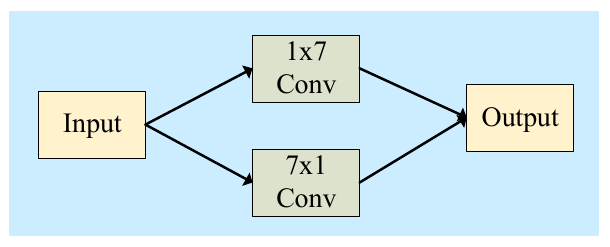}
	}
	\subfloat[]{ \label{fg:arrange-after}
		\includegraphics[width=0.52\linewidth]{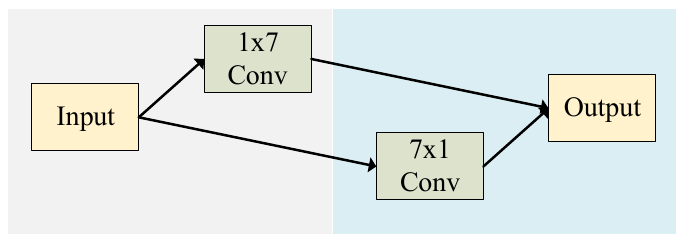}
	}
	\caption{A extreme case for a block with two layers. (a): consider the entire block as a special layer. (b): partition the block into more fine-grained pieces.}
	\label{fg:graph-stacks}
\end{figure}

\begin{figure}[tb]
	\centering
	\subfloat[]{ \label{fg:ending-stacks-a}
		\includegraphics[width=0.23\linewidth]{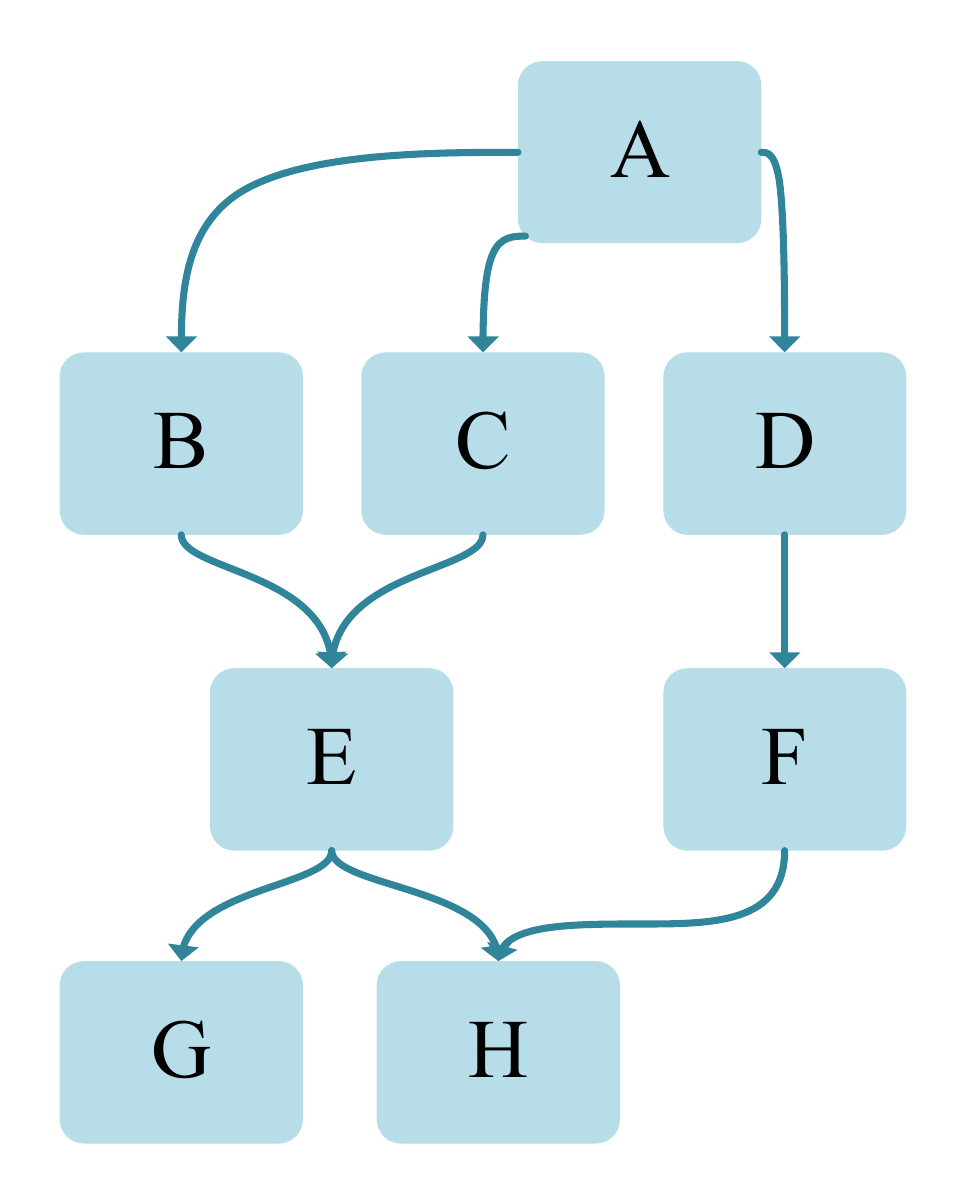}
	}
	\subfloat[]{ \label{fg:ending-stacks-b}
		\includegraphics[width=0.23\linewidth]{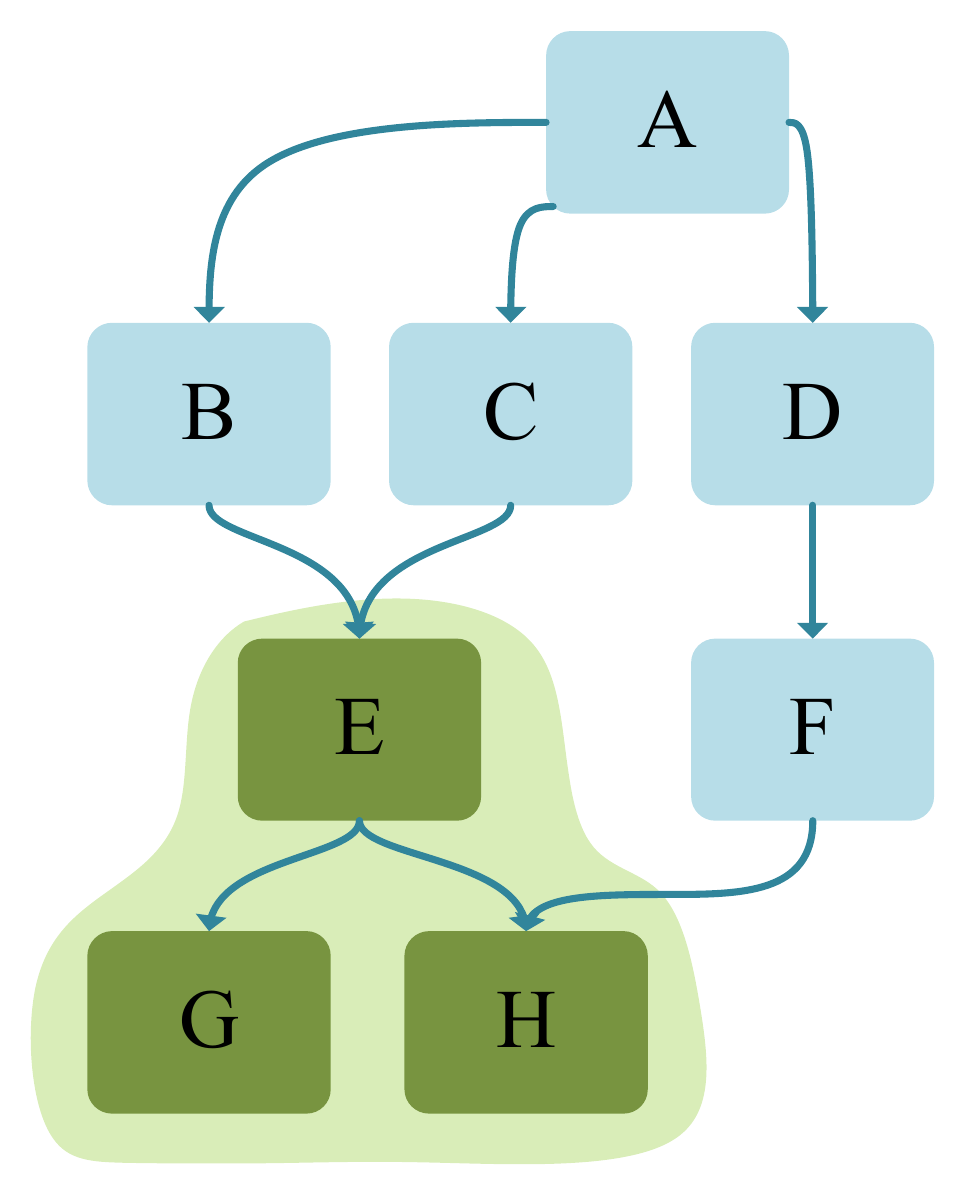}
	}
	\subfloat[]{ \label{fg:ending-stacks-c}
		\includegraphics[width=0.23\linewidth]{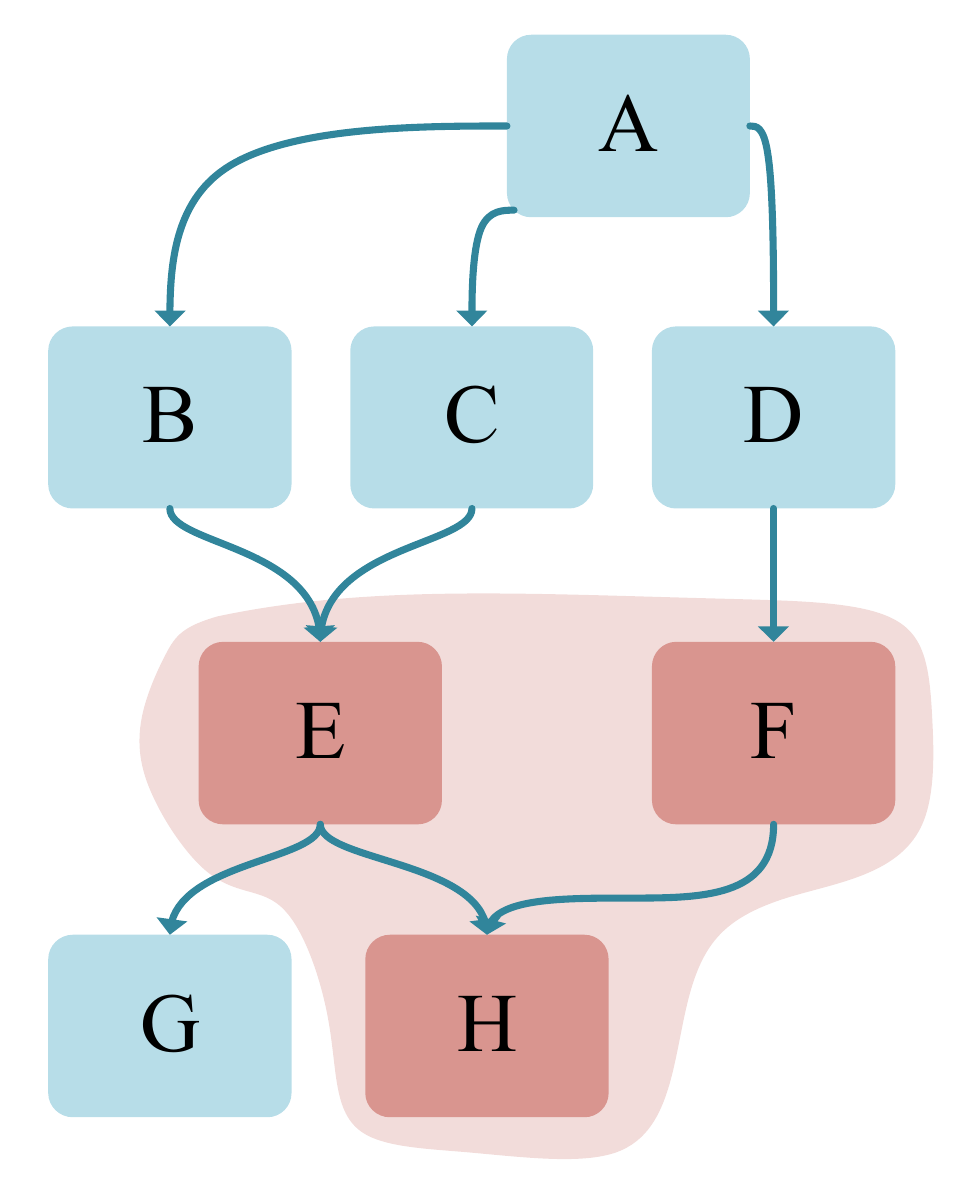}
	}
	\subfloat[]{ \label{fg:ending-stacks-d}
		\includegraphics[width=0.23\linewidth]{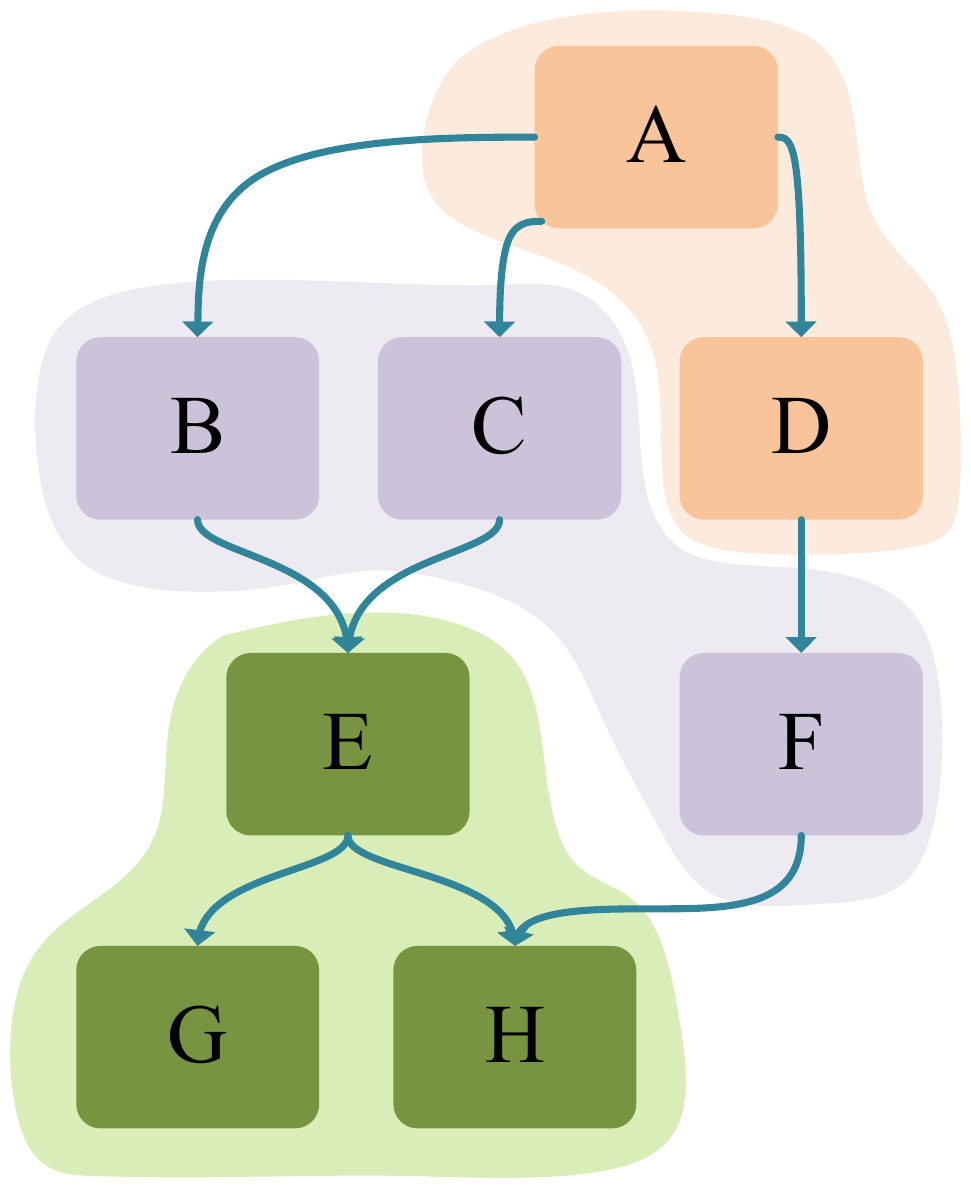}
	}
	\caption{The illustration of ending pieces. (a): The original graph $\mathbb{G}$. (b): $\{E,G,H\}$ is an ending piece of $\mathbb{G}$. (c): $\{E,F,H\}$ is not an ending piece of $\mathbb{G}$. (d): A partition of $\mathbb{G}$ using ending piece iteratively. }
	\label{fg:ending-stacks}
\end{figure}

\subsection{Partition Graph Into Pieces}

We give a graph partition algorithm based on dynamic programming. 

Here we will reveal the existence of the optimal substructure property of the problem of partitioning graph into pieces, which is necessary for dynamic programming. We first define the concept of \textit{ending} piece of graph $\mathbb{G}$. Note since the piece of $\mathbb{G}$ is just a \textit{smaller segment} defined in the previous section, we still use the notation $\mathcal{M}$ to denote these pieces.
\begin{definition}
	An \textit{ending piece} $\mathcal{M}_E$ is a special piece of $\mathbb{G}$ which for any edge $(u, v) \in \mathbb{G}$, if $u \in \mathcal{M}_E$, then $v \in \mathcal{M}_E$.
\end{definition}

Fig. \ref{fg:ending-stacks} gives an illustration of ending pieces. We plot a small graph $\mathbb{G}$ with $8$ vertices in Fig. \ref{fg:ending-stacks-a} and two different pieces in Fig. \ref{fg:ending-stacks-b} and Fig. \ref{fg:ending-stacks-c}. 
The green region $\{E, G, H\}$ in Fig. \ref{fg:ending-stacks-b} is an ending piece of original graph $\mathbb{G}$. But the red region $\{E,F,H\}$ in Fig. \ref{fg:ending-stacks-c} is not an ending piece since the edge $E$ is a member of this piece but $G$ is outside the red region. 
Graph $\mathbb{G}$ can be partitioned into pieces using the concept of ending piece recursively. Given a graph $\mathbb{G}$, the sketch of the procedure is to find an ending piece $\mathcal{M}_E$ and add it to the partition result (a sequence of pieces), then consider $\mathbb{G} - \mathcal{M}_E$ as a new graph and repeat the previous procedure recursively. Fig. \ref{fg:ending-stacks-d} shows a partition result of graph $\mathbb{G}$.

Note partition $\mathbb{G}$ by ending piece can not guarantee that these pieces obtained are with a chain structure. If we assign vertex $B$ in Fig. \ref{fg:ending-stacks-d} from the middle piece $\{B, C, F\}$ to the first piece $\{A, D\}$, the partition result can also be obtained by the above procedure. 
However, such a result do not satisfy our goal since the first piece connects with two pieces at the same time. 
To prevent this, we add a constraint that all the vertices that are directly connect to the ending piece must belong to the ending piece in the next iteration.
With this constraint, once $\{E,G,H\}$ is determined as ending piece, $\{B, C, F\}$ must belong to the same piece in the final result, which guarantees the obtained pieces a chain structure.

\subsection{The Algorithm}
Since our goal to minimize the redundant calculation of every piece, we need to quantify the redundant calculation cost $C(\mathcal{M})$ for a given piece $\mathcal{M}$.
Assume $\mathcal{I}$ is the original input feature sizes of sources nodes in $\mathcal{M}$, and $\mathcal{I}'$ is the feature size with redundant parts that are calculated with Eq. \eqref{eq:necessary-wihi}. The value of $C(\mathcal{M})$ can be easily quantified by the difference of required FLOPs for the two input. 

Here we give the state transfer equation for partitioning the graph structure:
\begin{equation} \label{eq:dp-piece}
	F(\mathbb{G}) = \min_{\mathcal{M}_E \subset \mathbb{G}} \left\{ \max 
	\left\{ F(\mathbb{G}-\mathcal{M}_E), C(\mathcal{M}_E) \right\}
	\right\}.
\end{equation}
If $\mathbb{G}$ is partitioned into multiple pieces $\mathcal{M}$, the function $F(\mathbb{G})$ return the minimum FLOPs difference $C(\mathcal{M})$ among all partitioned pieces in $\mathbb{G}$. Algorithm \ref{al:graph2stack} gives the pseudocode.

\begin{algorithm}[tb]
	\caption{Partition graph into pieces}
	\small
	\label{al:graph2stack}
	\begin{algorithmic}[1]
		\Require $F$: map indexed with the hash $h(\mathbb{G})$ and return $F(\mathbb{G})$.
		\Require $R$: map indexed with the hash $h(\mathbb{G})$ and return $\mathcal{M}$.
		\Function{partition}{$\mathbb{G}$, $\mathcal{M}_E'$}
		\State compute the hash $h(\mathbb{G})$
		\If{$F$ contains $h(\mathbb{G})$}
		
		\Return $F[h(\mathbb{G})]$
		\EndIf
		\State $min \gets \infty$
		
		\State get $\mathcal{M} \subset h(\mathbb{G})$ which directly is connected with $\mathcal{M}_E'$
		\For{$\mathcal{M}_E \gets DFS(\mathbb{G} - \mathcal{M})$}
		\State $\mathcal{M}_E \gets \mathcal{M}_E \cup \mathcal{M}$
		\State calculate the redundancy $C(\mathcal{M}_E)$
		\State $cur \gets \max(\text{partition}(\mathbb{G} - \mathcal{M}_E$, $\mathcal{M}_E), C(\mathcal{M}_E))$
		\If{$min > cur$}
		\State $F[h(\mathbb{G})] = cur$
		\State $R[h(\mathbb{G})] = \mathcal{M}_E$
		\State $min \gets cur$
		\EndIf
		\EndFor
		
		\Return $F[h(\mathbb{G})]$
		\EndFunction
		
		\Function{obtain}{$\mathbb{G}$}
		\If{$\mathbb{G}$ == $\phi$}
		
		\Return
		\EndIf
		\State $\mathcal{M} \gets R[h(\mathbb{G})]$
		\State print the piece $\mathcal{M}$
		\State obtain($\mathbb{G} - \mathcal{M}$)
		\EndFunction
	\end{algorithmic}
\end{algorithm}

\textbf{Line 2-4}
checks whether $F(\mathbb{G})$ is already calculated, if true, the following computation can be skipped. Otherwise, we use a variable $min$ to store the minimum value located in Eq. \eqref{eq:dp-piece}.

\textbf{Line 5-7} adds a constraint to make sure the partitioned pieces follow a chain structure. Since the \textit{partition} function uses recursion, the parameter $\mathcal{M}_E'$ stores the partitioned piece in its previous calculation. We use a \textit{DFS} function to produce all the possible $\mathcal{M}_E$.

\textbf{Line 8-13} is the core part of our proposed dynamic programming. It iterates all possible $\mathcal{M}_E$, and uses recursion to solve the optimization problem. Here we use a variable $cur$ to store the best partition strategy for current $\mathcal{M}_E$. If the current strategy is better than the one we have recorded, we update the record variable $min$ and map $F$ and $R$. 

\textbf{Line 14-18} is the \textit{obtain} function that receives the CNN $\mathbb{G}$ and shows the best partition strategy using the map $R$ that is calculated in the \textit{partition} function.

Note the \textit{DFS} function can produce tons of available $\mathcal{M}_E$ for a given $\mathbb{G}$. Since iterating all of them leads to unfeasible complexity for optimization, we use a simple pruning strategy here. From the above discussion, it is clear that the more sequential layers we fuse, the more redundancy we get. In fact, we observe that the redundancy is intolerable when the \textit{diameter} of $\mathcal{M}_E$ exceeds a specific number.
\begin{definition}
	The diameter of piece $\mathcal{M}$ is the greatest distance of any vertex pair in $\mathcal{M}$.
\end{definition}
With this observation, we limit the diameter of produced $\mathcal{M}_E$ in \textit{DFS} function under a constant integer $d$. In practice, we set the value of $d$ to $5$. \label{sec:dia-desc}

\section{Pipeline Cooperation for CNN Inference} \label{sec:main-algo}

\begin{algorithm}[tb]
	\caption{Dynamic programming for pipeline inference}
	\small
	\label{al:dp-homo}
	\begin{algorithmic}[1]

		\Require $P, L$: 3D arrays used to record the period and latency.
		\Require $S, R$: 3D array used to trace the computed stage and sub-pipeline.
		
		\Function{DP}{$i$, $j$, $p$, $T_{lim}$}
		\If {$P[i][j][p]$ exists}
		
		\Return $P[i][j][p]$, $L[i][j][p]$
		\EndIf
		
		\State calculate $Ts[i][j][p]$ using \eqref{eq:stage-time}
		\State $P[i][j][p] \gets Ts[i][j][p]$
		\State $T[i][j][p] \gets Ts[i][j][p]$
		\State $S[i][j][p] \gets (i, j, p)$
		
		\If {$m = 1$ or  $j=i+1$}
		
		\Return $P[i][j][p]$, $T[i][j][p]$
		\EndIf
		
		\For{$s := i \to j-1$}
		\For{$m := 1 \to p - 1$}
		\State calculate $Ts[s+1][j][m]$ using \eqref{eq:stage-time}
		\State $T_{lim} \gets T_{lim} - Ts[s+1][j][m]$
		\If {$T_{lim} < 0$}
		\Continue
		\EndIf
		\State $P[i][s][p-m], T[i][s][p-m] \gets \text{DP}(i, s, p-m, T_{lim})$
		\If {$T_{lim} < T[i][j][p-m]$}
		\Continue
		\EndIf
		\State $ period \gets \max(P[i][s][p - m], Ts[s+1][j][m])$
		\If {$ period < P[i][j][p])$}
		\State $P[i][j][p] \gets period$
		\State $T[i][j][p] \gets T[i][s][p-m] + Ts[s+1][j][m]$
		\State $R[i][j][p] \gets (i, s, p-m)$
		\State $S[i][j][p] \gets (s+1, j, m)$
		\EndIf
		\EndFor
		\EndFor
		\Return $P[i][j][p]$, $L[i][j][p]$
		\EndFunction
		
		\Function{BuildStrategy}{($i$, $j$, $p$), $\mathbb{S}$}
		\If {$R[i][j][p]$ }
		\State BuildStrategy($R[i][j][p]$, $\mathbb{S}$)
		\EndIf
		\State calculate $ \mathcal{S}_{i \to j}$ using $S[i][j][p]$
		
		\State $\mathbb{S} \gets \mathcal{S}_{i \to j} \cup \mathbb{S}$
		\EndFunction
	\end{algorithmic}
\end{algorithm}

In this section, we present a pipeline cooperation (PICO) scheme aimed at efficiently executing CNN inference. PICO uses a heuristic algorithm based on dynamic programming to optimize the inference pipeline. We also implement an adaptive framework that automatically chooses suitable parallel scheme under dynamic workload. 

\subsection{Heuristic}

For chain structure, although the polynomial algorithm for $\mathcal{P}(\mathbb{G}, \mathbb{D}, \mathbb{S})$ does not exist unless $P=NP$, the optimal solution can be found in polynomial time if these mobile devices are homogeneous, which leads to a heuristic two-step algorithm. 
We first find the optimal ${\mathbb{S}}^\star$ for a homogeneous cluster, then adapt the $\mathbb{S}^\star$ to a heterogeneous cluster using a greedy algorithm.

Since the CNN $\mathbb{G}$ is partitioned into $L$ pieces, considering a specific stage $\mathcal{S}: <\mathcal{M}, \mathcal{D}, \mathcal{F}>$ that starts from $i$-th piece and ends at $j$-th piece. We can use the notation $\mathcal{S}_{i \to j}$ to represent it, so to the two notations $\mathcal{M}_{i \to j}$ and $\mathcal{D}_{i \to j}$.
\subsubsection{Dynamic Programming}

Based on the given cluster $\mathbb{D}$, we construct a new cluster $\mathbb{D}'$, which has the same number of devices of $\mathbb{D}$, but the computing capacity of each device is equivalent to the average of  $\mathbb{D}$.
\begin{equation}
	\vartheta({d'}_k) = \frac{\sum_{d_k \in \mathbb{D} } \vartheta(d_k)}{|\mathbb{D}|}\ \forall {d'}_k \in \mathbb{D}',
	\quad |\mathbb{D}'| = |\mathbb{D}|
\end{equation}

For any device $d_k$ belongs to this stage, the output feature size $\mathcal{F}^k$ is equivalently partitioned. Thus, $\mathcal{F}^{k}$ can be determined by the size of stage. We denote $p = |\mathcal{D}_{i \to j}|$ for convenience.

The expression of stage can now be simplified as a three-element tuple $(i, j, p)$.
For the optimal pipeline $\mathbb{S}^\star$, it can now be broken into an optimal sub-pipeline consisting of pieces form $1$ through $s$ with $p-m$ mobile devices followed by a single stage with pieces $s+1$ through $j$ replicated over $m$ workers. Then using the optimal sub-problem property, we can solve the optimization problem through dynamic programming:
\begin{equation} \label{eq:beq2}
	P[i][j][p] \! = \! \min_{i \leq s < j} \min_{1 \leq m < p} \max
	\begin{cases}
		P[i][s][p-m] \\
		Ts[s+1][j][m]
	\end{cases}
\end{equation}
where $P[i][s][p-m]$ is the time taken by the slowest stage of the optimal sub-pipeline between piece $i$ and $s$ with $p-m$ edge devices, $Ts[s+1][j][m]$ is the time taken for a stage with model segment $\mathcal{M}_{s+1 \to j}$ with $m$ devices. Obviously $P[1][L][D]$ is equivalent to $\mathcal{P}(\mathbb{G}, \mathbb{D}', \mathbb{S})$ in the homogeneous case. During optimization, we prune these solutions that exceed the inference limitation $T_{lim}$.

Algorithm \ref{al:dp-homo} shows the pseudocode of our optimization algorithm which uses dynamic programming with memorization to find out the optimal parallelization strategy. Function \textit{DP} computes the minimum period and records the optimal pipeline configuration in two 3D arrays $R$ and $S$. 
The optimal parallelization strategy is built up through function \textit{BuildStrategy} by recursively iterating the calculated $R$ and $S$, and adding the corresponding stage configuration $\mathcal{S}_{i \to j}$ to $\mathbb{S}$.

\subsubsection{Adapt to the heterogeneity}

We use a greedy algorithm to adapt the calculated $\mathbb{S}'$ in Algorithm \ref{al:dp-homo} to the heterogeneous environment. For every stage $\mathcal{S}_{i \to j}' \in \mathbb{S}'$, we keep the model segment $\mathcal{M}_{i \to j}$ unchanged and choose a proper set of edge devices as $\mathcal{D}_{i \to j}$ from heterogeneous cluster $\mathbb{D}$. Let $\Theta_{i \to j}$ and $\Theta_{i \to j}'$ denotes the required computing resources of stage $\mathcal{S}_{i \to j}$ and $\mathcal{S}_{i \to j}'$:
\begin{equation}
	\Theta_{i \to j} = \sum_{d_k \in \mathcal{D}_{i \to j}} \theta(\mathcal{M}_{i \to j}; \mathcal{F}^k),
\end{equation}
We want $\Theta_{i \to j}$ to be as close to $\Theta_{i \to j}'$ as possible.

We initialize the stage set $\mathbb{S}$ with the same number of stages, each stage only the same number of workers and the same model fragment $\mathcal{S}_{i \to j}$. 
To achieve our goal, we sort the mobile devices by the computing capabilities $\vartheta(d_k)$ in reverse order and then iterate each device. In every iteration, we find the stage $\mathcal{S}_{i \to j}' \in \mathbb{S}'$ with maximum average computing requirement $\frac{\Theta_{i \to j}'}{|\mathcal{D}_{i \to j}'|}$.
The current device $d_k$ will be added to device set $\mathcal{D}_{i \to j}$. Once $\mathcal{D}_{i \to j}$ owns the same number of device in $\mathcal{D}_{i \to j}'$, we adjust the output feature size $\mathcal{F}^k$ for every device $d_k \in \mathcal{D}_{i \to j}$ with a \textit{Divide And Conquer} algorithm. After this operation, we accomplish the presentation of stage $\mathcal{S}_{i \to j}$ and add it to $\mathbb{S}$. After all the iterations, we have a set of stages  $\mathbb{S}$ for the heterogeneous cluster. The complete algorithm is shown in Algorithm \ref{al:greedy-hete}. 

\begin{algorithm}[tb]
	\caption{Adjust stage configuration $\mathbb{S}$ for heterogeneous devices}
	\small
	\label{al:greedy-hete}
	\begin{algorithmic}[1]
		\Require $\mathbb{S}'$: the optimal stage set for homogeneous cluster.
		
		\algnewcommand\And{\textbf{and}}
		\Function{AdjustStage}{}
		
		\State Initialize an empty $\mathbb{S}$
		\State Sort devices in $\mathbb{D}$ by computing capabilities $\vartheta(d_k)$
		\For{$d_k \in \mathbb{D}$}
		\State Find the stage $\mathcal{S}_{i \to j}' \in \mathbb{S}'$ with minimum $\frac{\Theta_{i \to j}'}{|\mathcal{D}_{i \to j}'|}$
		\State Get $\mathcal{S}_{i \to j}$ from $\mathbb{S}$ or create $\mathcal{S}_{i \to j}$ with empty $\mathcal{D}_{i \to j} $
		\State $\mathcal{D}_{i \to j} \gets d_k \cup \mathcal{D}_{i \to j}$
		\State Remove one device from $\mathcal{D}_{i \to j}'$
		
		\If {$|\mathcal{D}_{i \to j}'| = 0$}
		\State Adjust feature partition $\mathcal{F}^k$ for every $d_k \in \mathcal{D}_{i \to j}$.
		\State $\mathbb{S} \gets \mathcal{S}_{i \to j} \cup \mathbb{S}$
		\State Remove $\mathcal{S}_{i \to j}'$ from $\mathbb{S}'$
		\EndIf
		\EndFor
		\Return $\mathbb{S}$
		\EndFunction
		
	\end{algorithmic}
\end{algorithm}

\subsubsection{Correctness}
PICO can not guarantee the final configuration for the inference pipeline is optimal since the problem is NP-Hard.
But we could show that both Algorithm \ref{al:graph2stack} and \ref{al:dp-homo} find the optimal solutions for the sub-problem they focus on.

\begin{theorem}
	Given a CNN model $\mathbb{G}$, $F(\mathbb{G})$ in Eq. \eqref{eq:dp-piece} returns the maximum amount of redundant FLOPs among those pieces in the optimal arrangement of $\mathbb{G}$.
\end{theorem}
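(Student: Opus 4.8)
This is a correctness claim for the dynamic program \eqref{eq:dp-piece}, so the plan is to prove it by strong induction on the number of vertices of $\mathbb{G}$. Write $\mathrm{OPT}(\mathbb{G})$ for $\min_{\Pi}\max_{\mathcal{M}\in\Pi}C(\mathcal{M})$, the quantity the theorem claims $F$ computes, where $\Pi$ ranges over all \emph{admissible arrangements} of $\mathbb{G}$: the sequences of pieces produced by the peeling process of Algorithm \ref{al:graph2stack}, i.e.\ repeatedly removing an ending piece $\mathcal{M}_E$ subject to the Line 5--7 constraint that $\mathcal{M}_E$ contains every vertex adjacent to the previously removed piece (and, when the pruning of Section \ref{sec:dia-desc} is active, of diameter at most $d$). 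First I would pin down the self-similarity of this family: (i) if $\Pi=(\mathcal{M}_1,\dots,\mathcal{M}_r)$ is an admissible arrangement of $\mathbb{G}$ listed in chain order, then $\mathcal{M}_r$ — the piece peeled \emph{first} — is an ending piece of $\mathbb{G}$ satisfying the constraint, and $(\mathcal{M}_1,\dots,\mathcal{M}_{r-1})$ is an admissible arrangement of $\mathbb{G}-\mathcal{M}_r$; and (ii) conversely, for any admissible ending piece $\mathcal{M}_E$ of $\mathbb{G}$ and any admissible arrangement $\Pi'$ of $\mathbb{G}-\mathcal{M}_E$, the concatenation $(\Pi',\mathcal{M}_E)$ is admissible for $\mathbb{G}$. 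Both (i) and (ii) follow by unwinding the definition of the peeling process; the point needing care is that the constraint carried into the recursive call (the parameter $\mathcal{M}_E'$) is exactly $\mathcal{M}_E$, so admissibility is preserved in both directions.

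Then I would run the induction. For the base case, $\mathbb{G}=\varnothing$ has the single empty arrangement and $F(\varnothing)=0=\mathrm{OPT}(\varnothing)$ (an empty maximum); a graph on which no nontrivial ending piece can be peeled has the one-piece arrangement and $F(\mathbb{G})=\max\{F(\varnothing),C(\mathbb{G})\}=C(\mathbb{G})$. For the step, assume $F=\mathrm{OPT}$ on all strictly smaller graphs. To show $F(\mathbb{G})\le\mathrm{OPT}(\mathbb{G})$, take an optimal arrangement $\Pi^\star=(\mathcal{M}_1,\dots,\mathcal{M}_r)$; by (i) and the inductive hypothesis, $F(\mathbb{G}-\mathcal{M}_r)=\mathrm{OPT}(\mathbb{G}-\mathcal{M}_r)\le\max_{t<r}C(\mathcal{M}_t)$, and since $\mathcal{M}_r$ is one of the candidates in the minimum of \eqref{eq:dp-piece}, $F(\mathbb{G})\le\max\{F(\mathbb{G}-\mathcal{M}_r),C(\mathcal{M}_r)\}\le\max_{t\le r}C(\mathcal{M}_t)=\mathrm{OPT}(\mathbb{G})$. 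To show $F(\mathbb{G})\ge\mathrm{OPT}(\mathbb{G})$, let $\mathcal{M}_E^\star$ attain the minimum in \eqref{eq:dp-piece}; by the inductive hypothesis $F(\mathbb{G}-\mathcal{M}_E^\star)=\mathrm{OPT}(\mathbb{G}-\mathcal{M}_E^\star)$ is realized by some arrangement $\Pi'$, and by (ii) $(\Pi',\mathcal{M}_E^\star)$ is admissible for $\mathbb{G}$ with cost $\max\{F(\mathbb{G}-\mathcal{M}_E^\star),C(\mathcal{M}_E^\star)\}=F(\mathbb{G})$, so $\mathrm{OPT}(\mathbb{G})\le F(\mathbb{G})$. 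The two inequalities give $F(\mathbb{G})=\mathrm{OPT}(\mathbb{G})$, and unfolding the minimizing choices (recorded in the map $R$) exhibits an arrangement attaining it — the optimal arrangement referred to in the statement.

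\textbf{Where the difficulty lies.} The induction itself is the routine optimal-substructure argument; the real work is upstream, in step (i)/(ii): making ``admissible arrangement'' a precise notion that simultaneously (a) matches the recursion \eqref{eq:dp-piece} together with the constraint enforced in Lines 5--7, (b) is closed under peeling the last chain-piece and under prepending an ending piece, and (c) coincides with ``partition of $\mathbb{G}$ into pieces forming a chain'' as intended in Section \ref{sec:graph}. In particular one must verify that the Line 5--7 constraint is exactly what is needed to guarantee the chain property — it rules out the bad reassignment of vertex $B$ in Fig.~\ref{fg:ending-stacks-d} — without ever excluding an arrangement that is genuinely chain-structured, so that restricting the minimum in \eqref{eq:dp-piece} to constrained ending pieces does not change $\mathrm{OPT}$. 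A secondary point to dispatch is that the memoization on $h(\mathbb{G})$ does not alter the computed value (only avoids recomputation), and that, with pruning on, the theorem is to be read as optimality within arrangements whose pieces have diameter at most $d$, the unrestricted optimum being recovered when the pruning is switched off.
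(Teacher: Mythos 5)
Your proposal is correct and follows essentially the same route as the paper: it formalizes the optimal-substructure property of the ending-piece decomposition (with the Line 5--7 adjacency constraint preserving the chain property) and proves $F(\mathbb{G})=\mathrm{OPT}(\mathbb{G})$ by induction via the two inequalities. The points you flag for care --- that the constraint never excludes a genuinely chain-structured arrangement and that memoization keyed on $h(\mathbb{G})$ is sound because the subproblem value depends only on the remaining subgraph --- are exactly the details the paper's argument also rests on.
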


\begin{theorem}
	Assuming $i$ and $j$ is the start and end index of pieces that need to be deployed, $P[i][j][p]$ in Eq. \eqref{eq:beq2} returns the minimal period of all possible pipeline configurations for $p$ mobile devices.
\end{theorem}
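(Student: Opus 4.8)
The plan is the standard dynamic-programming correctness argument: a strong induction on the size of the subproblem, measured by $(j-i)+p$, which strictly decreases on every recursive call in Algorithm~\ref{al:dp-homo}. Before starting the induction I would fix precisely the class of objects the theorem quantifies over. Call a \emph{pipeline configuration} of pieces $i$ through $j$ on $p$ homogeneous devices a partition of the consecutive pieces $i,\dots,j$ into consecutive stages together with an assignment of positive device counts summing to $p$, where inside a stage the output feature is split equally among its devices, so that the running time of a stage carrying pieces $a,\dots,b$ on $q$ devices is exactly the quantity $Ts[a][b][q]$ of Eq.~\eqref{eq:stage-time} and the period of the configuration is its largest stage time. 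With that in hand the goal is to show that $P[i][j][p]$, as computed by the recurrence~\eqref{eq:beq2} together with the single-stage initialization $P[i][j][p]\gets Ts[i][j][p]$, equals the minimum period over all such configurations, and that this minimum is attained by one of them.

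First I would dispose of the base cases: if $j=i$ (one piece) or $p=1$ (one device), the only configuration is the single stage $\mathcal{S}_{i\to j}$ on all $p$ devices, of period $Ts[i][j][p]$, which is exactly the value assigned before the loops run. For the inductive step I would prove the two inequalities separately. Achievability: the value $P[i][j][p]$ emitted by the recurrence is either $Ts[i][j][p]$ --- the valid single-stage configuration --- or $\max\{P[i][s][p-m],Ts[s+1][j][m]\}$ for some admissible split $s$ and count $m$ with $1\le m<p$; in the second case the induction hypothesis supplies a concrete configuration of pieces $i,\dots,s$ on $p-m$ devices realizing $P[i][s][p-m]$, and gluing the single last stage $\mathcal{S}_{s+1\to j}$ on the remaining $m$ devices onto it gives a valid configuration of pieces $i,\dots,j$ on $p$ devices whose period is exactly $P[i][j][p]$. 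Optimality: take any valid configuration, read off its last stage (which covers pieces $s+1,\dots,j$ for some admissible $s$ and uses some $m$ with $1\le m<p$; a single-stage configuration is the case already handled by the initialization), note that the remaining stages form a valid configuration of pieces $i,\dots,s$ on $p-m$ devices and hence have period $\ge P[i][s][p-m]$ by induction, and conclude that the configuration's period is $\ge\max\{P[i][s][p-m],Ts[s+1][j][m]\}\ge P[i][j][p]$ since $(s,m)$ is one of the pairs scanned by the recurrence. Memoization is harmless because $P[i][j][p]$ is a function of $(i,j,p)$ alone.

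The step I expect to be the actual obstacle is reconciling the clean statement above with the latency budget that the implemented code threads through the recursion. Algorithm~\ref{al:dp-homo} reduces $T_{lim}$ by the last stage's time, recurses, and for each subproblem memoizes only the minimum-period solution together with its latency, not the full Pareto set of $(\text{period},\text{latency})$ pairs; a period-suboptimal sub-pipeline with a smaller latency could therefore have been the one that keeps the whole configuration within $T_{lim}$. Consequently the induction above proves the theorem exactly for the unconstrained problem --- minimum period over \emph{all} configurations, as stated --- and, in the presence of the $T_{lim}$ pruning, it establishes only that $P[i][j][p]$ is the minimum period among the configurations the pruning does not discard. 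I would therefore keep the theorem as stated for the unconstrained period and, if the latency-constrained optimum is really wanted, replace the scalar memo entry by a Pareto frontier and rerun the same induction; flagging this gap between the idealized recurrence and the implemented one is, I believe, the only nonroutine part of the write-up.
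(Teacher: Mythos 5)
Your induction on $(j-i)+p$ with the two-sided argument (achievability by gluing the last stage onto an optimal sub-pipeline, optimality by peeling off the last stage of any configuration) is exactly the optimal-substructure argument the paper relies on for Eq.~\eqref{eq:beq2}, so the proposal is correct and follows essentially the same route. Your observation that the $T_{lim}$ pruning in Algorithm~\ref{al:dp-homo} memoizes only a single (period, latency) pair rather than a Pareto frontier is a fair caveat about the implemented heuristic, but it does not affect the theorem as stated, which concerns the unconstrained recurrence.
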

The detailed proof can be found in the appendices.

\subsection{Optimization Complexity} \label{sec:complexity-analysis}
PICO aims to find a many-to-many mapping for various CNNs and heterogeneous mobile devices, which has been shown as NP-Hard in Section \ref{sec:np-analysis}. Here we analyze the complexity of these algorithms proposed by PICO.

\subsubsection{Analysis}
PICO contains three novel algorithms, we analyze them one by one.

\textbf{Algorithm \ref{al:graph2stack}} arranges $\mathbb{G}$ into sequential pieces. We first define the width of CNN to formulate the complexity of Algorithm 1:
\begin{definition}[Width $w$ of CNN] \label{def:width}
	Given a CNN graph $\mathbb{G}$, $w$ is the \textit{width} of $\mathbb{G}$ if there are at most $w$ neural layers in $\mathbb{G}$ such that there is no path connecting any two of them.
\end{definition}

Since for every $\mathcal{M}$ generated in Line 6, Algorithm \ref{al:graph2stack}, the upper bound for every path in $\mathcal{M}$ is $d$ since we limit the length of every path in $\mathcal{M}$ (see Section \ref{sec:dia-desc}), the time complexity of Algorithm \ref{al:graph2stack} can be given by Theorem \ref{th:algo1-com}.
\begin{theorem}[Complexity of Algorithm 1] \label{th:algo1-com}
	The time complexity of PICO is $O(wd(\frac{nd}{w})^w)$, where $d$ is the upper bound for every path in $\mathcal{M}$, $n$ is the number of vertices in $\mathbb{G}$, and $w$ is the width of $\mathbb{G}$.
\end{theorem}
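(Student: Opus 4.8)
The plan is to charge the running time of Algorithm~\ref{al:graph2stack} in the usual way for a memoized recursion: bound the number of distinct sub-problems ever written into the table $F$, bound the worst-case work done inside a single call to \emph{partition} (not counting the memoized recursive calls it triggers), multiply the two, and add the linear-time \emph{obtain} pass, which is dominated. Both of the first two quantities will be controlled by one device, a Dilworth chain decomposition of $\mathbb{G}$, together with the diameter pruning of Section~\ref{sec:dia-desc}.

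First I would characterise the sub-problems. The only recursive call is \emph{partition}$(\mathbb{G}-\mathcal{M}_E,\mathcal{M}_E)$, and $\mathcal{M}_E$ is always an ending piece, hence successor-closed. Therefore every graph that can appear as an argument is obtained from $\mathbb{G}$ by deleting a union of ending pieces, so it is \emph{predecessor-closed}: if a vertex belongs to it, so do all of that vertex's ancestors in $\mathbb{G}$. Since the width of $\mathbb{G}$ is $w$ (Definition~\ref{def:width}), Dilworth's theorem covers $\mathbb{V}$ by $w$ chains $P_1,\dots,P_w$ with $\sum_i|P_i|=n$, and a predecessor-closed set meets each $P_i$ in a prefix of that chain, so it is pinned down by the tuple of prefix lengths $(\ell_1,\dots,\ell_w)$ with $0\le\ell_i\le|P_i|$. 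The number of distinct sub-problems is thus at most $\prod_{i=1}^{w}(|P_i|+1)\le((n+w)/w)^w=O((n/w)^w)$ by the AM--GM inequality, and each is located in $O(1)$ through the hash $h(\mathbb{G})$, so the memoization is consistent.

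Next I would bound the work inside one call. The loop over candidate ending pieces (Lines~7--16) examines the pieces emitted by \emph{DFS}$(\mathbb{G}-\mathcal{M})$, each merged with the forced set $\mathcal{M}$ of Line~6. This is where the pruning depth $d$ enters. Re-running the closure argument on the predecessor-closed graph currently held, a successor-closed piece meets each chain $P_i$ in a contiguous suffix of that chain (successor-closure propagates forward along the chain), and the vertices of that suffix are strung along a single path inside the piece; since \emph{DFS} only emits pieces of diameter at most $d$, this forces every such suffix to have at most $d+1$ vertices. Hence a candidate piece is determined by $w$ numbers in $\{0,\dots,d+1\}$, there are at most $(d+2)^w=O(d^w)$ of them, and each spans at most $w(d+1)=O(wd)$ vertices; evaluating its redundancy $C(\mathcal{M}_E)$ via the size propagation of Eq.~\eqref{eq:necessary-wihi} and then performing the state update of Eq.~\eqref{eq:dp-piece} costs $O(wd)$ per candidate. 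One call therefore does $O(wd\cdot d^w)$ work.

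Multiplying the two bounds gives $O((n/w)^w)\cdot O(wd\cdot d^w)=O(wd(nd/w)^w)$, which is the claim, and \emph{obtain} visits each produced piece once and is absorbed. I expect the principal obstacle to be the sub-problem count: one has to verify rigorously that deleting ending pieces can only leave predecessor-closed graphs and that, under a fixed chain cover, such graphs are faithfully indexed by the $w$ prefix lengths. The ``bounded-diameter $\Rightarrow$ at most $d+1$ vertices per chain'' step is the other place that needs a short argument (and some care about exactly how the \emph{DFS} of Section~\ref{sec:dia-desc} truncates a piece). The remaining manipulations --- the AM--GM estimate, and folding the additive constants (the $+1$'s, the $d+2$) into the $O(\cdot)$ for the fixed depth $d$ --- are routine.
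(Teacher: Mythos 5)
Your proof is correct and follows essentially the same route as the paper's: a Dilworth chain decomposition into $w$ chains, the observation that the remaining graph meets each chain in a prefix while each candidate ending piece meets it in a suffix of length bounded via the diameter limit $d$, and an $O(wd)$ cost per candidate for evaluating $C(\mathcal{M}_E)$. The only difference is bookkeeping — you count sub-problems and per-call candidates separately, whereas the paper directly counts the pairs $(\mathbb{G},\mathcal{M}_E)$ as at most $\prod_i V_i d \le (nd/w)^w$ — and both yield the same bound $O(wd(\frac{nd}{w})^w)$.
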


\begin{proof}
	We only give a proof sketch here for ease of reading. The detailed proof is in the appendices. 
	Line 8 in Algorithm \ref{al:graph2stack} dominates the computation and the complexity for computing $C(\mathcal{M}_E)$ for every $\mathcal{M}_E$ is $O(wd)$, since there are up to $wd$ vertices in $\mathcal{M}_E$.
	We need to count all the possible pairs $(\mathcal{G}, \mathcal{M}_E)$ during execution to analyze the complexity of the entire algorithm.
	Any directed acyclic graph $\mathbb{G}$ is also equivalent to a partially ordered set. By applying Dilworth's Theorem on $\mathbb{G}$, we can decompose $\mathbb{G}$ into $w$ disjoint chains $\{\mathbb{V}_i\}$. Thus, any possible pair $(\mathbb{G}, \mathcal{M}_E)$ can be decomposed into a combination between $\mathcal{M}_E$ and these chains.
	Since $\mathcal{M}_E$ is an ending piece of $\mathbb{G}$, $\mathcal{M}_E \cap \mathbb{V}_i$ must be a suffix of $\mathbb{V}_i$. Therefore, the upper bound for the possible pair $(\mathbb{G}, \mathcal{M}_E)$ is $\Pi_{i \in \{1, \cdots, w\}} V_i d$, where $V_i$ is the length of chain $\mathbb{V}_i$. And the maximum of $\Pi_{i \in \{1 \cdots, w\}} V_i d$ achieves when $V_i = V_j = n / w$. Thus, the number of all possible pairs $(\mathbb{G}, \mathcal{M}_E)$ is less than $(\frac{nd}{w})^w$, and the complexity of Algorithm \ref{al:graph2stack} is $O(wd(\frac{nd}{w})^w)$.
	
\end{proof}

\textbf{Algorithm \ref{al:dp-homo}} generates an inference pipeline for $D$ homogeneous mobile devices. The sub-problem of Algorithm \ref{al:dp-homo} is to find the inference cost for a given stage, and the computational complexity is $O(nD)$. Assuming the Algorithm \ref{al:graph2stack} partitions the CNN into $L$ pieces, the total number of sub-problem of Algorithm \ref{al:dp-homo} is $O(L^2 D)$, leading to a total time complexity of $O(n L^2 D^2)$.

\textbf{Algorithm \ref{al:greedy-hete}} is a simple greedy algorithm. The sorting operation in Line 3 has $O(D \log(D))$ complexity (e.g., quick sort). 
The for-loop in Line 4 repeats for $D$ times and the line 6 has $O(\log(S))$ complexity for choosing $\mathcal{S}_{i \to j}$ from a tree set $\mathbb{S}$, where $S$ is the size of $\mathbb{S}$.
The complexity for Algorithm \ref{al:greedy-hete} is $O(D (\log(D) + \log(S)))$ and could be relaxed to $O(D \log(D))$ since $S \leq D$.

From the above discussion, we can deduce the complexity of PICO is $O(wd(\frac{nd}{w})^w + n L^2 D^2)$.
The complexity is listed in Table \ref{tab:complexity} for summary. Note Algorithm 1 only needs to run one time for every CNN and is not affected if the mobile environment changes.

\begin{table}[th]
	\centering
	\caption{Computational complexity of PICO framework.}
	\label{tab:complexity}
	\begin{adjustbox}{max width=0.95\textwidth}
		\begin{tabular}{@{}cccc@{}}
			\toprule
			Algorithm 1             & Algorithm 2    & Algorithm 3    & PICO                                \\ \midrule
			$O(wd(\frac{nd}{w})^w)$ & $O(n L^2 D^2)$ & $O(D \log(D))$ & $O(wd(\frac{nd}{w})^w + n L^2 D^2)$ \\ \bottomrule
		\end{tabular}
	\end{adjustbox}
	
\end{table}

\subsubsection{PICO in Practice}
The computational cost of PICO in practice can be decomposed into two parts: the one-time cost $O(wd(\frac{nd}{w})^w)$ caused by Algorithm \ref{al:graph2stack} and the ongoing cost $O(n L^2 D^2)$ caused by Algorithm \ref{al:dp-homo} and \ref{al:greedy-hete}.

The \textbf{one-time cost} caused by Algorithm 1 does not involve the specific edge environment or mobile device cluster, it can be executed on a powerful PC and the results can be directly used by mobile devices without additional processing. 
The \textbf{ongoing cost} caused by Algorithm 2 and 3 is lightweight and takes less than 1 second in the resource-limited Raspberry-Pi.
Thus, the Algorithm 2 and 3 can be triggered if the mobile environment changes and immediately adapt to the new environment.

\subsection{Implementation}

\begin{figure}[tb]
	\centering
	\includegraphics[width=0.8\linewidth]{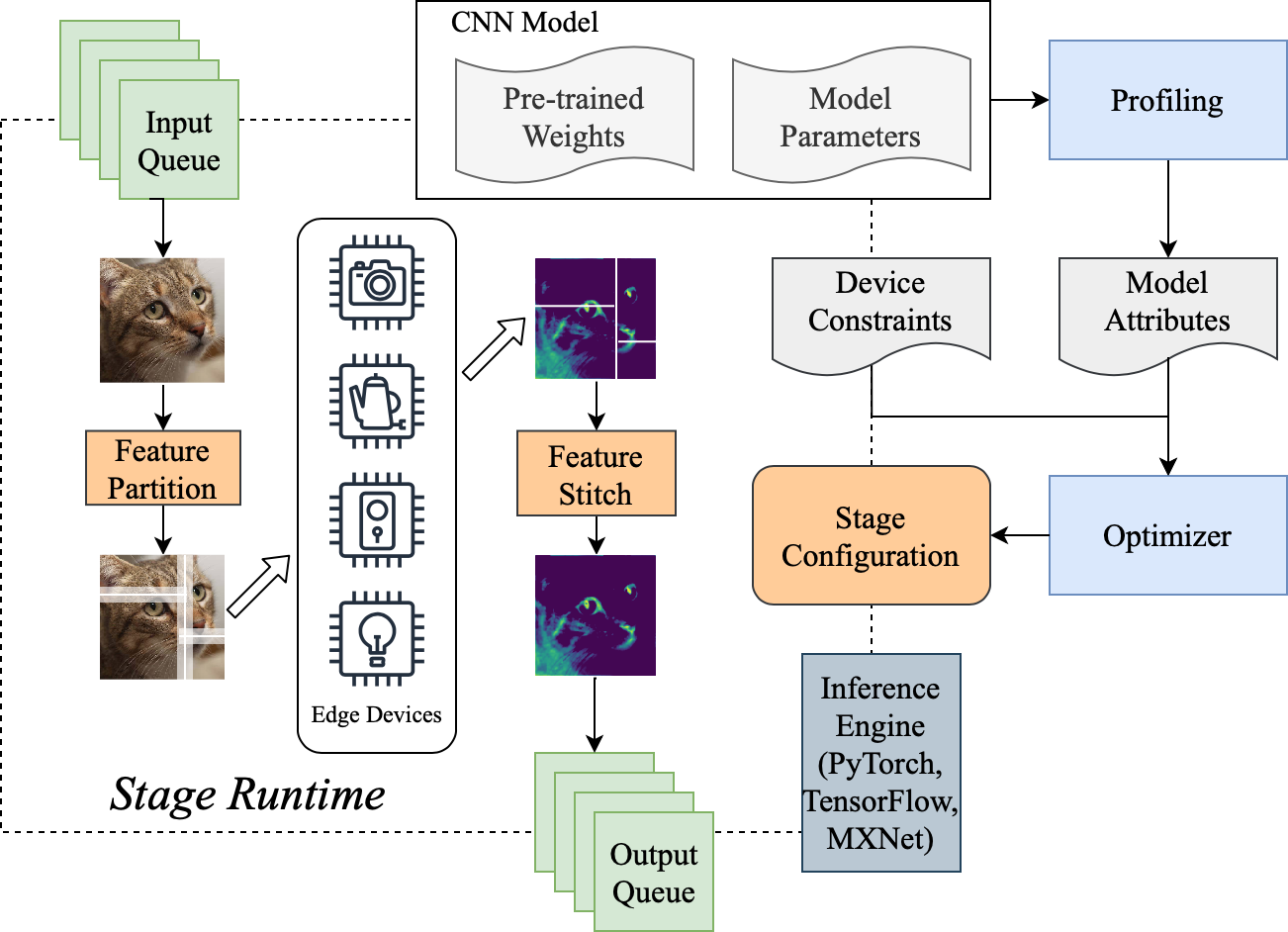}
	\caption{The workflow of stages in an inference pipeline.}
	\label{fg:sys-overview}
\end{figure}

\textbf{The workflow of stages:}
We summarize the workflow of stages in Fig. \ref{fg:sys-overview}. Each stage owns its configuration $\mathcal{S}_{i \to j}$ which is given by the previous optimization. The main thread of stage takes the feature map from the input queue, then splits it into small tiles with different sizes according to $\mathcal{F}$ and distributes them to those devices $\mathcal{D}_{i \to j}$.
Once the computation finishes, the outputs of those devices are gathered and stitched together. There are two other threads responsible to put the receiving feature map into the input queue and send the output to the next stage.

\textbf{Feature split and stitch:}
Most popular DL frameworks such as TensorFlow, PyTorch does not provide an efficient way to split feature map with overlapped parts, and using these high level provided by those frameworks to implement these operations brings intolerable latency. We accomplish the frame split and stitch operations by directly operate the frame tensor data point in the memory space through C++. In practice, after optimization, the time consumption of feature split and stitch can be ignored.

\textbf{Represent CNN into graph:}
We implement an automatic \textit{GraphConvertor} module to convert a given CNN model file into a DAG. The module will record the input and output layers for every tensor during profiling. To achieve this, We modify the source file of PyTorch and add a new hook function \textit{register\_prev\_forward\_hook} as suggested in \cite{pipedream}.

\section{Experiment} \label{sec:experiment}

We give the details of our evaluation bed for experiment and analyze the obtained result.

\subsection{Environment Setup}
Here we give the details of our evaluation setup.

\begin{figure}[tb]
	\centering
	\includegraphics[width=0.85\linewidth]{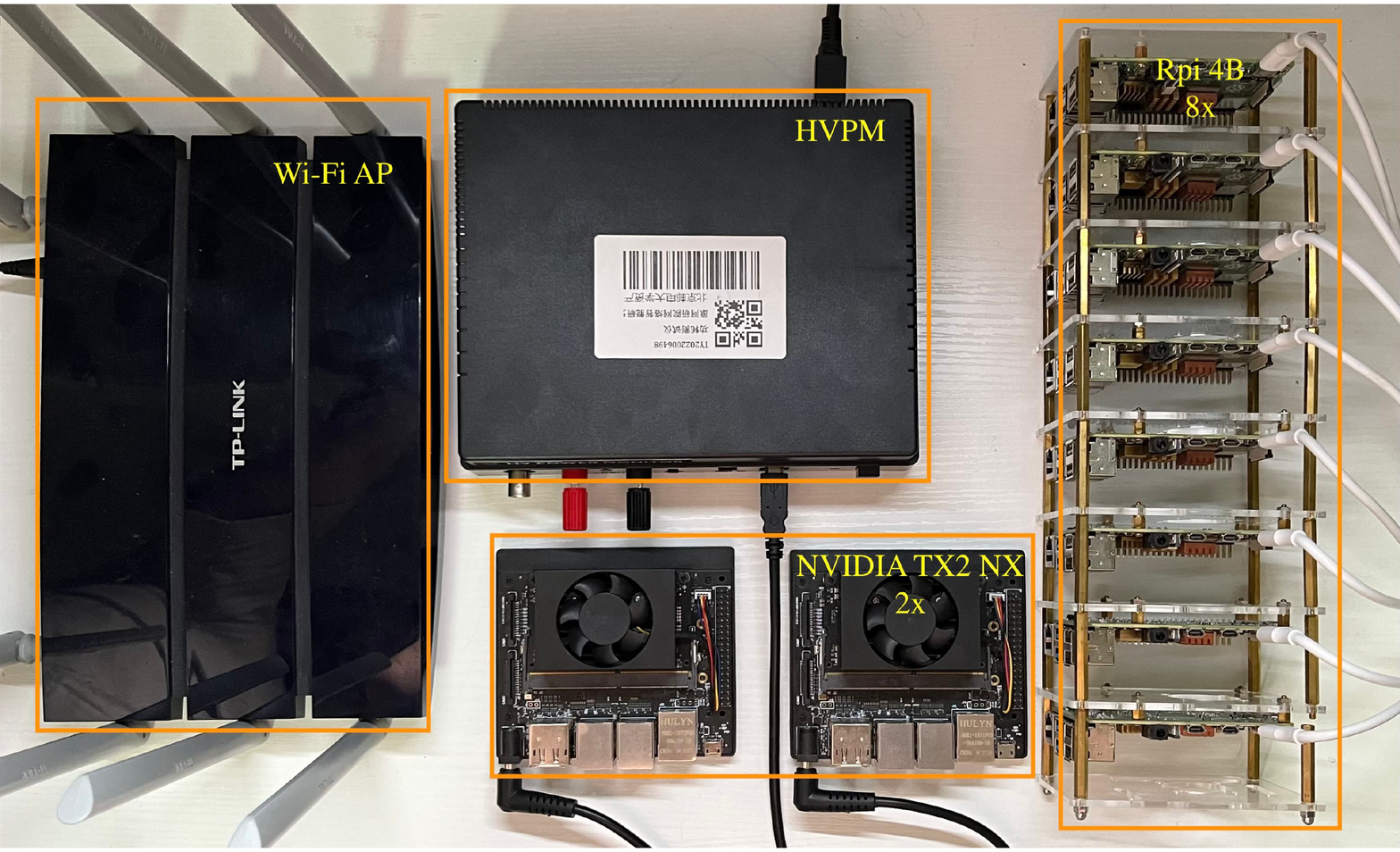}
	\caption{The testbed in our experiment is composed of 8 Raspberry-Pi 4Bs, 2 Nvidia TX2 NXs, a Monsoon High Voltage Power Monitor (HVPM) and a Wi-Fi access point.}
	\label{fg:testbed}
\end{figure}

\textbf{Hardware:}
The mobile cluster for evaluating the PICO framework uses one Wi-Fi access point with 50Mbps bandwidth and 8 ARM based Raspberry-Pi 4Bs. 
Each Raspberry-Pi 4B has a Quad Core ARM Cortex-A73, which has 1.5 GHz max CPU frequency. It has 2 GB LPDDR2 SDRAM, and dual-band 2.4 GHz/5 GHz wireless for communication. To represent a realistic low-end mobile device cluster, we set these Raspberry-Pi 4B running with one CPU core during inference. Fig. \ref{fg:testbed} shows the test bed we used, the laptop is used to monitor and control this cluster.
We limit the CPU frequency for each Raspberry-Pi using \textit{Linux cGroup} to simulate the heterogeneous mobile cluster in the real world.
The heterogeneous cluster has two Nvidia TX2 NX devices, which have a 2.2 GHz CPU frequency, and the six Raspberry-Pis that have three different CPU frequency settings: 1.5 GHz, 1.2 GHz, and 0.8 GHz, respectively.

\textbf{Software:}
We implement PICO and other compared method using PyTorch with Gloo \cite{gloo} as the communication backend. Due to differing output feature sizes on each device, we only used asynchronous P2P algorithms such as \textit{isend()} and \textit{irecv()} in Gloo. As for the network bandwidth, we use a method similar to MPEG-DASH \cite{stockhammer2011dynamic}, using the tool \textit{ping} to send data of two different sizes and measure the response times. The rate is then determined by calculating the ratio between the difference in data size and the difference in response times.

\textbf{Models overview:}
VGG16 \cite{simonyan2014very} is a classic CNN classification model. It contains 13 conv layers, 5 pooling layers and 3 fc layers. 
You only look once version 2 (YOLOv2) \cite{redmon2017yolo9000} is a lightweight CNN used for real-time object detection system. It has deeper architecture compared with VGG-16. There are 23 conv and 5 pooling layers in YOLOv2, nearly twice as VGG16.
Both VGG16 and YOLOv2 follow the chain structure.
ResNet34 \cite{resnet} and InceptionV3 \cite{szegedy2017inception} are two classic CNNs that use a block structure. ResNet34 uses a \textit{skip-connection} strategy that allow the feature to skip several layers. Compared with ResNet34, InceptionV3 has more complex structure. The Inception block has multiple branches, and the conv layers also have many unbalanced (e.g., $1\times 7$, $5 \times 1$) kernels. 

\textbf{Compared method:}
For VGG16 and YOLOv2, four different parallelization strategies are used in the evaluation: (1) Layer-wise (LW) scheme, which parallelizes the CNNs layer by layer; (2) Early-fused-layer (EFL) scheme, an extension to the implementation of DeepThings \cite{zhao2018deepthings}, which fuses and parallelizes the first few conv layers, then executes the rest layers in a single device; (3) Optimal Fused-layer (OFL) scheme, which selectively fuses convolution layers at different parts of a model; (4) CoEdge (CE) \cite{coedge}, the state-of-the-art parallelization scheme which extends the layer-wise scheme to heterogeneous environment and reduce the communication overhead by sorting devices into a list and limiting the communication to the neighbors for each device. Moreover, CoEdge uses a dynamic number of devices to process different layers to further reduces the impact of communication overhead instead of using all mobile devices. (5) Pipeline Cooperation (PICO) scheme, which is proposed in this paper. For ResNet34 and YOLOv2, we compare two different graph partition strategies: (1) Consider each block as a piece, which is used in \cite{zhou2019adaptive,deepslicing}. (2) Partition the entire graph into multiple pieces with suitable granularity, which is proposed in this paper.

\begin{figure}[tb]
	\centering
	\includegraphics[width=0.85\linewidth]{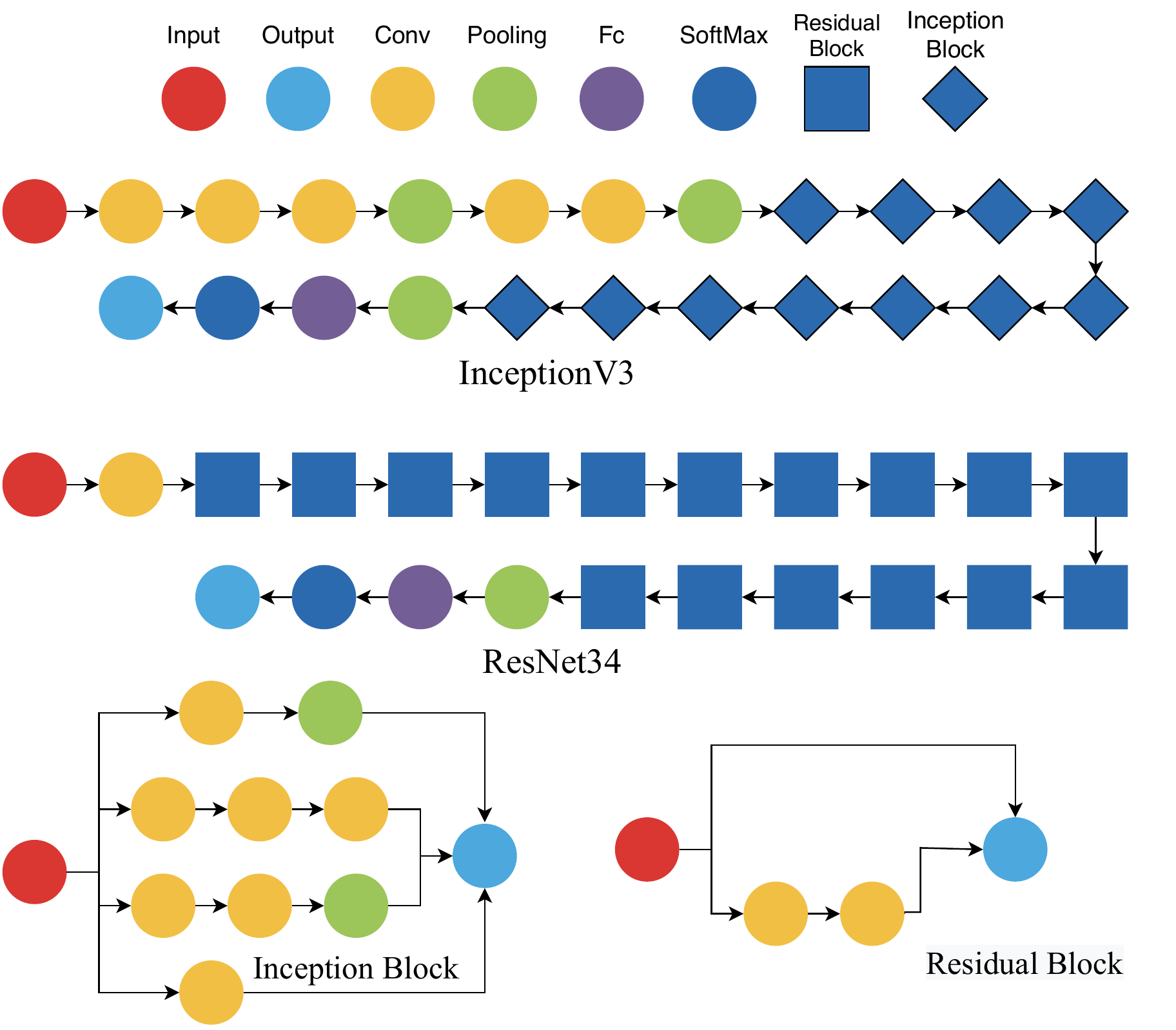}
	
	\caption{The model structure of ResNet34 and InceptionV3.}
	\label{fg:graph-dnn}
\end{figure}

\begin{figure}
	\centering
	\subfloat[Original]{ \label{fg:partition-pieces-before}
		\includegraphics[width=0.43\linewidth]{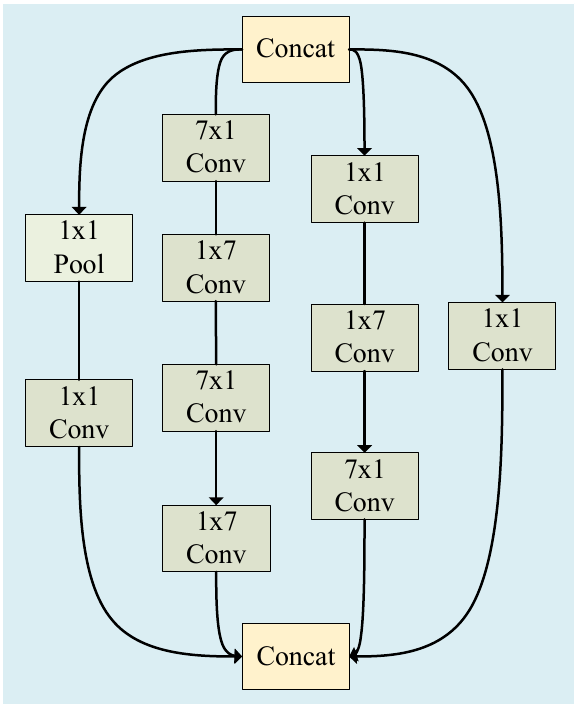}
	}
	\subfloat[Optimized]{ \label{fg:partition-pieces-after}
		\includegraphics[width=0.54\linewidth]{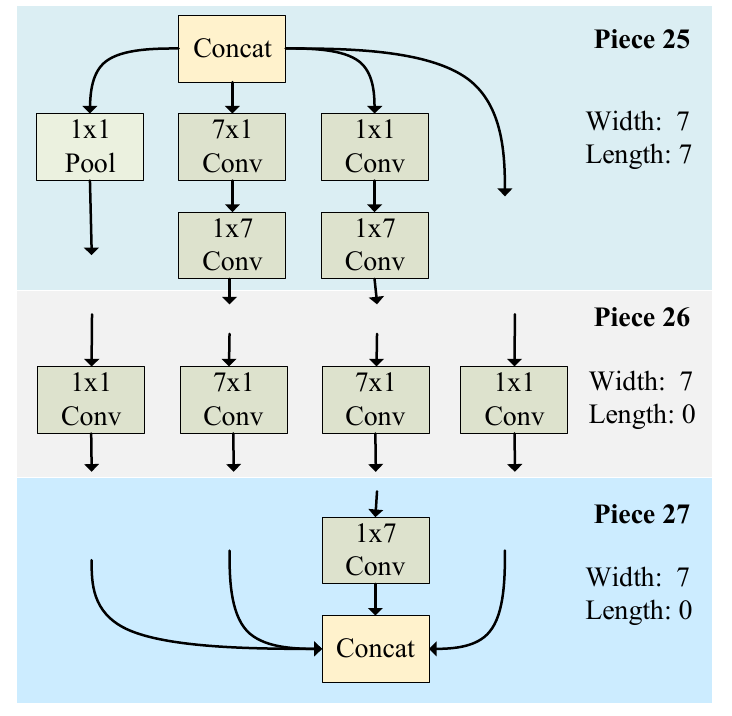}
	}
	\caption{
		An illustration of our graph partition algorithm.
		(a) Part of InceptionV3 model (InceptionC block).
		(b) The obtained pieces after optimization.}
	\label{fg:partition-pieces}
\end{figure}

\subsection{CNN Graph Partition}
Here we present some experimental results of our proposed graph partition algorithm.

\begin{figure}[tb]
	\centering
	\subfloat[ResNet34]{
		\includegraphics[width=0.9\linewidth]{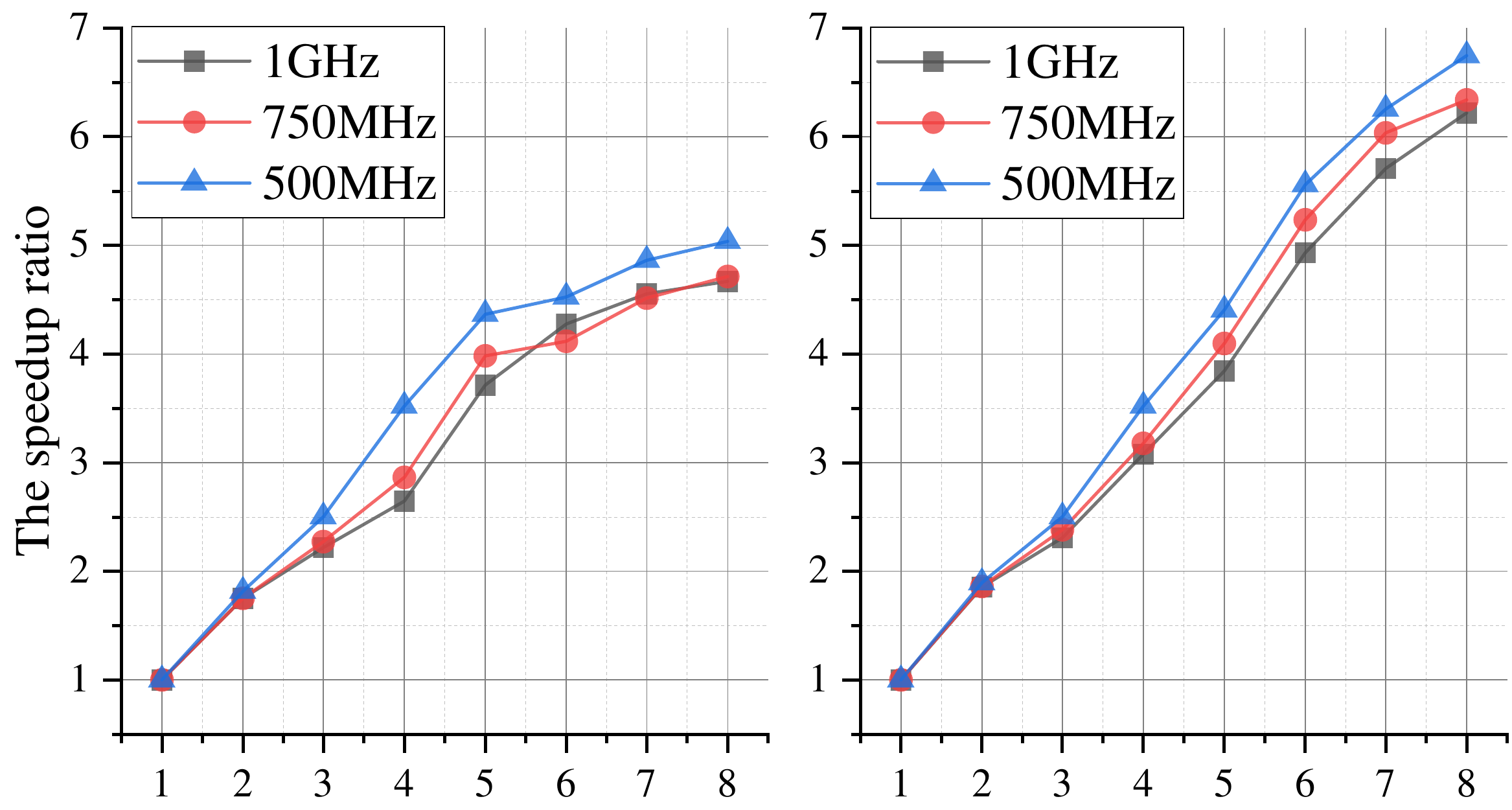}
	}
	
	\subfloat[InceptionV3]{
		\includegraphics[width=0.9\linewidth]{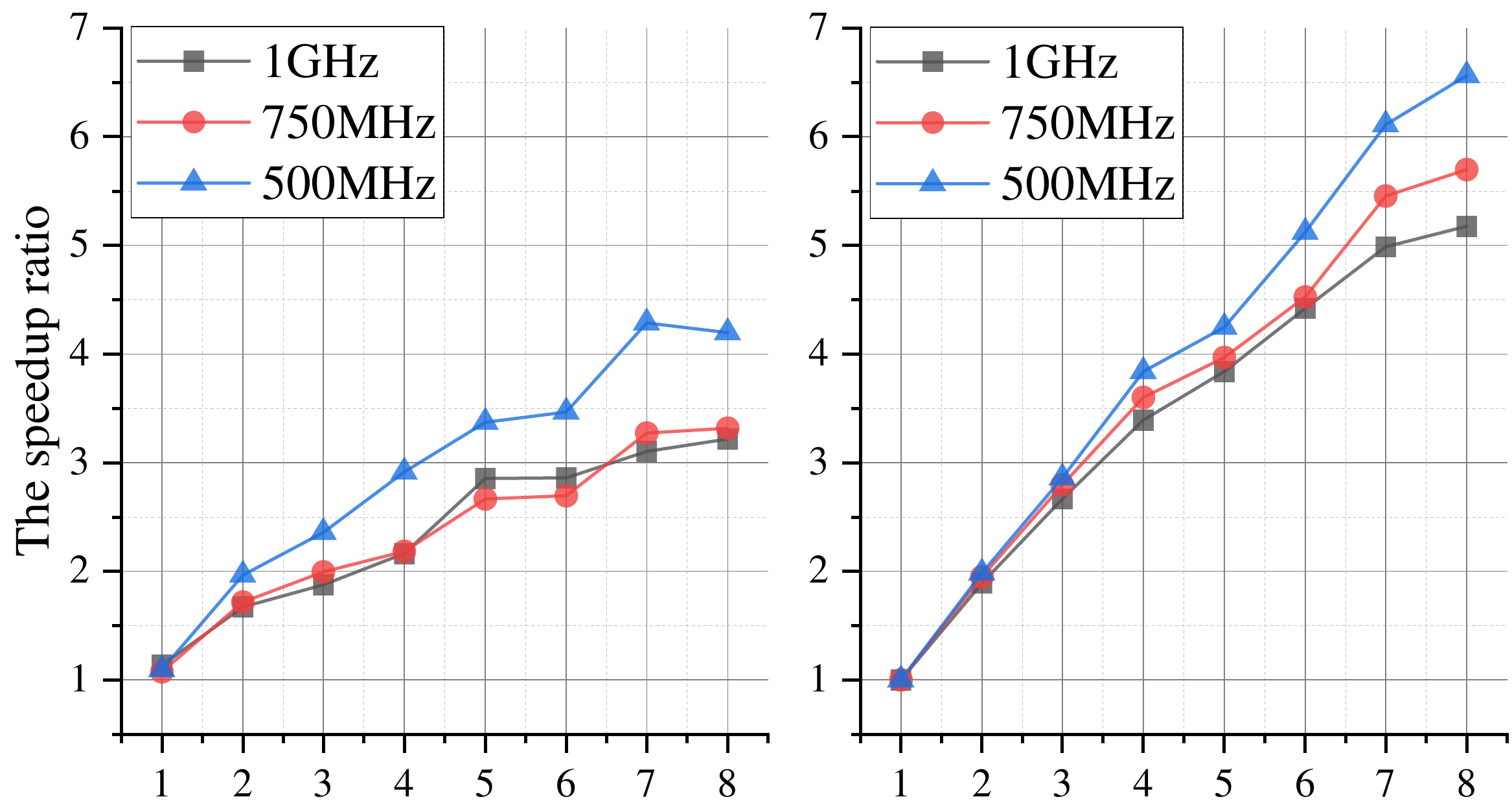}
	}
	\caption{The speedup ratio for ResNet34 and InceptionV3. The left part shows the result by treating the entire block as a whole, and the right part uses graph partition algorithm.}
	\label{fg:graph-dnn-speedup}
\end{figure}

\subsubsection{Partitioned Pieces}
Fig. \ref{fg:partition-pieces} shows part of the partition result of InceptionV3 model. Fig. \ref{fg:partition-pieces-before} plots the InceptionC block, which consists $4$ branches and $10$ neural layers.
We can find if we consider the entire block as a layer \cite{zhou2019adaptive}, lots of redundant calculations will be introduced since there are many \textit{unbalanced} conv kernels (e.g., $1 \times 7$ and $7 \times 1$).
We can use Eq. \eqref{eq:necessary-wihi} and Eq. \eqref{eq:to-next-wh} to quantify the redundant calculation.
If we fuse the entire block into one piece (used in \cite{zhou2019adaptive}), the devices have to introduce $13$ pixel length on both the width and height dimensions.
After running the partition algorithm, the block is split into three pieces (Piece $25$, Piece $26$, and Piece $27$) as shown in Fig. \ref{fg:partition-pieces-after} (The full partition result is attached in the supplemental material).
The entire InceptionV3 model is composed of $40$ pieces, but due to the size of model, we can not plot the entire model here. The complete partition result is shown in the supplemental materials.
These pieces have much smaller redundant calculation. Piece $25$ has $7$ pixel length redundancy, and Piece $26$ and Piece $27$ only have redundancy on only one dimension. Compared with Fig. \ref{fg:partition-pieces-before}, the redundant calculation during inference can be significantly reduced.
In addition, since we break block into pieces, the later optimization can make more fine-grained optimization.

\subsubsection{Speedup After Partition}
We can adapt PICO to those CNNs with non-chain structure by applying our graph partition algorithm at first. Here we compare the speedup ratio for ResNet34 and InceptionV3. Fig. \ref{fg:graph-dnn} shows the structures of the two model, obviously they are constructed with the block structure.
According to the figure, we can find the Inception block in InceptionV3 is more complex than Residual block used in ResNet34.
Fig. \ref{fg:graph-dnn-speedup} plots the speedup ratio under different CPU frequencies for ResNet34 and InceptionV3 with different strategy. 
The figures on the left part fuse the entire block into a whole, and the figures on the right show the results that adopt our graph partition algorithm,
When executing CNN inference with 8 devices, PICO can achieve $6.8 \times$ speedup for ResNet34 and $6.5 \times$ for InceptionV3 after partitioning the CNN model. The speedup effect is more obvious with low CPU frequency since the limitation of computing resource is relieved when the number of mobile devices increases. 
As for the strategy of fusing the entire block, it achieves up to $5 \times$ speedup for ResNet34, but only $4 \times$ for InceptionV3 when the CPU frequency is low.
We think it is caused by the difference in the number of layers in Residual and Inception blocks. Since the Inception block contains more layers than Residual block, the optimal model partition is more likely to exist within blocks.

\subsubsection{Optimization Complexity}
\begin{table}[t]
	\centering
	\begin{tabular}{@{}cccccc@{}}
		\toprule
		Model       & $n$            & $w$ & $wd(\frac{nd}{w})^w$   & Execution & Pieces \\ \midrule
		VGG16       & 19             & 1   & $4.7 \times {10}^{2}$  & 0.10s     & 19     \\
		SqueezeNet  & 30             & 2   & $5.6 \times {10}^{4}$  & 0.14s     & 29     \\
		ResNet34    & 38             & 2   & $9.0 \times {10}^{4}$  & 0.28s     & 21     \\
		MobileNetV3 & 96             & 3   & $6.1 \times {10}^{7}$  & 0.79s     & 31     \\
		InceptionV3 & 99             & 4   & $4.6 \times {10}^{9}$  & 3.01s     & 38     \\
		NASNetL     & 570            & 8   & $1.1 \times {10}^{22}$ & $>$ 5h    & NaN    \\
		NASNetL-P   & $75 \times  8$ & 8   & $9.3 \times {10}^{14}$ & 1.9h      & 34     \\
		\bottomrule
	\end{tabular}
	\caption{The performance of Algorithm \ref{al:graph2stack} for various CNNs. NASNetL-P denotes the strategy which roughly partitions it into 8 parts.}
	\label{tab:algo1-cnn}
\end{table}

\begin{figure*}[tb]
	\centering
	\subfloat[Inference period under different parallel schemes and CPU frequencies]{ \label{fg:vgg-throughput-period}
		\includegraphics[width=0.72\linewidth]{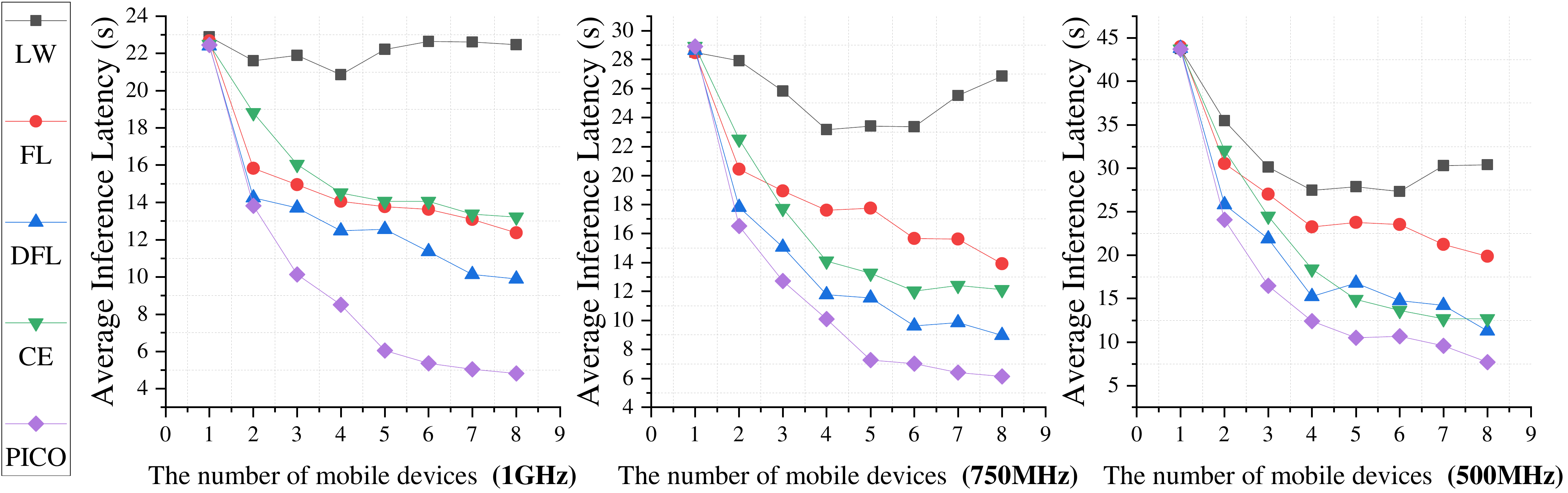}
	}
	\subfloat[Throughput with 8 devices]{
		\includegraphics[width=0.24\linewidth]{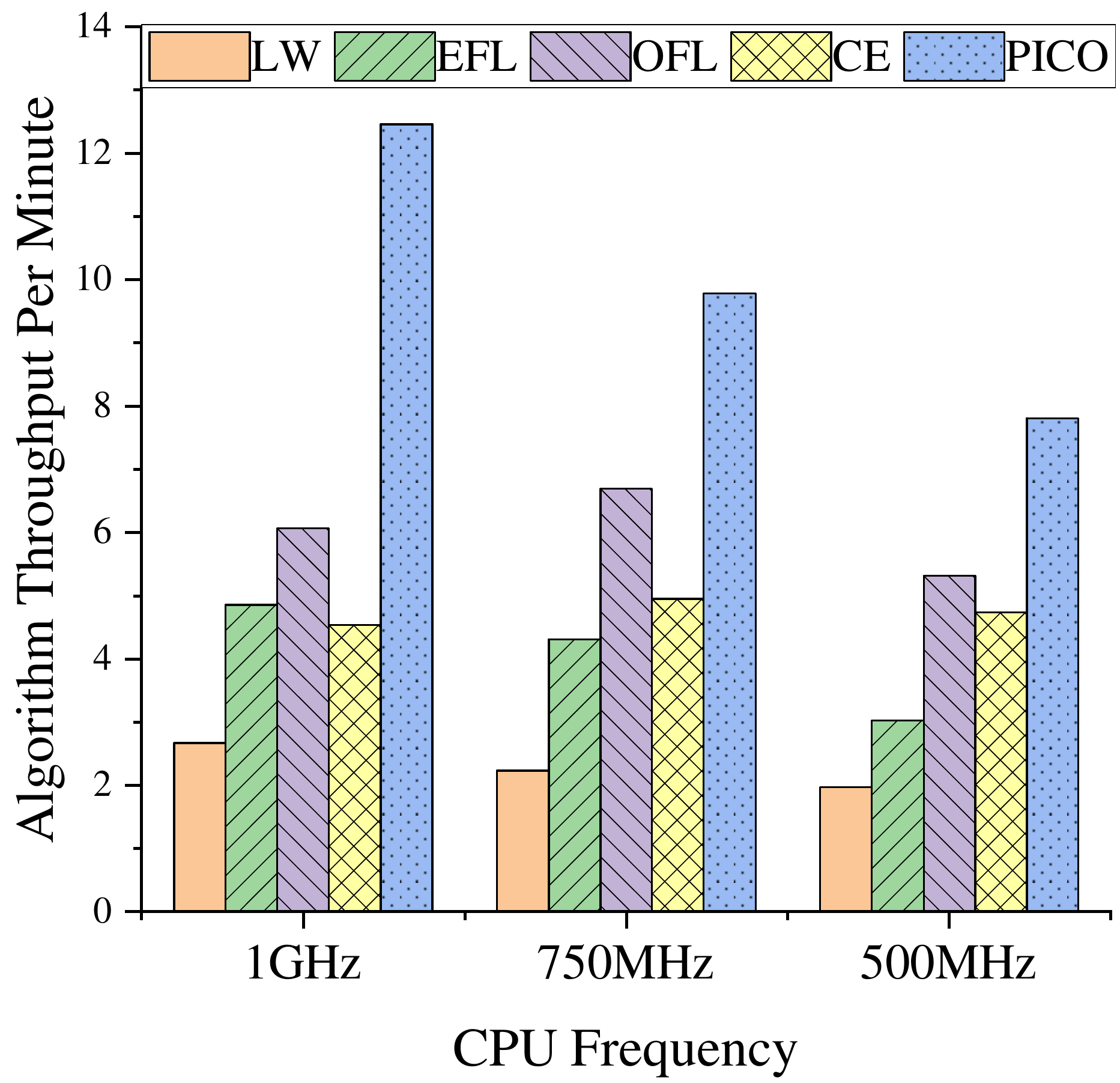}
	}
	\caption{The cluster capacity when executing VGG16.}
	\label{fg:vgg-throughput}
\end{figure*}

\begin{figure*}[tb]
	\centering
	\subfloat[Inference period under different parallel schemes and CPU frequencies]{
		\includegraphics[width=0.72\linewidth]{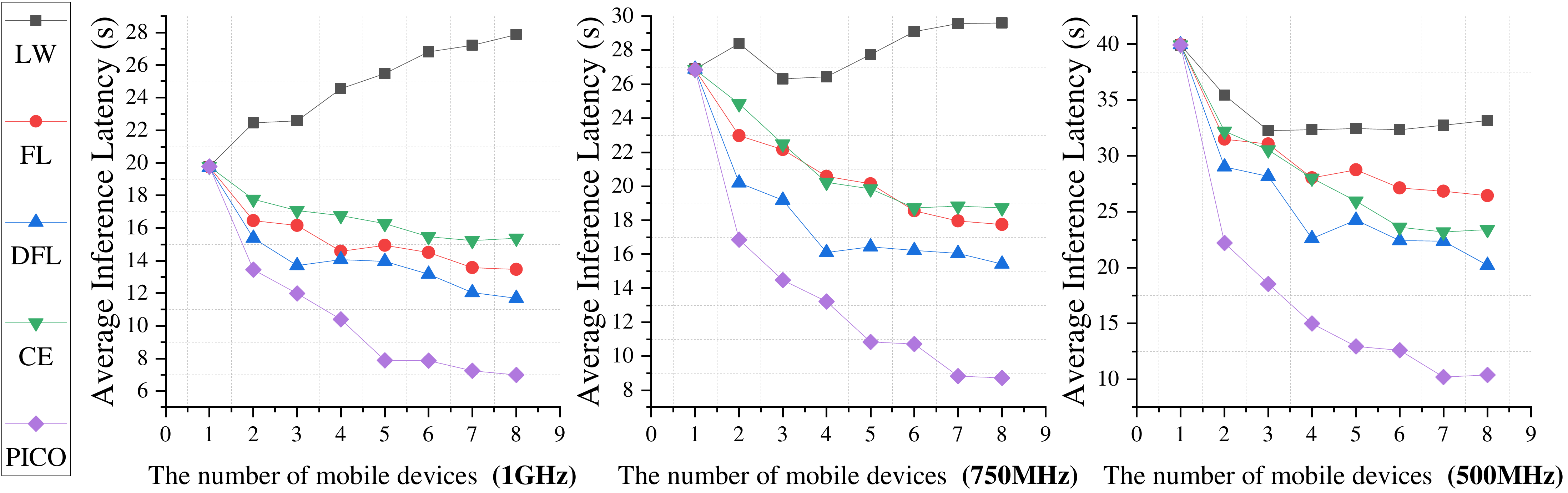}
	}
	\subfloat[Throughput with 8 devices]{\includegraphics[width=0.24\linewidth]{
			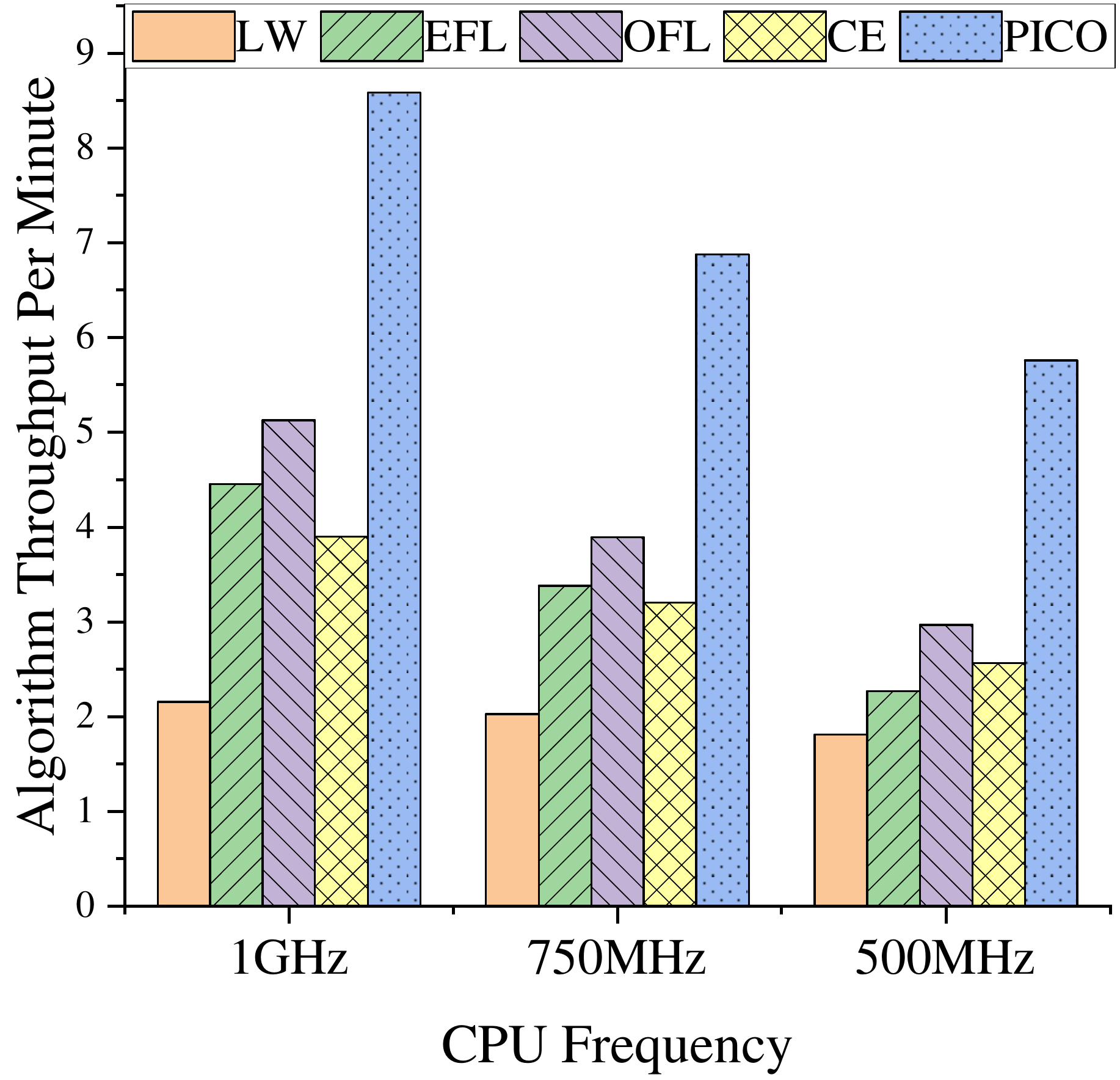}
	}
	\caption{The cluster capacity when executing YOLOv2.}
	\label{fg:yolo-throughput}
\end{figure*}

\begin{figure}
	\centering
	\includegraphics[width=0.95\linewidth]{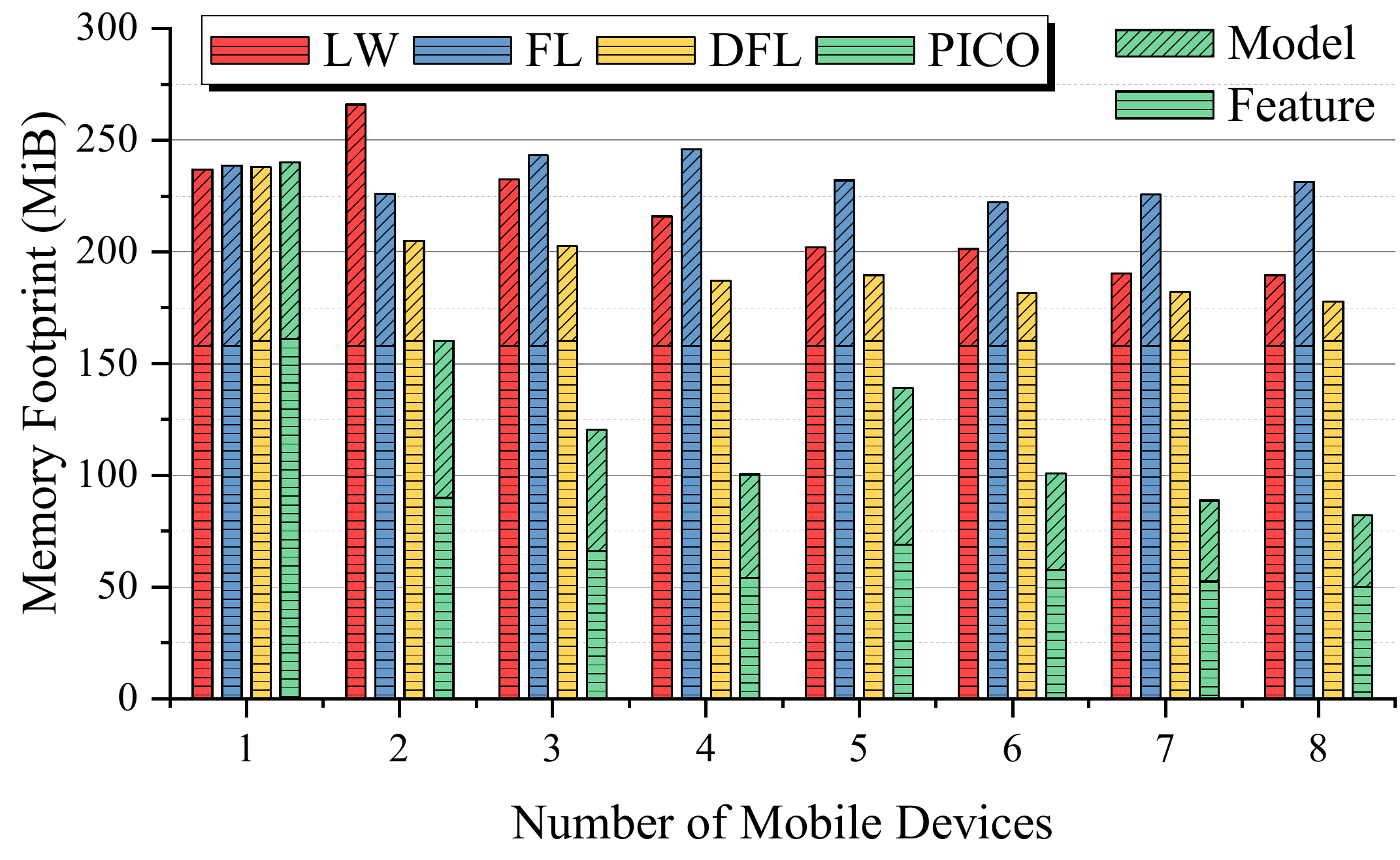}
	
	\caption{The memory footprints of different algorithms.}
	\label{fg:graph-memory}
\end{figure}

Algorithm \ref{al:graph2stack} has $O(wd(\frac{nd}{w})^w)$ complexity to optimize the given CNN, as we analyzed in Section \ref{sec:complexity-analysis}.
Here we run Algorithm \ref{al:graph2stack} on many popular CNNs on a PC equipped with Intel Core i9-10940X to give a comprehensive evaluation of its performance.
The number of layers $n$, the width $w$ and the upper bound $wd(\frac{nd}{w})^w$ for every tested CNN and the execution time are listed in Table \ref{tab:algo1-cnn}. 
Note $n$ only counts \textit{conv} and \textit{pool} layers, since other layers such as \textit{BN} and \textit{ReLU} do not change the output feature shape and require negligible computing resources.
Additionally, the last column shows the number of pieces after optimization.

Many real-world CNNs are deep but narrow, which means they have small width $w$.
Algorithm \ref{al:graph2stack} is efficient and could be executed in less than one or several seconds for these models.
We also add the NASNet-A-Large \cite{zoph2018learning} (NASNetL) model to evaluate Algorithm \ref{al:graph2stack} in an extremely complex case.
NASNetL is constructed through \textit{neural architecture search} technology. NASNetL is much broader ($w=8$) and contains much more layers ($n=570$) compared with the hand-designed models ($w \leq 4$ and $n \leq 100$).
It is rare to deploy such a large-scale model on mobile devices. The trade-off that considers a block as a layer \cite{zhou2019adaptive,deepslicing} has no effect since there is no block in NasNetL.

Directly applying PICO to NASNetL takes nearly infinite time to produce the final output considering the time complexity ($1.1 \times 10^{22}$).
We successfully adapt PICO to NASNetL using \textit{divide-and-conquer}.
Assume Algorithm \ref{al:graph2stack} divides a model into $L$ pieces, if we fuse the $L/2$ pieces from the input position into a smaller model and apply Algorithm \ref{al:graph2stack} to it, then the smaller model must be divided into the same $L/2$ pieces as the original model (the property of dynamic programming).
Inspired by this property, We cut a small part from the beginning of NASNetL, and apply Algorithm \ref{al:graph2stack} on the smaller model to obtain several pieces. Only these pieces away from the cut line will be kept to make sure these pieces from different small model are still sequential.
Then we apply the same strategy to the rest model until all the smaller models are partitioned into pieces.
The last line in Table \ref{tab:algo1-cnn} shows the performance of the divide-and-conquer strategy. NasNetL is tackled with 8 parts and PICO produces the result in two hours.
Since Algorithm \ref{al:graph2stack} only needs to run once for every CNN regardless of specific mobile environment (see Section \ref{sec:complexity-analysis}), the optimization cost is acceptable.

\begin{table*}[tb]
	\caption{The utilization, redundancy ratios and memory footnotes of heterogeneous mobile devices. }
	\label{tab:heter}
	\begin{adjustbox}{max width=\textwidth}
		\begin{tabular}{@{}llllrrrrrrrrr@{}}
			\toprule
			\multirow{2}{*}{Model}   & \multirow{2}{*}{Attributes}          & \multirow{2}{*}{Methods}       & \multirow{2}{*}{Type} & \multicolumn{8}{c}{Devices} & \multirow{2}{*}{Average}                                                                       \\ \cmidrule(lr){5-12}
			                         &                                      &                                &                       & NX@2.2                      & NX@2.2                   & Rpi@1.5 & Rpi@1.5 & Rpi@1.2 & Rpi@1.2 & Rpi@0.8 & Rpi@0.8 &         \\ \cmidrule(r){1-13}
			\multirow{12}{*}{VGG16}  & \multirow{12}{*}{\shortstack{Layers:                                                                                                                                                                                         \\13 conv \\ + \\ 5 pool \\ \\Input size: \\ 244 $\times$ 244}}
			                         & \multirow{3}{*}{CE}                  & Utili.                         & 80.87\%               & 82.13\%                     & 69.37\%                  & 59.97\% & 57.91\% & 36.56\% & 23.11\% & 17.33\% & 53.40\%           \\
			                         &                                      &                                & Redu.                 & 2.02\%                      & 1.93\%                   & 1.29\%  & 2.06\%  & 1.30\%  & 1.41\%  & 1.32\%  & 0.77\%  & 1.51\%  \\
			                         &                                      &                                & Mem.                  & 195.0 M                     & 183.0 M                  & 162.0 M & 158.0 M & 147.0 M & 149.0 M & 134.0 M & 137.0 M & 158.1 M \\
			\cmidrule{3-13}
			                         &                                      & \multirow{3}{*}{EFL}           & Utili.                & 32.43\%                     & 39.79\%                  & 72.58\% & 75.08\% & 94.23\% & 96.77\% & 64.09\% & 64.16\% & 67.39\% \\
			                         &                                      &                                & Redu.                 & 11.02\%                     & 11.60\%                  & 19.08\% & 19.83\% & 18.58\% & 19.22\% & 12.78\% & 13.42\% & 15.69\% \\
			                         &                                      &                                & Mem.                  & 142.0 M                     & 147.0 M                  & 169.0 M & 179.0 M & 173.0 M & 183.0 M & 151.0 M & 165.0 M & 163.6 M \\
			\cmidrule{3-13}
			                         &                                      & \multirow{3}{*}{OFL}           & Utili.                & 38.90\%                     & 40.19\%                  & 60.87\% & 61.79\% & 85.34\% & 94.15\% & 76.54\% & 80.46\% & 67.28\% \\
			                         &                                      &                                & Redu.                 & 7.45\%                      & 7.67\%                   & 11.12\% & 11.39\% & 10.33\% & 10.53\% & 8.15\%  & 8.31\%  & 9.36\%  \\
			                         &                                      &                                & Mem.                  & 149.0 M                     & 149.0 M                  & 158.0 M & 159.0 M & 154.0 M & 155.0 M & 151.0 M & 152.0 M & 153.4 M \\
			\cmidrule{3-13}
			                         &                                      & \multirow{3}{*}{\textbf{PICO}} & Utili.                & 91.21\%                     & 93.28\%                  & 83.12\% & 79.40\% & 47.63\% & 66.26\% & 68.17\% & 90.15\% & 77.40\% \\
			                         &                                      &                                & Redu.                 & 11.08\%                     & 10.97\%                  & 5.82\%  & 3.83\%  & 6.93\%  & 5.55\%  & 0.00\%  & 3.83\%  & 6.00 \% \\
			                         &                                      &                                & Mem.                  & 189.0 M                     & 144.0 M                  & 121.0 M & 103.0 M & 92.0 M  & 121.0 M & 115.0 M & 111.0 M & 124.5 M \\
			\cmidrule(l){1-13}
			\multirow{12}{*}{YOLOv2} & \multirow{12}{*}{\shortstack{Layers:                                                                                                                                                                                         \\23 conv \\ + \\ 5 pool \\ \\Input size: \\ 448 $\times$ 448}}
			                         & \multirow{3}{*}{CE}                  & Utili.                         & 76.85\%               & 75.46\%                     & 65.81\%                  & 66.94\% & 46.32\% & 46.77\% & 22.49\% & 20.21\% & 52.61\%           \\
			                         &                                      &                                & Redu.                 & 0.82\%                      & 0.76\%                   & 0.87\%  & 0.83\%  & 0.79\%  & 0.71\%  & 0.68\%  & 0.61\%  & 0.75\%  \\
			                         &                                      &                                & Mem.                  & 265.0 M                     & 260.0 M                  & 255.0 M & 246.0 M & 245.0 M & 240.0 M & 235.0 M & 239.0 M & 248.1 M \\
			\cmidrule{3-13}
			                         &                                      & \multirow{3}{*}{EFL}           & Utili.                & 37.85\%                     & 35.64\%                  & 67.24\% & 67.61\% & 96.01\% & 95.28\% & 75.87\% & 72.81\% & 68.54\% \\
			                         &                                      &                                & Redu.                 & 27.09\%                     & 27.09\%                  & 45.08\% & 45.08\% & 44.68\% & 44.68\% & 29.29\% & 29.29\% & 36.54\% \\
			                         &                                      &                                & Mem.                  & 189.0 M                     & 178.0 M                  & 208.0 M & 208.0 M & 208.0 M & 207.0 M & 178.0 M & 178.0 M & 194.3 M \\
			\cmidrule{3-13}
			                         &                                      & \multirow{3}{*}{EFL}           & Utili.                & 39.28\%                     & 37.03\%                  & 69.47\% & 68.92\% & 97.02\% & 95.99\% & 77.61\% & 73.94\% & 69.91\% \\
			                         &                                      &                                & Redu.                 & 25.98\%                     & 25.98\%                  & 44.51\% & 44.51\% & 44.86\% & 44.86\% & 28.12\% & 28.12\% & 35.86\% \\
			                         &                                      &                                & Mem.                  & 193.0 M                     & 182.0 M                  & 212.0 M & 212.0 M & 212.0 M & 211.0 M & 182.0 M & 182.0 M & 198.3 M \\
			\cmidrule{3-13}
			                         &                                      & \multirow{3}{*}{\textbf{PICO}} & Utili.                & 89.37\%                     & 97.91\%                  & 89.96\% & 97.85\% & 89.44\% & 99.40\% & 91.89\% & 89.03\% & 93.11\% \\
			                         &                                      &                                & Redu.                 & 6.95\%                      & 2.27\%                   & 1.25\%  & 9.18\%  & 9.18\%  & 5.89\%  & 6.13\%  & 5.05\%  & 5.73\%  \\
			                         &                                      &                                & Mem.                  & 188.0 M                     & 135.0 M                  & 108.0 M & 116.0 M & 113.0 M & 122.0 M & 159.0 M & 157.0 M & 137.3 M \\
			\bottomrule
		\end{tabular}
	\end{adjustbox}
\end{table*}

\subsection{Pipeline Performance}

We evaluate our proposed pipeline cooperation scheme with 2-8 Raspberry-Pi devices and measure some important metrics. 

\subsubsection{Maximum Throughput}
Fig. \ref{fg:vgg-throughput} and Fig. \ref{fg:yolo-throughput} plot the cluster capacity when executing VGG16 and YOLOv2 with different parallel schemes. The first three figures plot the inference period with different parallel schemes and CPU frequencies. The last figure plots the accomplished inference task per minute with 8 devices. It represents the throughput of different parallel schemes.
PICO has the best performance as expected, since our optimization goal is to reduce the redundant computation and achieve minimum pipeline period. 
When the number of devices increases, the throughput of different strategies also improve except the executing YOLOv2 using LW with 1GHz CPU core. 
YOLOv2 has nearly twice number of layers compared with VGG16, which brings more communication overhead for Layer-wise strategy. 
Through CE also executes CNN layer by layer, CE uses a dynamic number of working devices during inference and reduces the traffic volume by only synchronizing overlapped features. Therefore, CE outperforms LW.
When the computing resource is rich (1GHz), the gain brought by the increasing number of devices is offset by communication overhead. 
EFL and OFL fuse multiple layers into one model segment, and do not require communication among devices when they are executing one segment, thus the communication overhead is reduced.
Since OFL optimizes the configuration of fused layers, it outperforms EFL which simply fuses the very early layers.
However, when the number of devices is bigger than a certain number (4 for example), the improvement is very tiny due to the additional computation CPU redundancy.

\subsubsection{Memory Footprint}
Memory footprint is another important metric during inference. The inference latency will quickly grow when the required memory exceeds the onboard memory of the device, since the device has to use swap memory \cite{deepslicing}.
We use a python script to sample memory footprint from \textit{/proc/pid/status} for each inference process. Fig. \ref{fg:graph-memory} plots the average memory footprint of different algorithms. Here we ignore the performance of CE, since LW and CE have very similar performance when devices are homogeneous. According to our previous discussion, the memory footprint can be divided into two more fine-grained parts. The \textit{Model} and \textit{Feature} denote how much the model parameters and intermediate features take part in the memory footprint. 
We can find the memory footprint decreases as the number of mobile devices increases. Since LW, FL, DFL only split features, the whole model needs to be replicated on all mobile devices. 
This approach leads to the result that they can only decrease the memory footprint caused by intermediate features. Meanwhile, PICO distributes both models and features, thus PICO reduces the memory footprint significantly.

\begin{figure}
	\centering
	\includegraphics[width=0.9\linewidth]{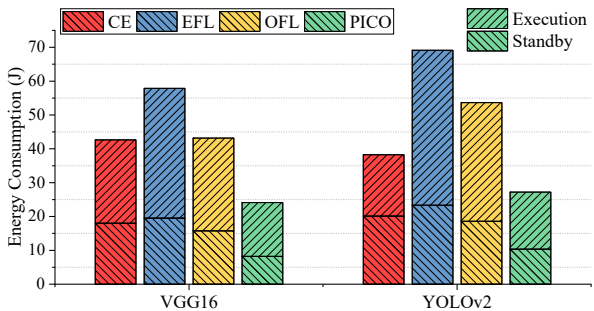}
	\caption{The average energy consumption for every inference task with heterogeneous devices.}
	\label{fg:power}
\end{figure}

\subsection{Impact of Heterogeneity}
Here we evaluate the impact of heterogeneity on different parallel schemes.
We monitor the CPU usage during inference on the heterogeneous mobile cluster and record the average computing resource utilization ratio (Utili.) for different parallel schemes. 
We also calculate the redundancy ratio (Redu.) and their memory footprint (Mem.) on every device during computation. The result is presented in Table \ref{tab:heter}. We remove LW scheme due to its bad performance in heterogeneous environment.

\subsubsection{Load Balancing}
The CPU utilization rates of these devices show the workload among these devices. PICO and CE will adjust the feature partition size according to the specific devices. Thus, the workload of PICO and CE is better than EFL and OFL.
We find both PICO and CE impose more percentages of workloads on these devices with higher CPU frequency (1.2 GHz). Take the CE as an example, the CPU utilization rate is up to 82.61\% for the fastest devices, but drifts down 22.64\% for the slowest devices when running VGG16.
The reason is that CE uses a dynamic number of devices to process each layer. When the feature map is wide (e.g., 224 x 224), CE may use all devices to accelerate the execution. When the feature map is small (e.g., 7 x 7), CE may place all the workload on one powerful device to avoid redundant computation and communication.
However, the computing resources of these slower devices are wasted. On the contrary, PICO can fully utilize the computing resources, thus having a better performance on load balancing.

\subsubsection{Computation Efficiency}
Because the input feature maps of different devices overlap with each other, the redundant computation can lead to inefficient performance.
CE has the minimum average redundant computation, since CE synchronizes the feature map for every layer. But the frequent communication leads to low resource utilization and high inference latency. 
Fusing layers and executing them together can keep the devices busy, but will increase redundant computation. Especially for the EFL which has 46.54\% percent redundancies executing YOLOv2. OFL uses dynamic programming to find a balance between communication and computation, but the redundancy ratio (12.08\%) is still higher than PICO (7.64\%) as PICO uses a subset of mobile devices instead of the entire cluster.

\subsubsection{Energy Consumption}
We measure the energy consumption for every inference task, the result is shown in Fig. \ref{fg:power}. The energy consumption is composed of the inference execution and standby power consumption.
EFL consumes the most energy, since EFL has the highest redundant computation compared with other schemes. Moreover, the redundant computation does not accelerate the inference, thus EFL also has high standby power consumption.
OFL has a lower energy consumption compared with EFL since OFL reduces the redundant computation by synchronizing feature map periodically.
CE executes the CNN layer by layer and has the lowest redundancy among all the schemes. However, the standby power consumption is the majority of energy consumption, because CE has a long inference latency, especially executing YOLOv2.
On the contrary, PICO has the lowest standby power consumption during inference task, since PICO can maximize the throughput during inference.
Through PICO has more redundant computation compared with CE, the overall energy consumption is still lower than CE.

\subsection{Comparing With Optimal Configuration}

Because it is NP-Hard to find the best many-to-many mapping for graph-like CNN and heterogeneous devices, PICO can not guarantee finding the optimal inference pipeline configuration.
Thus, we compare PICO with the optimal pipeline to further evaluate the performance.
The optimal pipeline is obtained through a broad first search (BFS). We compare the optimization time for producing the pipeline configuration and the resource utilization of every mobile device during runtime.

\subsubsection{Methodology}

The main problem for the comparison is the possible solution space for BFS is over-complex. 
According to Table \ref{tab:optim-complex}, finding the best many-to-many mapping for both chain-like CNN, heterogeneous devices and graph-like CNN, homogeneous devices are NP-Hard.
But BFS tries to find the best many-to-many mapping for graph-like CNN and heterogeneous devices.
We test the BFS with CNNs on 4-8 Raspberry-Pi devices, but all of them fail to produce the final output after several hours on a powerful PC.
Therefore, we compare the performance of PICO and BFS from two sides.
On the one side, (1) we compare PICO and BFS with graph-like CNN and homogeneous devices. On the other side, (2) we compare PICO and BFS with chain-like CNN and heterogeneous devices.
Table \ref{tab:graph-homo} and \ref{tab:chain-heter} show the optimization overhead of PICO and BFS.
Fig. \ref{fg:graph-homo} and \ref{fg:chain-heter} give the runtime performance.

\begin{table}[tb]
	\caption{Optimization time with graph-like CNN. }
	\centering
	\begin{tabular}{c r r}
		\toprule
		\textbf{Branches, Layers, Devices} & \textbf{PICO} & \textbf{BFS (Optimal)} \\
		\midrule
		(2, 4, 4)                          & $< 1$s        & $1.58$s                \\
		(2, 8, 6)                          & $< 1$s        & $18.23$s               \\
		(3, 12, 4)                         & $< 1$s        & $11.96$m               \\
		(3, 12, 6)                         & $< 1$s        & $45.24$m               \\
		(3, 12, 8)                         & $< 1$s        & $> 1$s                 \\
		(4, 20, 4)                         & $< 1$s        & $> 1$h                 \\
		(4, 20, 6)                         & $< 1$s        & $> 1$h                 \\
		\bottomrule
	\end{tabular}
	\label{tab:graph-homo}
\end{table}

\begin{table}[tb]
	\caption{Optimization time with heterogeneous devices.}
	\centering
	\begin{tabular}{c r r}
		\toprule
		\textbf{Layers, Devices} & \textbf{PICO} & \textbf{BFS (Optimal)} \\
		\midrule
		(4, 4)                   & $< 1$s        & $< 1$s                 \\
		(8, 4)                   & $< 1$s        & $1.62$s                \\
		(12, 4)                  & $< 1$s        & $3.84$s                \\
		(16, 4)                  & $< 1$s        & $11.27$s               \\
		(8, 6)                   & $< 1$s        & $4.35$m                \\
		(10, 6)                  & $< 1$s        & $12.28$m               \\
		(12, 6)                  & $< 1$s        & $> 1$h                 \\ 
		(8, 8)                   & $< 1$s        & $> 1$h                 \\
		\bottomrule
	\end{tabular}
	\label{tab:chain-heter}
\end{table}

\begin{figure}[t]
	\centering
	\includegraphics[width=0.95\linewidth]{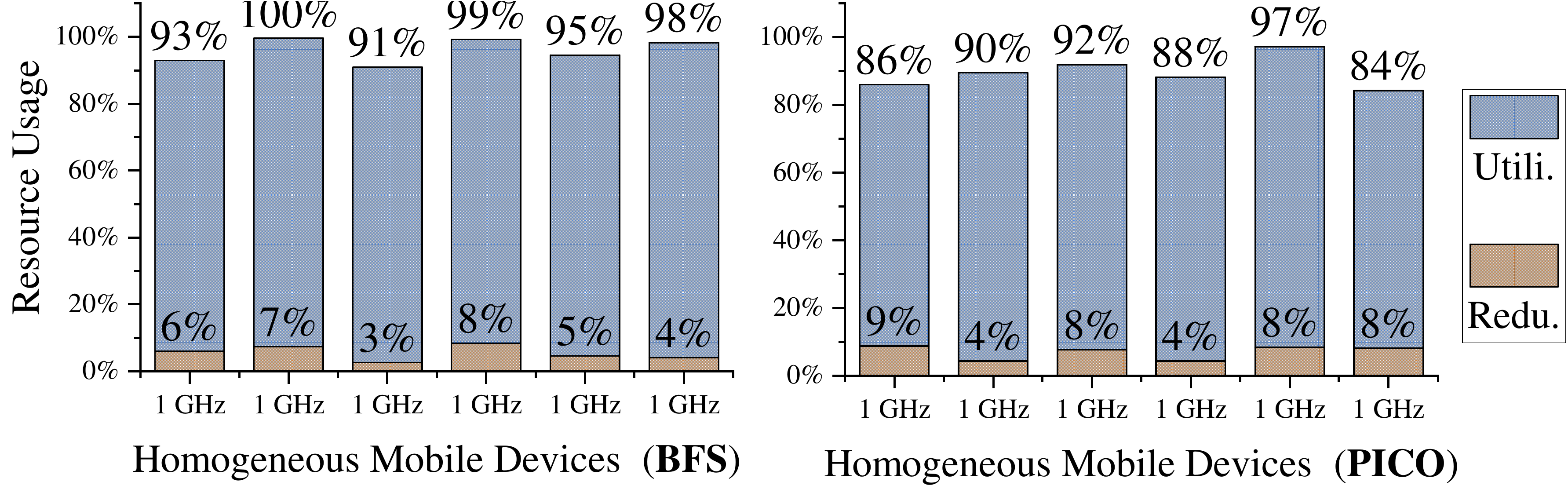}
	\caption{Runtime performance with graph-like CNN.}
	\label{fg:graph-homo}
\end{figure}
\begin{figure}[t]
	\centering
	\includegraphics[width=0.95\linewidth]{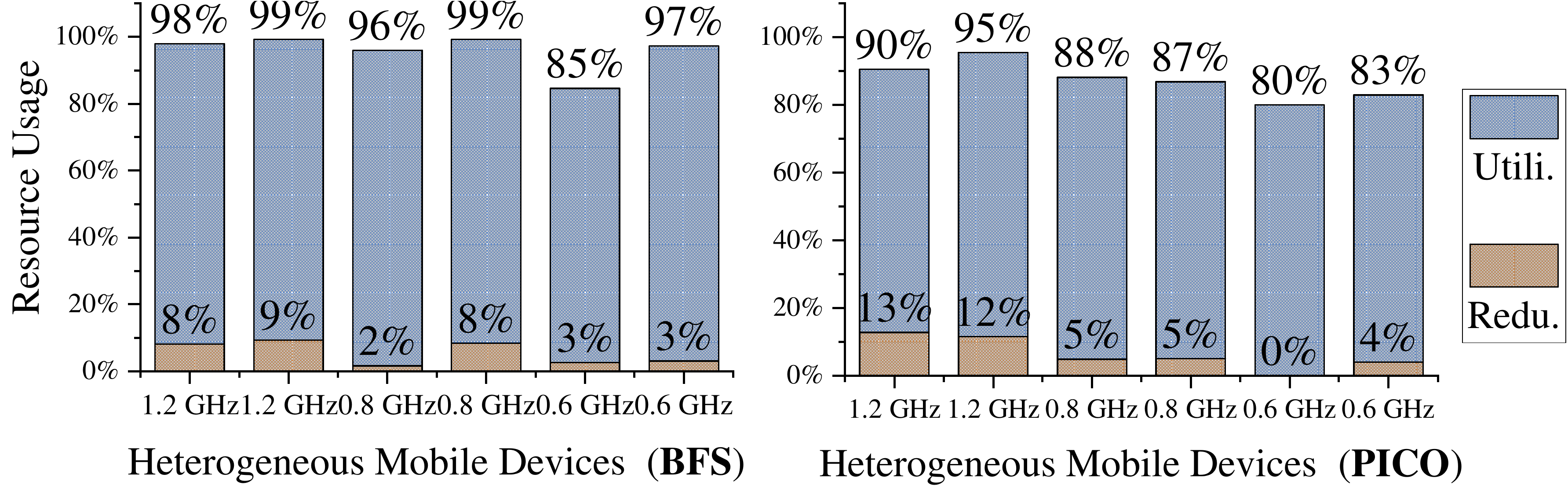}
	\caption{Runtime performance with heterogeneous devices.}
	\label{fg:chain-heter}
\end{figure}

\subsubsection{Optimization Time}
For all the situations listed in Table \ref{tab:graph-homo} and \ref{tab:chain-heter}, PICO could accomplish the optimization within 1 second, But BFS requires much more time to give the output even on small scale problems. The optimization time dramatically grows on larger problems and BFS fails to finish the calculation on both sides.
Moreover, these problems that BFS fails to solve are much easier (either chain-like CNN or homogeneous devices) than those that PICO has solved in the paper. Thus, BFS is not applicable in practice.

Another observation from Table \ref{tab:graph-homo} and \ref{tab:chain-heter} is that the changing of different parameters (branches, layers, devices) has different impacts on the optimization time. When the CNN is a graph and devices are homogeneous, increasing the number of layers has more impact than devices.
Take the Table \ref{tab:graph-homo} as an example, row 3 and row 4 show that the optimization time increases from $11.96$ minutes to $45.24$ minutes when the number of devices increases from $4$ to $6$ ($3.78 \times$).
But row 2 and row 3 show that the optimization time increases from $18.23$ seconds to $11.96$ minutes when the number of layers increases from $8$ to $12$ ($39.36 \times$) through the number of homogeneous devices decreases.
On the contrary, increasing the number of devices has more impact when the CNN is a chain and devices are heterogeneous, as shown in Table \ref{tab:chain-heter}.
The optimization time increases from $1.62$ seconds to $3.84$ seconds ($2.37 \times$) when the number of layers increases from $4$ to $8$ (row 1 and row 2), but it increases from $1,62$ seconds to $4.35$ minutes ($161.11 \times$) when the number of devices increases from $4$ to $6$ (row 2 and row 5).
These two observations reveal the complexity of the many-to-many mapping when the CNN is a graph and devices are heterogeneous from sides.

\subsubsection{Runtime Performance}
We compare the runtime performance of PICO and BFS by plotting the computing resources utilization rate for each device. The result is plotted in Fig. \ref{fg:graph-homo} and \ref{fg:chain-heter}.
We also analyze the redundant computation during inference since high utilization rate does not lead to good performance \cite{zhou2019adaptive}.

Fig. \ref{fg:graph-homo} shows the runtime performance for a graph-like CNN and 6 homogeneous devices (1 GHz CPU frequency). The graph-like CNN used in the comparison contains 3 branches and 12 layers and is also used in row 4, Table \ref{tab:graph-homo}.
The optimal configuration found by BFS achieves 95\% resource utilization rate. Meanwhile, the configuration found by PICO has the similar performance (around 90\%).
The redundant computation of BFS is lower than PICO, but all redundant computation keep at a low level for both BFS and PICO.
The performance for chain-like CNN and heterogeneous devices is plotted in Fig. \ref{fg:chain-heter}. The CNN contains 10 layers and these devices have different computing resources, as shown on the x-axis (1.2 GHz, 0.8 GHz and 0.6 GHz).
Similar to Fig \ref{fg:graph-homo}, the optimal configuration (BFS) achieves great performance on these devices (up to 99\%) except one (85\%). 
As for PICO, the configuration places more workload of the inference to these devices who own rich computing resources, thus the resource utilization of them is similar to BFS (90\% and 95\% for the fastest devices). The average performance of the other devices is around 84.5\%.
Since PICO greatly reduces the computation complexity according to previous analysis, the performance of PICO is acceptable for most real world applications.

\section{Related Work} \label{sec:related}
Along with the problem of enabling DNN-based intelligent applications, previous researches can be divided into two categories.

\subsection{Inference Offloading}
Due to the limited up-link of mobile devices, traditional way of uploading captured data to the cloud server is time-consuming \cite{ko2018edge,li2018jalad}. Researchers focus on offloading the computation of early layers to mobile devices (\textit{Inference offloading}).
To minimize the inference latency, Neurosurgeon \cite{kang2017neurosurgeon} proposed to partition model between cloud server and mobile device according to the network situation. But \cite{kang2017neurosurgeon} can only handle models with the chain structure.
DADS \cite{hu2019dynamic} proposed a novel algorithm to partition DNN with graph structure using a min-cut algorithm. QDMP \cite{zhang2020towards} noticed that directly applying min-cut on the entire graph is time-consuming. Based on the block structure, \cite{zhang2020towards} proposed a divide-and-conquer algorithm to find the min-cut, which achieves a nearly linear complexity in their experiments.
Meanwhile, Branchynet \cite{teerapittayanon2016branchynet} propose \textit{early exit} mechanism by adding exit layers at the midden of DNN. This mechanism enables mobile device not feature map to cloud server if the local accuracy already reaches a certain value. Considering the situation when server does not have the corresponding model,  IoNN \cite{jeong2018ionn} an incremental offloading technology that significantly improves the inference performance.

\subsection{Cooperative Inference}
Recently researchers began to turn their attentions on executing inference completely at the edge with multiple mobile devices \cite{mao2017modnn,mao2017mednn,zhao2018deepthings,zhou2019adaptive,hadidi2020towards,coedge,edgeflow,deepslicing,distredge}.

MoDNN \cite{mao2017modnn} is the first work in this field. MoDNN equally partitions the out feature map for every layer and distributes these feature maps to homogeneous devices. In their following-up work MeDNN \cite{mao2017mednn}, they use an adaptive partition method for the heterogeneous devices. Both MoDNN and MeDNN need a master device to gather the entire output of every device for every layer. CoEdge \cite{coedge} reduces the communication overhead by only sending the overlapped feature map to the neighbors of devices. CoEdge also dynamically adjusts the number of working devices during inference to find the balance between communication and computation. EdgeFlow \cite{edgeflow} introduces a forwarding table to overlap the communication with computation for CNNs with complex structures. The devices can execute one layer and receive the feature map required by other layers at the same time. All these works \cite{mao2017modnn,mao2017mednn,coedge,edgeflow} require devices to communicate with each other for every layer. However, the wireless environment can lead to considerable communication overhead using these works.

Deepthings \cite{zhao2018deepthings} proposed to fuse the layers in the early stage of CNN to avoid communication during inference. But fusing layer increases overlapped feature maps among devices and harms the inference efficiency.
DistrEdge \cite{distredge} trains a deep reinforcement learning model to distribute the inference workload for heterogeneous devices. AOFL \cite{zhou2019adaptive} uses a dynamic programming to find a trade-off between communication and computation. Devices need to synchronize the feature map after several layers using AOFL. DeepSlicing \cite{deepslicing} propose a runtime scheduler to distribute the workload for heterogeneous devices. Both AOFL and DeepSlicing partition the CNN at the block level. Moreover, all these works \cite{zhao2018deepthings,zhou2019adaptive,deepslicing} are at a loss for what to do when meeting some extremely complex CNN \cite{cvpr18:nashnet}. On the contrary, PICO breaks the block into smaller pieces to avoid additional redundant computation.

\section{Conclusion and Further Research} \label{sec:conclusion}

In this paper, we propose a pipeline cooperation scheme (PICO) for efficiently executing inference with versatile CNN models and diverse mobile devices. This scheme improves the inference efficiency by reducing the redundant calculation. We first analyze the problem of partitioning CNNs and mobile devices into an inference pipeline. Using the analysis result, PICO uses a two-step strategy to build the pipeline. First, we orchestrate the graph structure of the given CNN into a sequence of pieces. Then we divide these pieces and devices into several stages. The input data is fed into the first stage and the inference result is produced at the last stage. These stages compose an inference pipeline. We adjust the partition size of features among devices according to their computing resources. The execution time of each stage is optimized to be as close as possible to gain maximum throughput. In our experiment with 8 Raspberry-Pi devices, the throughput can be improved by $1.8 \sim 6.8 \times$ under various settings.

PICO has demonstrated strong performance across a range of heterogeneous clusters by adjusting the partitioned feature size for each device to accommodate varying computation capabilities. However, this approach is limited in addressing device-level imbalances within a given stage and is unable to address imbalances at the stage-level. This can result in failure if the computation capabilities of the devices are extremely varied. To address these challenges, we are actively pursuing the development of a novel algorithm that can better balance the workload across different stages. This is a critical area of focus for our ongoing research efforts.

\appendices

\ifCLASSOPTIONcompsoc

	\section*{Acknowledgments}
\else
	
	\section*{Acknowledgment}
\fi

\ifCLASSOPTIONcaptionsoff
	\newpage
\fi
We gratefully acknowledge the support received for this work from several sources. This includes the National Natural Science Foundation of China (Grants 62071067, 62171057, 62201072), the Ministry of Education and China Mobile Joint Fund (MCM20200202), Beijing University of Posts and Telecommunications-China Mobile Research Institute Joint Innovation Center, and BUPT Excellent Ph.D. Students Foundation (CX2021134). Additionally, we appreciate the funding from the Key-Area Research and Development Program of Guangdong Province (No. 2021B0101400003), Hong Kong RGC Research Impact Fund (No. R5060-19), Areas of Excellence Scheme (AoE/E-601/22-R), General Research Fund (No. 152203/20E, 152244/21E, 152169/22E), and Shenzhen Science and Technology Innovation Commission (JCYJ20200109142008673).

\bibliographystyle{IEEEtran}
\bibliography{reference}

\begin{thebibliography}{10}
\providecommand{\url}[1]{#1}
\csname url@samestyle\endcsname
\providecommand{\newblock}{\relax}
\providecommand{\bibinfo}[2]{#2}
\providecommand{\BIBentrySTDinterwordspacing}{\spaceskip=0pt\relax}
\providecommand{\BIBentryALTinterwordstretchfactor}{4}
\providecommand{\BIBentryALTinterwordspacing}{\spaceskip=\fontdimen2\font plus
\BIBentryALTinterwordstretchfactor\fontdimen3\font minus
  \fontdimen4\font\relax}
\providecommand{\BIBforeignlanguage}[2]{{%
\expandafter\ifx\csname l@#1\endcsname\relax
\typeout{** WARNING: IEEEtran.bst: No hyphenation pattern has been}%
\typeout{** loaded for the language `#1'. Using the pattern for}%
\typeout{** the default language instead.}%
\else
\language=\csname l@#1\endcsname
\fi
#2}}
\providecommand{\BIBdecl}{\relax}
\BIBdecl

\bibitem{kour2014real}
G.~Kour and R.~Saabne, ``Real-time segmentation of on-line handwritten arabic
  script,'' in \emph{Frontiers in Handwriting Recognition (ICFHR)}.\hskip 1em
  plus 0.5em minus 0.4em\relax IEEE, 2014.

\bibitem{kour2014fast}
------, ``Fast classification of handwritten on-line arabic characters,'' in
  \emph{Soft Computing and Pattern Recognition (SoCPaR)}.\hskip 1em plus 0.5em
  minus 0.4em\relax IEEE, 2014.

\bibitem{zhou2019edge}
Z.~Zhou, X.~Chen, E.~Li, L.~Zeng, K.~Luo, and J.~Zhang, ``Edge intelligence:
  Paving the last mile of artificial intelligence with edge computing,''
  \emph{Proceedings of the IEEE}, 2019.

\bibitem{mao2017modnn}
J.~Mao, X.~Chen, K.~W. Nixon, C.~Krieger, and Y.~Chen, ``Modnn: Local
  distributed mobile computing system for deep neural network,'' in
  \emph{Design, Automation \& Test in Europe Conference \& Exhibition
  (DATE)}.\hskip 1em plus 0.5em minus 0.4em\relax IEEE, 2017.

\bibitem{zhao2018deepthings}
Z.~Zhao, K.~M. Barijough, and A.~Gerstlauer, ``Deepthings: Distributed adaptive
  deep learning inference on resource-constrained iot edge clusters,''
  \emph{IEEE Transactions on Computer-Aided Design of Integrated Circuits and
  Systems}, 2018.

\bibitem{zhou2019adaptive}
L.~Zhou, M.~H. Samavatian, A.~Bacha, S.~Majumdar, and R.~Teodorescu, ``Adaptive
  parallel execution of deep neural networks on heterogeneous edge devices,''
  in \emph{Proceedings of the 4th ACM/IEEE Symposium on Edge Computing}, 2019.

\bibitem{li2019oicsr}
J.~Li, Q.~Qi, J.~Wang, C.~Ge, Y.~Li, Z.~Yue, and H.~Sun, ``Oicsr:
  Out-in-channel sparsity regularization for compact deep neural networks,'' in
  \emph{Proceedings of the IEEE Conference on Computer Vision and Pattern
  Recognition}, 2019.

\bibitem{he2019filter}
Y.~He, P.~Liu, Z.~Wang, Z.~Hu, and Y.~Yang, ``Filter pruning via geometric
  median for deep convolutional neural networks acceleration,'' in
  \emph{Proceedings of the IEEE International Conference on Computer Vision},
  2019.

\bibitem{howard2019searching}
A.~Howard, M.~Sandler, G.~Chu, L.-C. Chen, B.~Chen, M.~Tan, W.~Wang, Y.~Zhu,
  R.~Pang, V.~Vasudevan \emph{et~al.}, ``Searching for mobilenetv3,'' in
  \emph{Proceedings of the IEEE International Conference on Computer Vision},
  2019.

\bibitem{resnet}
K.~He, X.~Zhang, S.~Ren, and J.~Sun, ``Deep residual learning for image
  recognition,'' in \emph{IEEE conference on computer vision and pattern
  recognition (CVPR)}, 2016.

\bibitem{szegedy2017inception}
C.~Szegedy, S.~Ioffe, V.~Vanhoucke, and A.~Alemi, ``Inception-v4,
  inception-resnet and the impact of residual connections on learning,'' in
  \emph{AAAI}, 2017.

\bibitem{simonyan2014very}
K.~Simonyan and A.~Zisserman, ``Very deep convolutional networks for
  large-scale image recognition,'' \emph{arXiv preprint arXiv:1409.1556}, 2014.

\bibitem{redmon2017yolo9000}
J.~Redmon and A.~Farhadi, ``Yolo9000: better, faster, stronger,'' in
  \emph{Proceedings of the IEEE conference on computer vision and pattern
  recognition}, 2017.

\bibitem{cvpr18:nashnet}
B.~Zoph, V.~Vasudevan, J.~Shlens, and Q.~V. Le, ``Learning transferable
  architectures for scalable image recognition,'' in \emph{Proceedings of the
  IEEE conference on computer vision and pattern recognition}, 2018, pp.
  8697--8710.

\bibitem{benoit2008mapping}
A.~Benoit and Y.~Robert, ``Mapping pipeline skeletons onto heterogeneous
  platforms,'' \emph{Journal of Parallel and Distributed Computing}, 2008.

\bibitem{benoit2010complexity}
------, ``Complexity results for throughput and latency optimization of
  replicated and data-parallel workflows,'' \emph{Algorithmica}, 2010.

\bibitem{deepslicing}
S.~Zhang, S.~Zhang, Z.~Qian, J.~Wu, Y.~Jin, and S.~Lu, ``Deepslicing:
  collaborative and adaptive cnn inference with low latency,'' \emph{IEEE
  Transactions on Parallel and Distributed Systems}, vol.~32, no.~9, pp.
  2175--2187, 2021.

\bibitem{stcuts2010}
P.~Bonsma, ``Most balanced minimum cuts,'' \emph{Discrete Applied Mathematics},
  vol. 158, no.~4, pp. 261--276, 2010.

\bibitem{pipedream}
A.~Harlap, D.~Narayanan, A.~Phanishayee, V.~Seshadri, G.~R. Ganger, and P.~B.
  Gibbons, ``Pipedream: Pipeline parallelism for dnn training,'' in
  \emph{Proceedings of the 1st Conference on Systems and Machine Learning
  (SysML)}, 2018.

\bibitem{gloo}
``Gloo: Collective communications library with various primitives for
  multi-machine training,'' \url{https://github.com/facebookincubator/gloo}.

\bibitem{stockhammer2011dynamic}
T.~Stockhammer, ``Dynamic adaptive streaming over http-- standards and design
  principles,'' in \emph{Proceedings of the second annual ACM conference on
  Multimedia systems}, 2011, pp. 133--144.

\bibitem{coedge}
L.~Zeng, X.~Chen, Z.~Zhou, L.~Yang, and J.~Zhang, ``Coedge: Cooperative dnn
  inference with adaptive workload partitioning over heterogeneous edge
  devices,'' \emph{IEEE/ACM Transactions on Networking}, vol.~29, no.~2, pp.
  595--608, 2020.

\bibitem{zoph2018learning}
B.~Zoph, V.~Vasudevan, J.~Shlens, and Q.~V. Le, ``Learning transferable
  architectures for scalable image recognition,'' in \emph{Proceedings of the
  IEEE conference on computer vision and pattern recognition}, 2018, pp.
  8697--8710.

\bibitem{ko2018edge}
J.~H. Ko, T.~Na, M.~F. Amir, and S.~Mukhopadhyay, ``Edge-host partitioning of
  deep neural networks with feature space encoding for resource-constrained
  internet-of-things platforms,'' in \emph{IEEE International Conference on
  Advanced Video and Signal Based Surveillance (AVSS)}.\hskip 1em plus 0.5em
  minus 0.4em\relax IEEE, 2018.

\bibitem{li2018jalad}
H.~Li, C.~Hu, J.~Jiang, Z.~Wang, Y.~Wen, and W.~Zhu, ``Jalad: Joint
  accuracy-and latency-aware deep structure decoupling for edge-cloud
  execution,'' in \emph{International Conference on Parallel and Distributed
  Systems (ICPADS)}, 2018.

\bibitem{kang2017neurosurgeon}
Y.~Kang, J.~Hauswald, C.~Gao, A.~Rovinski, T.~Mudge, J.~Mars, and L.~Tang,
  ``Neurosurgeon: Collaborative intelligence between the cloud and mobile
  edge,'' \emph{ACM SIGARCH Computer Architecture News}, 2017.

\bibitem{hu2019dynamic}
C.~Hu, W.~Bao, D.~Wang, and F.~Liu, ``Dynamic adaptive dnn surgery for
  inference acceleration on the edge,'' in \emph{IEEE International Conference
  on Computer Communications (INFOCOM)}.\hskip 1em plus 0.5em minus 0.4em\relax
  IEEE.

\bibitem{zhang2020towards}
S.~Zhang, Y.~Li, X.~Liu, S.~Guo, W.~Wang, J.~Wang, B.~Ding, and D.~Wu,
  ``Towards real-time cooperative deep inference over the cloud and edge end
  devices,'' \emph{Proceedings of the ACM on Interactive, Mobile, Wearable and
  Ubiquitous Technologies}, 2020.

\bibitem{teerapittayanon2016branchynet}
S.~Teerapittayanon, B.~McDanel, and H.-T. Kung, ``Branchynet: Fast inference
  via early exiting from deep neural networks,'' in \emph{International
  Conference on Pattern Recognition (ICPR)}, 2016.

\bibitem{jeong2018ionn}
H.-J. Jeong, H.-J. Lee, C.~H. Shin, and S.-M. Moon, ``Ionn: Incremental
  offloading of neural network computations from mobile devices to edge
  servers,'' in \emph{ACM Symposium on Cloud Computing}, 2018.

\bibitem{mao2017mednn}
J.~Mao, Z.~Yang, W.~Wen, C.~Wu, L.~Song, K.~W. Nixon, X.~Chen, H.~Li, and
  Y.~Chen, ``Mednn: A distributed mobile system with enhanced partition and
  deployment for large-scale dnns,'' in \emph{IEEE/ACM International Conference
  on Computer-Aided Design (ICCAD)}, 2017.

\bibitem{hadidi2020towards}
R.~Hadidi, J.~Cao, M.~S. Ryoo, and H.~Kim, ``Towards collaborative inferencing
  of deep neural networks on internet of things devices,'' \emph{IEEE Internet
  of Things Journal}, 2020.

\bibitem{edgeflow}
C.~Hu and B.~Li, ``Distributed inference with deep learning models across
  heterogeneous edge devices,'' in \emph{IEEE INFOCOM 2022-IEEE Conference on
  Computer Communications}.\hskip 1em plus 0.5em minus 0.4em\relax IEEE, 2022,
  pp. 330--339.

\bibitem{distredge}
X.~Hou, Y.~Guan, T.~Han, and N.~Zhang, ``Distredge: Speeding up convolutional
  neural network inference on distributed edge devices,'' in \emph{2022 IEEE
  International Parallel and Distributed Processing Symposium (IPDPS)}.\hskip
  1em plus 0.5em minus 0.4em\relax IEEE, 2022, pp. 1097--1107.

\end{thebibliography}

\vspace{-5em}
\begin{IEEEbiography}[
		{\includegraphics[width=1in,height=1.25in,clip,keepaspectratio]{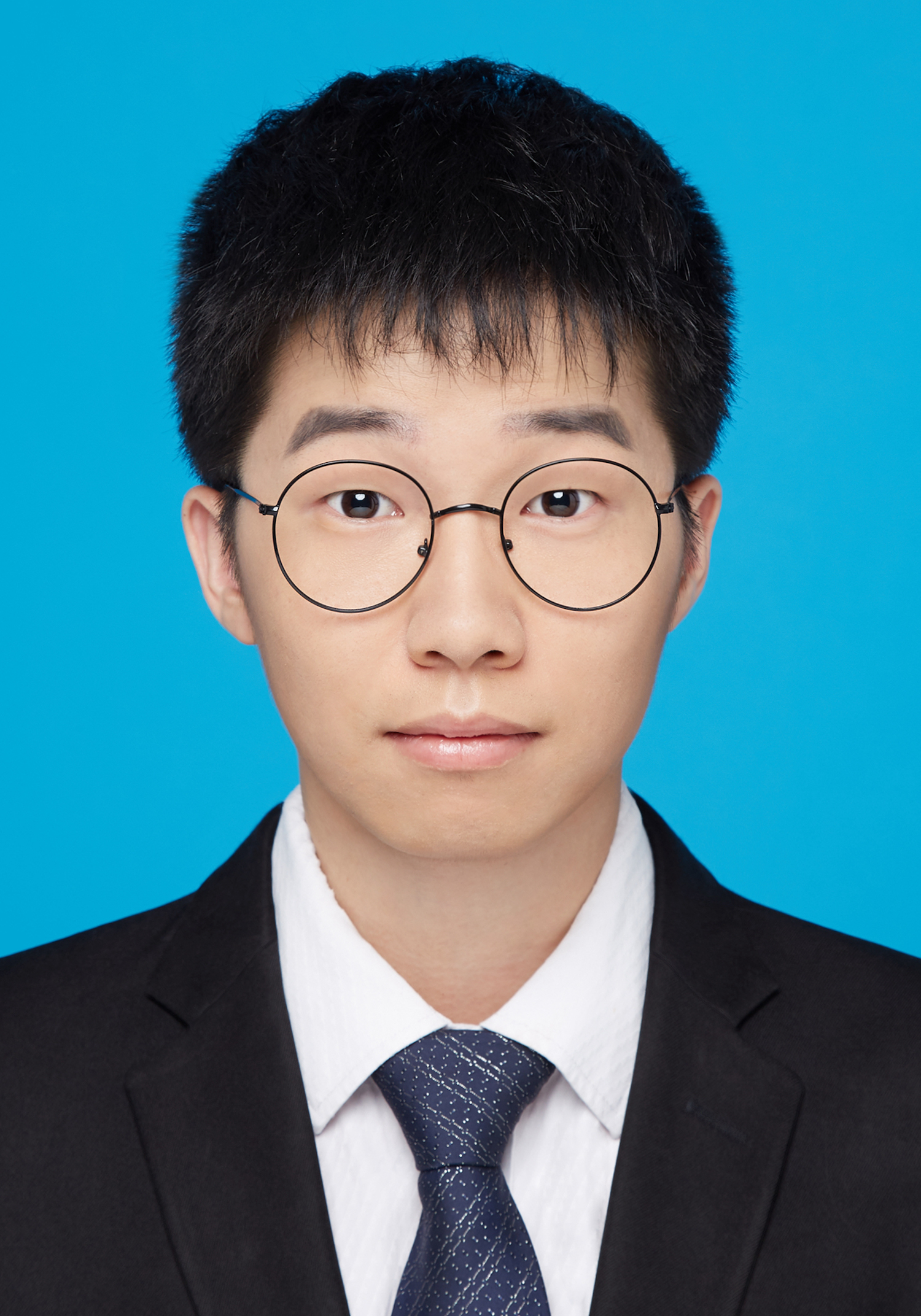}}]{Xiang Yang} received the B.E. degree in computer science and technology from Beijing University of Posts and Telecommunications, Beijing, China, in 2019. He is currently a PhD candidate of State Key Laboratory of Networking and Switching Technology at Beijing University of Posts and Telecommunications. His research interests span broad aspects of machine learning, distributed computing, edge/cloud computing and deep learning.
\end{IEEEbiography}
\vspace{-5em}

\begin{IEEEbiography}[
		{\includegraphics[width=1in,height=1.25in,clip,keepaspectratio]{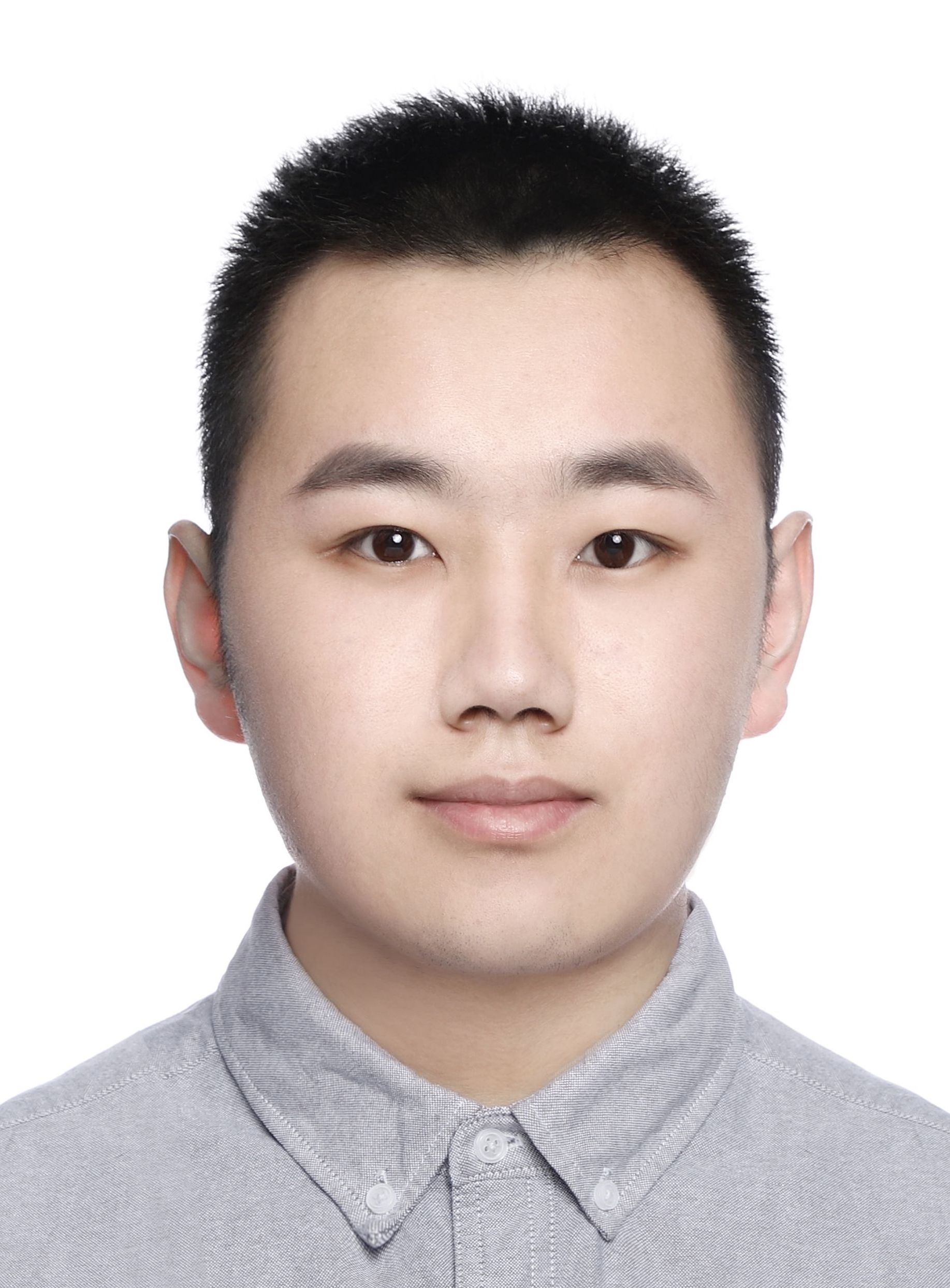}}]{Zikang Xu} is an undergraduate student majoring in Computer Science and Technology at Beijing University of Posts and Telecommunications. He obtained a postgraduate recommendation of State Key Laboratory of Networking and Switching Technology at Beijing University of Posts and Telecommunications. His research interests span broad aspects of machine learning, edge/cloud computing and distributed computing.
\end{IEEEbiography}

\vspace{-2em}

\begin{IEEEbiography}[
		{\includegraphics[width=1in,height=1.25in,clip,keepaspectratio]{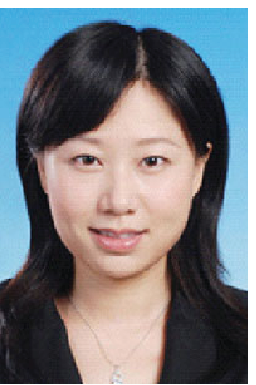}}]{Qi Qi} obtained her PhD degree from Beijing University of Posts and Telecommunications in 2010. Now, she is an associate professor of State Key Laboratory of Networking and Switching Technology at Beijing University of Posts and Telecommunications. She has published more than 30 papers in international journal, and obtained two National Natural Science Foundations of China. Her research interests include edge computing, mobile cloud computing, Internet of Things, ubiquitous services, deep learning, and deep reinforcement learning.
\end{IEEEbiography}
\vspace{-2em}

\begin{IEEEbiography}[
		{\includegraphics[width=1in,height=1.25in,clip,keepaspectratio]{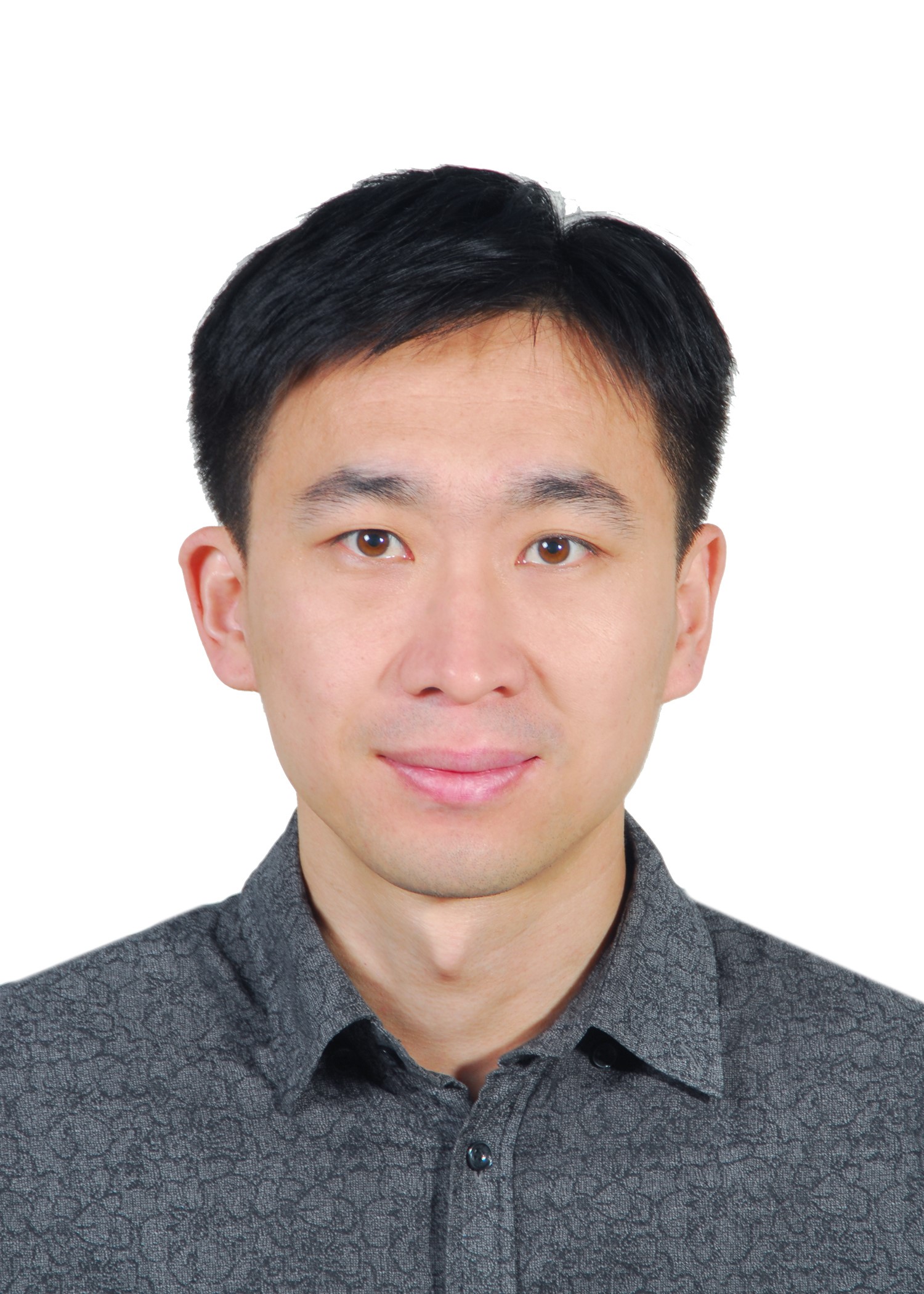}}]{Jingyu Wang} obtained his PhD degree from Beijing University of Posts and Telecommunications in 2008. He is currently a professor of State Key Laboratory of Networking and Switching Technology at Beijing University of Posts and Telecommunications. He has published more than 100 papers in international journal, including IEEE CMag, TVT, ISJ, TSC, TMM, TCC, IoT, TWC, and so on. His research interests span broad aspects of SDN/NFV, edge/cloud computing, IoV/IoT, big data processing and transmission, intelligent networks, and traffic engineering.
\end{IEEEbiography}
\vspace{-2em}

\begin{IEEEbiography}[{\includegraphics[width=1in,height=1.25in,clip,keepaspectratio]{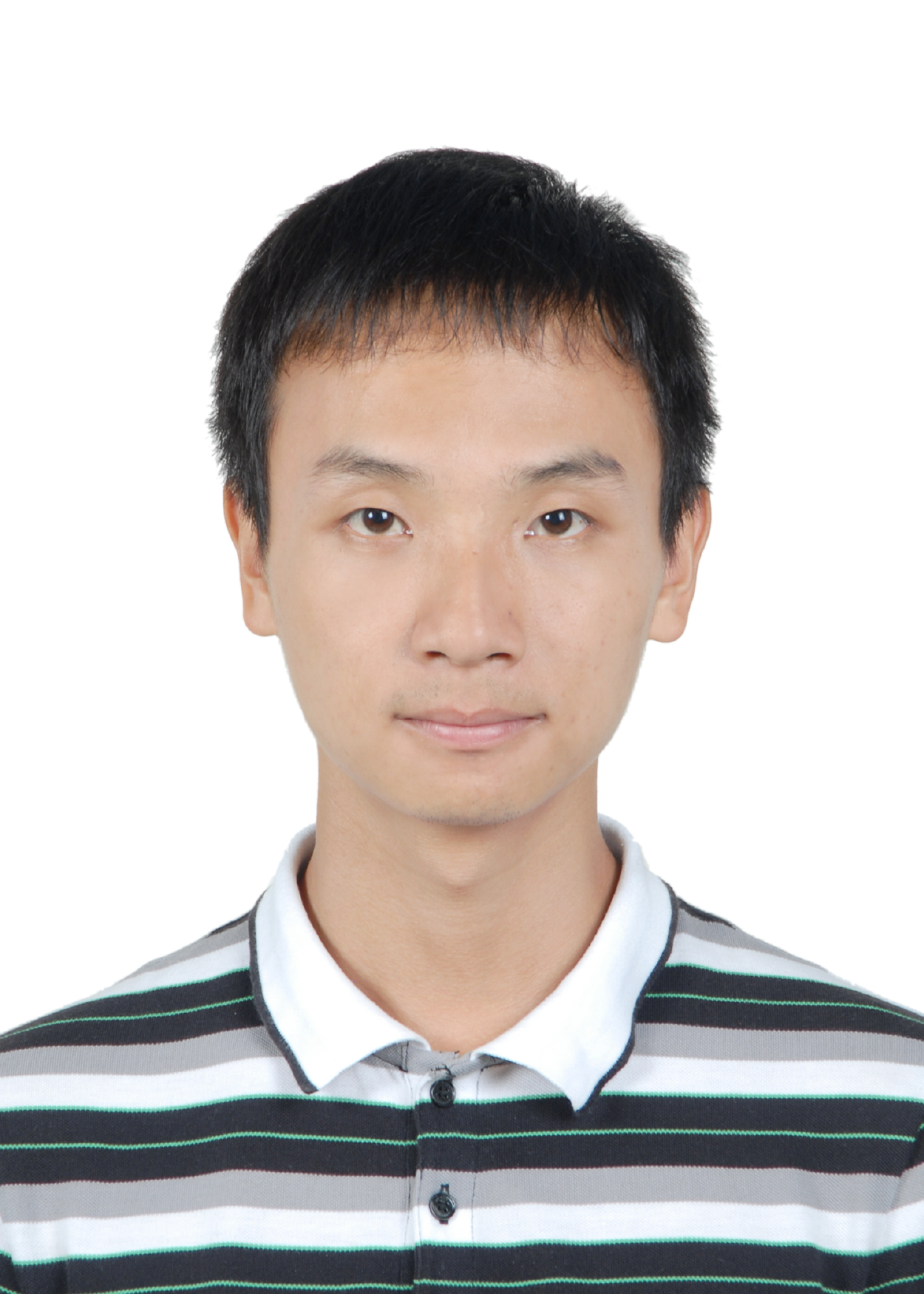}}]{Haifeng Sun}
	obtained his PhD degree from Beijing University of Posts and Telecommunications in 2017. He is currently a lecture of State Key Laboratory of Networking and Switching Technology at Beijing University of Posts and Telecommunications. His research interests span broad aspects of AI, NLP, big data analysis, object detection, deep learning, deep reinforcement learning, SDN, processing.
\end{IEEEbiography}
\vspace{-2em}

\begin{IEEEbiography}[{\includegraphics[width=1in,height=1.25in,clip,keepaspectratio]{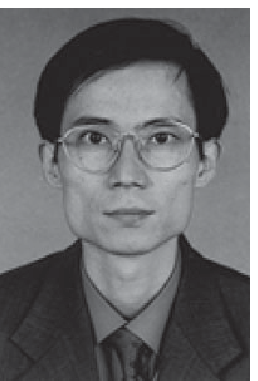}}]{Jianxin Liao}
	obtained his Ph.D degree at University of Electronics Science and Technology of China in 1996. He is currently the dean of Network Intelligence Research Center and the full professor of State Key laboratory of Networking and Switching Technology in Beijing University of Posts and Telecommunications. He has published hundreds of research papers and several books. His main research interests include cloud computing, mobile intelligent network, service network intelligent, networking architectures and protocols, and multimedia communication.
\end{IEEEbiography}

\vspace{-2em}

\begin{IEEEbiography}[{\includegraphics[width=1in,height=1.25in,clip,keepaspectratio]{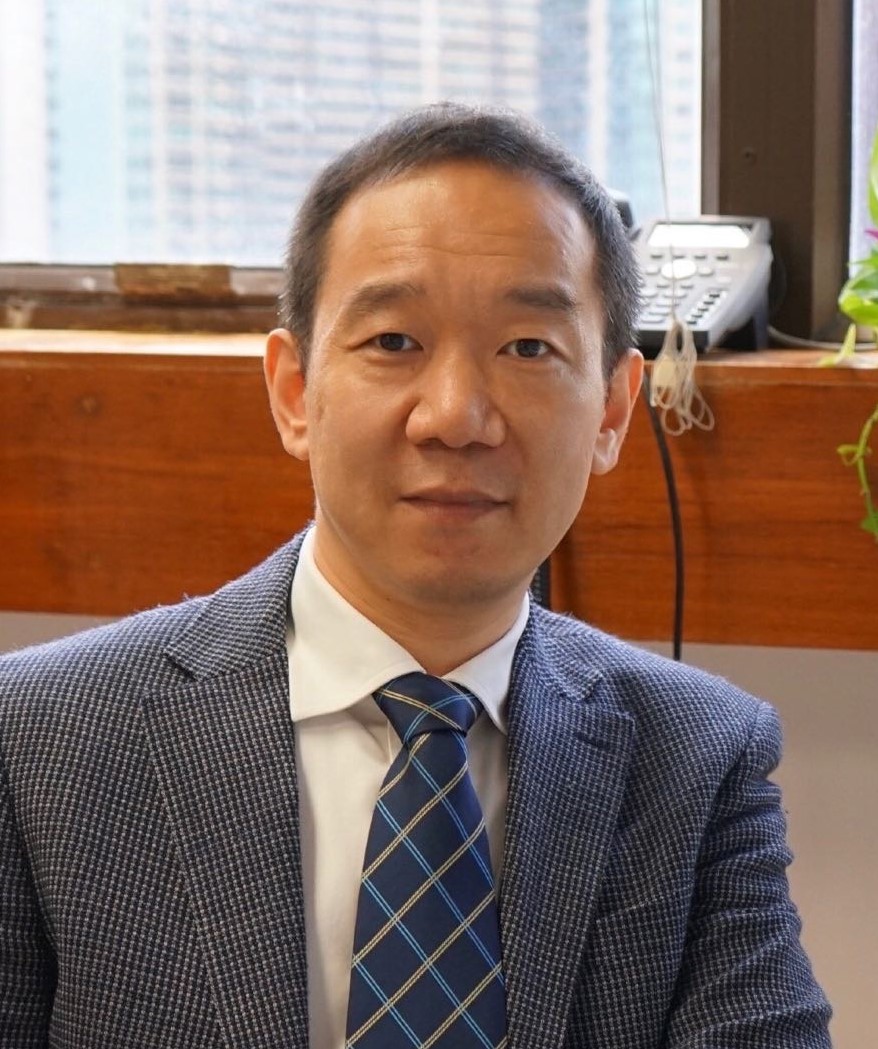}}]{Song Guo} is a Full Professor and Associate Head (Research \& Development) in the Department of Computing at The Hong Kong Polytechnic University. His research interests are mainly in edge AI, big data and machine learning, mobile computing, and distributed systems. He has served on IEEE Fellow Evaluation Committees for both CS and ComSoc, and been named on editorial board of a number of prestigious international journals like IEEE TC, IEEE TPDS, IEEE TCC, IEEE TETC, ACM CSUR, etc.
\end{IEEEbiography}

\end{document}